\documentclass[11pt,hyp,]{nyjm}
\usepackage{hyperref}
\hypersetup{nesting=true,debug=true,naturalnames=true}
\usepackage{graphicx,amssymb,upref}

\hyphenation{Lem-ma}
\hyphenation{Toe-plitz}
\hyphenation{co-u-nit}

\let\<\langle
\let\>\rangle

\let\uml\"


\title{Bures Contractive Channels on Operator Algebras} 

\author{Douglas Farenick}  
\address{Department of Mathematics \& Statistics, University of Regina, Regina, Saskatchewan S4S 0A2, Canada} 
\email{douglas.farenick@uregina.ca}  
\thanks{Supported in part by an NSERC Discovery Grant} 

\author{Mizanur Rahaman}  
\address{Department of Mathematics \& Statistics, University of Regina, Regina, Saskatchewan S4S 0A2, Canada} 
\email{mizanur1@gmail.com}  
\thanks{Supported in part by a University of Regina Graduate Research Fellowship} 

\keywords{C$^*$-algebra, von Neumann algebra, faithful trace, positive linear map, completely positive linear map, 
quantum channel, Bures metric, fidelity, irreducible positive linear map, multiplicative domain, Schwarz map}

\subjclass[2010]{Primary 46L05; Secondary 46L60, 81R15}

%
\newtheorem{definition}{Definition}[section]
\newtheorem{theorem}[definition]{Theorem}
\newtheorem{lemma}[definition]{Lemma}
\newtheorem{corollary}[definition]{Corollary}
\newtheorem{proposition}[definition]{Proposition}
\theoremstyle{definition}

\newtheorem{defn}[definition]{Definition}
\newtheorem{example}[definition]{Example}

\newcommand\style{\mathsf }


\newcommand{\B}{\style{B}}
\newcommand{\M}{\style{M}}

\newcommand\A{{\style A}}
\renewcommand{\H}{\style{H}}

\newcommand{\N}{\style{N}}

\newcommand{\Ep}{\mathcal{E}}





\newcommand{\lgG}{\rm{G}}        
\newcommand{\lgGL}{{\rm{GL}}}        







 
\newcommand\tr{ \operatorname{Tr} } 


\begin{document} 
 
\begin{abstract}  
In a unital C$^*$-algebra with a faithful trace functional $\tau$, the set $\mathcal D_\tau(\A)$ of positive $\rho\in\A$ of trace
$\tau(\rho)=1$ is an algebraic analogue of the space of density matrices (the set of all positive matrices of a fixed dimension
of unit trace). Motivated by the literature concerning the metric properties of the space of density matrices, the
present paper studies the density space $\mathcal D_\tau(\A)$ in terms of the Bures metric. Linear maps on $\A$
that map $\mathcal D_\tau(\A)$ back into itself are positive and trace preserving; hence, they may be viewed as
an algebraic analogue of a quantum channel, which are studied intensely in the literature on quantum computing
and quantum information theory. 

The main results in this paper are: (i) to establish that the Bures metric is indeed a metric; (ii) to prove that channels
induce nonexpansive maps of the density space $\mathcal D_\tau(\A)$; (iii) to introduce and study channels on $\A$
that are locally contractive maps (which we call \emph{Bures contractions})
on the metric space $\mathcal D_\tau(\A)$; and (iv) to analyse Bures contractions from the point of view of the Frobenius theory
of cone preserving linear maps.

Although the focus is on unital C$^*$-algebras, an important class of examples is furnished by finite von Neumann algebras. 
Indeed, several of the C$^*$-algebra results are established by first proving them for finite von Neumann algebras and then
proving them for C$^*$-algebras by embedding a C$^*$-algebra $\A$ into its enveloping von Neumann algebra $\A^{**}$.
\end{abstract} 
\maketitle
\tableofcontents

\section{Introduction}

\subsection{Motivation}

Density matrices, which are positive semidefinite matrices $\rho$ of a fixed dimension of trace $\tr(\rho)=1$, are an essential feature
of mathematical physics and quantum theory. The monograph of Bengtsson and {\.Z}yczkowski \cite{Bengtsson--Zyczkowski-book},
for example, is devoted to the geometry of spaces of density matrices and to applications to quantum entanglement. Certain applications
require the use of density operators (positive trace-class operators of unit trace) acting on an infinite-dimensional separable Hilbert space,
but the theory is slightly different in this context because the identity operator, in contrast to the matrix case, has infinite trace.
Another direction where commonalities with the matrix setting might be expected is with unital C$^*$-algebras that possess a
faithful trace functional in which the identity of the algebra has finite trace. This occurs, for example, 
with a large class of interesting separable C$^*$-algebras, and with finite von Neumann algebras. 
In this paper we consider both types of operator algebras. Specifically, if $\A$ is an operator algebra (that is, a C$^*$- or von Neumann algebra)
possessing a faithful (finite) trace functional $\tau$, then the set $\mathcal D_\tau(\A)$ of all positive elements $\rho\in\A$
with trace $\tau(\rho)=1$ is a norm-closed convex set called the \emph{$\tau$-density space} of $\A$.

There are several metrics of interest on spaces of density matrices. Our interest in the present paper is with the \emph{Bures metric},
which originates in a paper of Bures \cite{bures1969} and was later adapted to finite-dimensional Hilbert space by
Uhlmann \cite{uhlmann1976} and Jozsa \cite{josza1994}. Expository works on the Bures metric for density matrices 
are given in the monographs \cite{Bengtsson--Zyczkowski-book,Hayashi-book}, for example.

\begin{defn} If $\tau$ is a faithful tracial linear functional on a unital C$^*$-algebra $\A$, and if
$\mathcal D_\tau(\A)=\{\rho\in\A\,:\,\rho\mbox{ is positive and }\tau(\rho)=1\}$, then 
the \emph{Bures distance} $d_B^\tau(\sigma,\rho)$ between two elements $\sigma,\rho\in \mathcal D_\tau(\A)$ is defined to be
\[
d_B^\tau(\sigma,\rho)= \sqrt{1-\tau(|\sigma^{1/2}\rho^{1/2}|)}.  
\]
The quantity $\tau(|\sigma^{1/2}\rho^{1/2}|)$ is called the \emph{fidelity} of $\sigma$ and $\rho$  and
is denoted by $F_\tau(\sigma,\rho)$. 
\end{defn}

If $\tau$ is the canonical trace on the algebra $\M_d(\mathbb C)$ of
$d\times d$ complex matrices, then the Bures distance is well known to be
a metric on the space of $d\times d$ density matrices.
However, for the case of an arbitrary operator algebra $\A$, additional work must be carried out to show that
$d_B^\tau$ is a metric on  $\mathcal D_\tau(\A)$ (see Section \S\ref{bm}).

If $\Ep:\A\rightarrow\A$ is a linear transformation that maps $\mathcal D_\tau(\A)$ back into itself, then $\Ep$ is
necessarily positive and trace preserving; we will call such maps \emph{channels},
again motivated by the literature in mathematical physics. One point of departure from the mainstream literature, however,
is that we do not require our channels to be completely positive, which is very often the basic assumption in quantum 
information theory. In this we regard we are guided by the monograph of St{\o}rmer \cite{Stormer-book}, which demonstrates that
complete positivity is often an unnecessarily strong assumption and that one need only assume that the positive linear
maps in question satisfy the (generalised) Schwarz inequality. Indeed, our results in this paper are most often phrased for
maps of this Schwarz type.

Every channel $\Ep:\A\rightarrow\A$ induces a continuous affine function $f_\Ep$ on the metric space
$\left( \mathcal D_\tau(\A), d_B^\tau\right)$
defined by $f_\Ep(\rho)=\Ep(\rho)$. It so happens that the affine functions $f_\Ep$ are nonexpansive maps -- that is, 
$f_\Ep$ does not increase the Bures distance between pairs of density elements. 
In an earlier paper \cite{farenick--jaques--rahaman2016}, we studied the structure of those channels $\Ep:\A\rightarrow\A$ 
that induce isometric maps of the metric space $\left(\mathcal D_\tau(\A),d_B^\tau \right)$.
In the present paper we consider the opposite type of behaviour in that we study channels $\Ep$, called \emph{Bures contractions} herein,
for which $d_B^\tau(\Ep(\sigma),\Ep(\rho))<d_B^\tau(\sigma,\rho)$ for all pairs of distinct $\sigma,\rho\in \mathcal D_\tau(\A)$.  
Our interest is with the spectral properties of Bures contractive channels (from the point of view of Perron-Frobenius theory), and certain
mapping features of these channels, including multiplicative domains, fixed points, invariant faces of the positive cone $\A_+$ of $\A$, and inverses.  

In a much earlier paper \cite{raginsky2002}, M.~Raginsky studied contractive channels with respect to the trace-norm metric $d_1^\tau(\sigma,\rho)=\tau(|\sigma-\rho|)$
on the density-matrix space.
Although a number of our results are inspired by his work, there are nevertheless some notable differences: (i) we consider what are called locally contractive
linearly induced affine maps of the metric space $\left(\mathcal D_\tau(\A),d_B^\tau \right)$, whereas Raginsky is concerned with strictly contractive 
linearly induced affine maps of
$\left(\mathcal D_\tau(\A),d_1^\tau \right)$; and (ii) we consider general tracial operator algebras, whereas Raginsky is motivated by issues in quantum
computation and therefore is concerned only with matrices. The metrics $d_B^\tau$ and $d_1^\tau$ induce the same topologies on $\mathcal D_\tau(\A)$ (Proposition \ref{top equiv}),
but the analysis of contractions differs in light of the inequality $\sqrt{2}d_B^\tau(\sigma,\rho)\leq d_1^\tau(\sigma,\rho)$, for all $\sigma,\rho\in \mathcal D_\tau(\A)$
(Proposition \ref{fuchs}). We refer the reader to \cite{jian--he--yuan--wang2013} for a study of isometric channels with respect to the trace norm in the setting of finite-dimensional Hilbert space. 

\subsection{Definitions, assumptions, and notation}
The following definitions are standard, but are stated here explicitly for clarity.

\begin{definition}
A linear map $\Phi:\A\rightarrow\A$ of a unital C$^*$-algebra $\A$ is:
\begin{enumerate}
\item \emph{unital}, if $\Phi(1)=1$ (the multiplicative identity of $\A$);
\item \emph{positive}, if $\Phi(\A_+)\subseteq\A_+$;
\item \emph{$k$-positive}, if $\Phi^{(k)}:\M_k(\A)\rightarrow\M_k(\A)$ is a positive linear map, 
where $\Phi^{(k)}\left([a_{ij}]_{i,j}\right)=[\Phi(a_{ij})]_{i,j}$ for all 
$k\times k$ matrices $[a_{ij}]_{i,j}$ with entries $a_{ij}\in\A$;
\item \emph{completely positive}, if $\Phi$ is $k$-positive for every $k\in\mathbb N$; 
\item a \emph{Schwarz map}, if $\Phi(x)^*\Phi(x)\leq \Phi(x^*x)$, for every $x\in\A$.
\end{enumerate}
\end{definition}

Note that Schwarz maps are necessarily positive. While it is true that every
unital $2$-positive linear map is a Schwarz map, there do exist unital Schwarz maps that fail to be 
$2$-positive \cite{choi1980b}.

\begin{definition}[Channels in the C$^*$-algebra Category]
A linear map $\mathcal E:\A\rightarrow\A$ of a unital C$^*$-algebra $\A$ with a distinguished faithful trace functional $\tau$ is:
\begin{enumerate}
\item \emph{trace-preserving}, if $\tau\circ\mathcal E=\tau$;
\item a \emph{channel}, if $\mathcal E$ is trace-preserving and positive;
\item a \emph{$k$-positive channel}, if $\mathcal E$ is trace-preserving and $k$-positive;
\item a \emph{completely positive channel}, if $\mathcal E$ is trace-preserving and completely positive;
\item a \emph{Schwarz channel}, if $\mathcal E$ is a trace-preserving Schwarz map.
\end{enumerate}
\end{definition}

\begin{definition}[Channels in the Finite von Neumann algebra Category]
A positive linear map $\mathcal E:\N\rightarrow\N$ of a (finite) von Neumann algebra $\N$ with a distinguished normal faithful trace functional $\tau$ is:
\begin{enumerate}
\item \emph{normal}, if $\Ep\left(\sup_\lambda h_\lambda\right)=\sup_\lambda\Ep(h_\lambda)$, for every bounded-above increasing net
$\{h_\lambda\}_\lambda$ of selfadjoint operators $h_\lambda\in\N$;
\item a \emph{channel}, if $\mathcal E$ is trace-preserving and normal.
\end{enumerate}
\end{definition}

One natural occurrence of normal traces and channels
is as follows. If $\tau$ is a faithful trace functional on a unital C$^*$-algebra $\A$, then the bidual $\tau^{**}$ of $\tau$ is
a faithful normal trace on the enveloping von Neumann algebra (i.e., on the bidual) $\A^{**}$ of $\A$. Moreover, if $\Ep$ is
a channel on $(\A,\tau)$, then $\Ep^{**}$ is a channel on $(\A^{**},\tau^{**})$; indeed, $\Ep^{**}$
is a Schwarz channel if $\Ep$ is a Schwarz channel \cite[Lemma 3]{gardner1979a} .

To avoid repetitious statements of hypotheses, the following notational assumptions 
are made for the remainder
of the paper:

\begin{itemize}
\item $\A$ denotes a unital C$^*$-algebra, and $\A_+$ denotes the cone of positive elements of $\A$;
\item $\N$ denotes a finite von Neumann algebra;
\item $\tau$ denotes a fixed faithful trace on a unital C$^*$-algebra $\A$ or a fixed faithful normal trace on a finite von Neumann algebra $\N$;
\item $\lgGL(\A)_+$ denotes $\lgGL(\A)\cap\A_+$, the set of positive invertible elements.
\end{itemize}

 We record the following basic facts about positive linear maps for future reference.

\begin{proposition}\label{basic positive facts} Assume that $\mathcal E:\A\rightarrow\A$ is a linear transformation.
\begin{enumerate}
\item {\rm (Russo-Dye)} If $\Ep$ is positive, then $\Ep$ is continuous and $\|\Ep\|=\|\Ep(1)\|$.
\item $\Ep$ maps $\mathcal D_\tau(\A) $ back into itself if and only if $\Ep$ is a channel.
\item If $\Ep$ is a Schwarz channel, then $\Ep(1)=1$.
\end{enumerate}
\end{proposition}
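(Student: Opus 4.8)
The three assertions are standard, and the plan is to dispatch them one at a time; the only input that is not a one-line verification is a classical boundedness theorem used in part~(1).

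\emph{Part (1).} Since $\|1\|=1$, the inequality $\|\Ep\|\ge\|\Ep(1)\|$ is automatic, so the content is continuity together with the reverse inequality. First I would record the elementary estimate: if $h=h^*\in\A$ with $\|h\|\le 1$, then $-1\le h\le 1$, hence $-\Ep(1)\le\Ep(h)\le\Ep(1)$ by positivity, and therefore $\|\Ep(h)\|\le\|\Ep(1)\|$. Writing an arbitrary contraction $x\in\A$ as $x=\operatorname{Re}x+i\operatorname{Im}x$ with $\operatorname{Re}x$ and $\operatorname{Im}x$ selfadjoint of norm at most $1$ then gives $\|\Ep(x)\|\le 2\|\Ep(1)\|$, so $\Ep$ is bounded, hence continuous. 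The sharp identity $\|\Ep\|=\|\Ep(1)\|$ is precisely the Russo--Dye theorem on positive linear maps of unital C$^*$-algebras, which I would invoke without proof.

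\emph{Part (2).} One implication is immediate: if $\Ep$ is a channel and $\rho\in\mathcal D_\tau(\A)$, then $\Ep(\rho)\ge 0$ by positivity and $\tau(\Ep(\rho))=\tau(\rho)=1$ by trace preservation, so $\Ep(\rho)\in\mathcal D_\tau(\A)$. For the converse, assume $\Ep(\mathcal D_\tau(\A))\subseteq\mathcal D_\tau(\A)$. The device is to normalise: since $\tau$ is faithful, every nonzero $a\in\A_+$ has $\tau(a)>0$, and $\rho_a:=\tau(a)^{-1}a\in\mathcal D_\tau(\A)$. Applying the hypothesis to $\rho_a$ and using linearity, $\Ep(a)=\tau(a)\,\Ep(\rho_a)$ is a nonnegative multiple of an element of $\mathcal D_\tau(\A)\subseteq\A_+$, so $\Ep(a)\ge 0$; moreover $\tau(\Ep(a))=\tau(a)\,\tau(\Ep(\rho_a))=\tau(a)$. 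Thus $\Ep$ is positive on $\A_+$ and $\tau\circ\Ep$ agrees with $\tau$ on $\A_+$; since $\A_+$ spans $\A$, linearity upgrades both statements to all of $\A$, so $\Ep$ is a channel.

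\emph{Part (3).} Let $\Ep$ be a Schwarz channel and put $b:=\Ep(1)$, which is positive because Schwarz maps are positive. The Schwarz inequality at $x=1$ reads $b^*b\le\Ep(1)=b$, that is, $b^2\le b$ since $b=b^*$. By the continuous functional calculus, $b\ge 0$ together with $b^2\le b$ forces $\sigma(b)\subseteq[0,1]$, hence $0\le b\le 1$ and in particular $1-b\ge 0$. On the other hand, trace preservation and finiteness of $\tau$ give $\tau(1-b)=\tau(1)-\tau(\Ep(1))=0$. Since $\tau$ is faithful and $1-b$ is positive, $1-b=0$; that is, $\Ep(1)=1$. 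No step here is genuinely hard; the only external ingredient is the sharp constant in~(1) (Russo--Dye), and the two points that deserve a moment's care are the use of faithfulness of $\tau$ to rescale positive elements into the density space in~(2), and the use of finiteness of $\tau$ to promote the operator inequality $\Ep(1)\le 1$ to the equality $\Ep(1)=1$ in~(3).
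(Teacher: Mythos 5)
Your proposal is correct, and for parts (1) and (2) it follows the paper's route exactly: Russo--Dye is cited for the sharp norm identity, and the channel characterisation is a direct verification using faithfulness of $\tau$ to rescale positive elements into $\mathcal D_\tau(\A)$. The only (minor) divergence is in part (3): the paper obtains $\mathcal E(1)\leq 1$ by combining the Schwarz inequality for general $x$ with $\|\mathcal E\|=\|\mathcal E(1)\|$ to force $\|\mathcal E\|\leq 1$, whereas you apply the Schwarz inequality only at $x=1$ and read off $\sigma(\mathcal E(1))\subseteq[0,1]$ from $\mathcal E(1)^2\leq\mathcal E(1)$ by functional calculus; this is slightly more self-contained since it does not reuse Russo--Dye, and both arguments finish identically by applying the finite faithful trace to $1-\mathcal E(1)\geq 0$.
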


\begin{proof} The Russo-Dye Theorem is proved in  \cite[Chapter 2]{Paulsen-book}, while the fact that 
$\Ep$ maps $\mathcal D_\tau(\A) $ back into itself if and only if
$\Ep$ is positive and $\tau\circ\Ep=\tau$ is easy to verify. 

To prove the third statement, note that for every $x\in\A$ we have that $\Ep(x^*x)\geq\Ep(x)^*\Ep(x)$, which implies that
\[
\|\Ep\|\,\|x\|^2 = \|\Ep\|\,\|x^*x\| \geq \|\Ep(x^*x)\|\geq \|\Ep(x)^*\Ep(x)\| =\|\Ep(x)\|^2.
\]
With $x=1$ and using $\|\Ep\|=\|\Ep(1)\|$, the inequality above yields $\|\Ep\|^2\leq\|\Ep\|$, and so $\|\Ep\|\leq 1$.
Thus, ${\mathcal E}(1)$ is a positive contraction and, therefore, $1-{\mathcal E}(1)$ is positive. Hence,
$0\leq\tau(1-{\mathcal E}(1))=\tau(1)-\tau\circ{\mathcal E}(1)=\tau(1)-\tau(1)=0$. Because $\tau$ is faithful and since
$1-{\mathcal E}(1)\in\A_+$, we
see that $\tau(1-{\mathcal E}(1))=0$ only if $1-{\mathcal E}(1)=0$. Therefore, $\Ep(1)=1$.
\end{proof}

\section{The Bures Metric}\label{bm}

This section establishes, for tracial C$^*$-algebras, a number of 
important properties of the trace functional and the Bure metric that are known to hold in matrix algebras.
Once again, the notation assumptions herein are that $\A$ denotes a unital C$^*$-algebra and $\tau$ denotes a faithful
trace on $\A$.

\subsection{The trace norm is a norm}

It is well known that the function $x\mapsto\tau(|x|)$ defines a norm on a (semi)finite von Neumann algebra
with faithful normal trace $\tau$.
This function also defines a norm on a unital C$^*$-algebra $\A$ with faithful trace $\tau$, as noted in the following
result.

\begin{proposition}\label{tn}
The function $x\mapsto\tau(|x|)$ defines a norm on $\A$.
\end{proposition}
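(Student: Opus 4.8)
The plan is to check the three defining properties of a vector-space norm, the only substantial one being the triangle inequality, which I would reduce to the known statement for finite von Neumann algebras by passing to the enveloping von Neumann algebra $\A^{**}$.

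First, homogeneity and positive-definiteness are immediate. Since $|\lambda x|=(\overline\lambda\lambda\,x^*x)^{1/2}=|\lambda|\,(x^*x)^{1/2}=|\lambda|\,|x|$, linearity of $\tau$ gives $\tau(|\lambda x|)=|\lambda|\,\tau(|x|)$. As $\tau$ is a positive linear functional on a unital C$^*$-algebra it is bounded, so $\tau(|x|)$ is a well-defined nonnegative real number for every $x\in\A$; and if $\tau(|x|)=0$, then faithfulness of $\tau$ together with $|x|\in\A_+$ forces $|x|=0$, whence $\|x\|^2=\|x^*x\|=\||x|\|^2=0$ and so $x=0$.

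For the triangle inequality I would use the canonical embedding of $\A$ into its enveloping von Neumann algebra $\A^{**}$. This embedding is an isometric unital $*$-monomorphism with weak-$*$ dense range; being a $*$-homomorphism it preserves the continuous functional calculus, so the absolute value $|x|$ of an element $x\in\A$ is the same whether computed in $\A$ or in $\A^{**}$. The bidual $\tau^{**}$ is a faithful normal trace on $\A^{**}$ whose restriction to $\A$ is $\tau$ (traciality of $\tau^{**}$ follows from that of $\tau$ together with the separate weak-$*$ continuity of multiplication and the weak-$*$ density of $\A$ in $\A^{**}$). In particular $\A^{**}$ admits a faithful normal finite trace, and the usual partial-isometry argument --- if $v^*v=1$ and $vv^*=p\le 1$ then $\tau^{**}(1-p)=\tau^{**}(v^*v)-\tau^{**}(vv^*)=0$, so $p=1$ by faithfulness --- shows that $1$ is a finite projection; hence $\A^{**}$ is a finite von Neumann algebra.

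It then remains only to invoke the well-known fact that on a finite von Neumann algebra equipped with a faithful normal trace the map $z\mapsto\tau^{**}(|z|)$ is a norm (the $L^1$-norm of that trace), and in particular is subadditive. For $x,y\in\A$ this yields
\[
\tau(|x+y|)=\tau^{**}(|x+y|)\le\tau^{**}(|x|)+\tau^{**}(|y|)=\tau(|x|)+\tau(|y|),
\]
which completes the argument. The only delicate point is verifying that $\A^{**}$ is finite, so that the von Neumann algebra result is applicable; an alternative that bypasses $\A^{**}$ is to prove subadditivity directly from an (approximate) polar decomposition of $x+y$ in $\A$ combined with the Cauchy--Schwarz estimate $|\tau(b^*cb)|\le\|c\|\,\tau(b^*b)$, but the reduction to the finite von Neumann algebra case is cleaner and is consistent with the strategy used elsewhere in the paper.
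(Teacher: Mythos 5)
Your overall strategy (reduce the triangle inequality to the known $L^1$-norm statement for a finite von Neumann algebra with faithful normal trace) is viable, but the specific vehicle you chose, the bidual, introduces a genuine gap: in general $\tau^{**}$ is \emph{not} faithful on $\A^{**}$, and $\A^{**}$ is \emph{not} finite, so the finite-von-Neumann-algebra result cannot be invoked as you state it. The normal extension $\tau^{**}$ vanishes on the central summand $(1-z)\A^{**}$, where $z$ is the central support of the GNS representation $\pi_\tau$; already for $\A=C[0,1]$ with $\tau$ equal to integration against Lebesgue measure, the central cover of the point evaluation $\delta_{1/2}$ is a nonzero positive element of $\A^{**}$ annihilated by $\tau^{**}$. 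Finiteness fails as well: the CAR algebra carries a unique (faithful) tracial state, yet it has type III (Powers factor) representations, so its bidual contains infinite direct summands. Your partial-isometry argument for finiteness collapses precisely because it presupposes the faithfulness that fails. (The paper does assert faithfulness of $\tau^{**}$ in its introductory remarks, but that assertion itself requires the central-support caveat, and your proof of this proposition cannot rest on it.)

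The conclusion survives with a small repair, and the repair is essentially the reduction the paper uses for Theorem \ref{bures metric is a metric}: pass to the GNS representation, writing $\tau=\tau_{\N}\circ\pi$ with $\N=\pi(\A)''$ and $\tau_{\N}$ a faithful normal trace on $\N$ (Takesaki, Proposition V.3.19); since $\pi$ is a $*$-homomorphism one has $\pi(|x|)=|\pi(x)|$, and subadditivity of the trace norm on the finite algebra $\N$ pulls back to $\A$. Alternatively, you can keep $\A^{**}$ but note that subadditivity of $z\mapsto\tau^{**}(|z|)$ needs only that $\tau^{**}$ be a positive normal tracial functional on a von Neumann algebra: writing $x+y=v|x+y|$ with $v$ a partial isometry, $\tau^{**}(|x+y|)=\tau^{**}(v^*x)+\tau^{**}(v^*y)$, and $|\tau^{**}(v^*x)|\leq\tau^{**}(|x|)$ follows from Cauchy--Schwarz together with traciality; neither faithfulness nor finiteness is needed. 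For contrast, the paper's own proof never leaves $\A$: it quotes the Akemann--Anderson--Pedersen triangle inequality, producing for each $\varepsilon>0$ unitaries $u,v\in\A$ with $|x+y|\leq u|x|u^*+v|y|v^*+\varepsilon 1$, applies $\tau$ and traciality, and lets $\varepsilon\to 0$ -- a route that avoids all of these von Neumann algebra subtleties.
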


\begin{proof} Because $\tau$ is faithful, we have $\tau(|x|)=0$ only if $|x|=0$, and so 
$x=0$ when $\tau(|x|)=0$. Moreover, map $x\mapsto\tau(|x|)$ clearly satisfies 
$\alpha x\mapsto |\alpha|\tau(|x|)$.

To prove the triangle inequality, let $x,y\in\A$. By \cite[Theorem 4.2]{akemann--anderson--pedersen1982},
for each $\varepsilon>0$ there exist unitaries $u,v\in \A$ such that $|x+y|\leq u|x|u^*+v|y|v^*+\varepsilon 1$.
Thus, $\tau(|x+y|)\leq \tau(|x|)+\tau(|y|)+\varepsilon\tau(1)$. As this is true for every $\varepsilon>0$, we deduce that
$\tau(|x+y|)\leq \tau(|x|)+\tau(|y|)$.
\end{proof}

\begin{definition} The norm on $\A$ defined in Proposition \ref{tn} is called the \emph{trace norm} on $\A$,
and is denoted by $\|\cdot\|_{1,\tau}$. The induced metric $d_1^\tau(\sigma,\rho)=\|\sigma-\rho\|_{1,\tau}$ on
the density space $\mathcal D_\tau(\A)$ is called the \emph{trace-norm metric}.
\end{definition}

\subsection{The Bures metric is a metric}

To establish the metric property of the Bures distance, we begin with the von Neumann algebra category.

\begin{lemma}\label{vna-lemma1} If $\sigma,\rho \in\mathcal D_\tau(\N)$, then 
\begin{enumerate}
\item $d_B^\tau(\sigma,\rho)=d_B^\tau(\rho,\sigma)$, and
\item if $d_B^\tau(\sigma,\rho)=0$, then $\sigma=\rho$.
\end{enumerate}
\end{lemma}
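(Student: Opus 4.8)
The plan is to establish both statements using the key fact that in a finite von Neumann algebra $\N$ with faithful normal trace $\tau$, the trace is \emph{tracial} on absolute values in the sense that $\tau(|ab|)=\tau(|ba|)$ — or, more precisely, that $|ab|$ and $|ba|$ have the same distribution with respect to $\tau$ (they are unitarily equivalent modulo the kernel). For the symmetry statement (1), write $\sigma^{1/2}\rho^{1/2}=x$; then $|x|=(x^*x)^{1/2}=(\rho^{1/2}\sigma\rho^{1/2})^{1/2}$ and $|x^*|=(xx^*)^{1/2}=(\sigma^{1/2}\rho\sigma^{1/2})^{1/2}$. Since for any $a\in\N$ the operators $a^*a$ and $aa^*$ satisfy $\tau\big(f(a^*a)\big)=\tau\big(f(aa^*)\big)$ for every continuous function $f$ vanishing at $0$ (a standard consequence of the trace property applied to the polar decomposition $a=v|a|$, with $v$ a partial isometry in $\N$), we get $\tau(|x|)=\tau\big((x^*x)^{1/2}\big)=\tau\big((xx^*)^{1/2}\big)=\tau(|x^*|)$. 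But $\tau(|x^*|)=F_\tau(\rho,\sigma)$ by the same identification with the roles of $\sigma$ and $\rho$ interchanged, so $F_\tau(\sigma,\rho)=F_\tau(\rho,\sigma)$ and hence $d_B^\tau(\sigma,\rho)=d_B^\tau(\rho,\sigma)$.

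For statement (2), suppose $d_B^\tau(\sigma,\rho)=0$, i.e. $\tau(|\sigma^{1/2}\rho^{1/2}|)=1$. Using the identification above, this reads $\tau\big((\rho^{1/2}\sigma\rho^{1/2})^{1/2}\big)=1$. The idea is to compare this with the Cauchy–Schwarz-type bound $\tau(|ab|)\le \|a\|_{2,\tau}\,\|b\|_{2,\tau}$, where $\|c\|_{2,\tau}=\tau(c^*c)^{1/2}$; with $a=\sigma^{1/2}$, $b=\rho^{1/2}$ this gives $F_\tau(\sigma,\rho)\le \tau(\sigma)^{1/2}\tau(\rho)^{1/2}=1$, with equality in Cauchy–Schwarz forcing $\rho^{1/2}$ and $\sigma^{1/2}$ to be proportional as vectors in the GNS Hilbert space $L^2(\N,\tau)$. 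Since both have $\|\cdot\|_{2,\tau}=1$, proportionality with the correct normalization (the inner product $\tau(\sigma^{1/2}\rho^{1/2})$ must itself be real and equal to $1$, which needs a short argument to extract from $\tau(|\sigma^{1/2}\rho^{1/2}|)=1$) yields $\sigma^{1/2}=\rho^{1/2}$ in $L^2(\N,\tau)$, and then $\sigma=\rho$ in $\N$ by faithfulness of $\tau$.

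The main obstacle I anticipate is the passage from $\tau(|\sigma^{1/2}\rho^{1/2}|)=1$ to the \emph{equality case of Cauchy–Schwarz} — one must be careful that $\tau(|c|)=1$ together with $\tau(c^*c)^{1/2}\cdot\tau(1)^{1/2}\ge \tau(|c|)$ does not immediately give $\tau(c^*c)=1$ unless one also controls the phase. The cleanest route is probably: first show $\tau(|\sigma^{1/2}\rho^{1/2}|)\le \tau(\sigma^{1/2}\rho\sigma^{1/2})^{1/2}$ is not quite what we want; instead use the operator inequality $|c|\le \tfrac12(c^*c+1)$ after noting $\|c\|\le 1$ here (since $\|\sigma^{1/2}\rho^{1/2}\|\le\|\sigma^{1/2}\|\|\rho^{1/2}\|$ and these norms are $\le 1$ as $\sigma,\rho$ are positive contractions — which itself follows from $\tau$ being faithful and $\tau(\sigma)=1$ only in the matricial intuition, so this point needs care in general and may instead require the bound $\tau(|c|)\le\|c\|_{2,\tau}$ directly). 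Granting $\tau(|c|)\le\tau(c^*c)^{1/2}=\tau(\rho^{1/2}\sigma\rho^{1/2})^{1/2}\le 1$, equality throughout forces $\rho^{1/2}\sigma\rho^{1/2}$ to be a projection of trace $1$ and then, tracing back, $\sigma^{1/2}=\rho^{1/2}$; making each of these equality-discussions rigorous in the von Neumann setting (where one cannot argue with eigenvalues) is the delicate part, and is most naturally handled inside $L^2(\N,\tau)$ as sketched above.
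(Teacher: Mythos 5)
Your argument for statement (1) is correct and is a genuinely more elementary route than the paper's: you use the standard identity $\tau\bigl(f(a^*a)\bigr)=\tau\bigl(f(aa^*)\bigr)$ for continuous $f$ with $f(0)=0$, obtained from the polar decomposition $a=v|a|$ inside $\N$, applied with $f(t)=t^{1/2}$ and $a=\sigma^{1/2}\rho^{1/2}$. The paper instead invokes Fack's generalized $s$-numbers ($\mu_{|yx^*|}=\mu_{|xy^*|}$ together with $\mu_{\psi(h)}=\psi(\mu_h)$), which proves the same identity with heavier machinery; your version is a perfectly acceptable substitute for (1).

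Statement (2) is where the genuine gap lies. Your Cauchy--Schwarz bound $\tau(|\sigma^{1/2}\rho^{1/2}|)\le\|\sigma^{1/2}\|_{2,\tau}\|\rho^{1/2}\|_{2,\tau}=1$ is fine (it is the H\"older inequality), but the equality analysis is not between the vectors $\sigma^{1/2}$ and $\rho^{1/2}$: writing $\sigma^{1/2}\rho^{1/2}=u|\sigma^{1/2}\rho^{1/2}|$, one has $\tau(|\sigma^{1/2}\rho^{1/2}|)=\tau\bigl((\sigma^{1/2}u)^*\,\overline{}\;\rho^{1/2}\bigr)$ up to adjoints, so equality in Cauchy--Schwarz yields $\rho^{1/2}=\sigma^{1/2}u$ (plus a support condition on $uu^*$), not $\rho^{1/2}=\lambda\sigma^{1/2}$. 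Deducing $\sigma=\rho$ from $\rho^{1/2}=\sigma^{1/2}u$, the positivity of $\rho^{1/2}$, and the support condition is exactly the nontrivial content, and it is not carried out; your own hedging ("needs a short argument") conceals the entire difficulty. Moreover, the fallback route in your final paragraph is based on false inequalities: $\tau(|c|)\le\tau(c^*c)^{1/2}$ needs $\tau(1)=1$, which is not assumed, and even then $\tau(\rho^{1/2}\sigma\rho^{1/2})\le 1$ fails in general --- take $\sigma=\rho=2e_{11}$ in $\M_2(\mathbb C)$ with the normalized trace, where $d_B^\tau(\sigma,\rho)=0$ but $\tau(\rho^{1/2}\sigma\rho^{1/2})=2$ and $\rho^{1/2}\sigma\rho^{1/2}=4e_{11}$ is certainly not a projection of trace $1$; so the "equality throughout" reasoning there cannot be repaired. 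The paper sidesteps all of this by quoting the tracial arithmetic--geometric mean inequality $\tau(|\sigma^{1/2}\rho^{1/2}|)\le\tfrac12(\tau(\sigma)+\tau(\rho))$ together with its known equality case (Farenick--Manjegani, Theorem 3.4), which states that equality forces $\sigma=\rho$; either cite that result, or supply the full equality-case argument for the polar-isometry identity $\rho^{1/2}=\sigma^{1/2}u$ sketched above.
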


\begin{proof} For each $x\in\N$, let $\mu_z(t)=\mbox{inf}\,\left\{\|xe\|\,|\,e^*=e=e^2,\,\tau(1-e)\le t\right\}$,
which is a Borel measurable function on $[0, \infty)$ and which, by \cite{fack1982}, satisfies 
\[
\tau(|x|)=\int_0^{\tau(1)}\mu_x(t)\,dt.
\]
Moreover, the results of \cite{fack1982} show that
$\mu_x=\mu_{x^*}=\mu_{|x|}$ and, for any $x,y\in \N$,
the equation $\mu_{|yx^*|}(t)=\mu_{|xy^*|}(t)$ holds for every $t\in[0,\infty)$. 
In addition, if $h\in \N_+$ and if 
$\psi:[0,\infty)\rightarrow[0,\infty)$ is an increasing continuous function
such that $\psi(0)=0$, then 
$\mu_{\psi(h)}(t)=\psi\left(\mu_h(t)\right)$, for all $t\in[0,\infty)$. In particular, with 
$\psi(t)=\sqrt{t}$, we deduce that
$\tau(|\sigma^{1/2}\rho^{1/2}|)=\tau(|\rho^{1/2}\sigma^{1/2}|)$, for all $\sigma,\rho \in\mathcal D_\tau(\N)$.
Hence, $d_B^\tau(\sigma,\rho)=d_B^\tau(\rho,\sigma)$.

The tracial arithmetic-geometric mean inequality states that
\[
\tau\left(|\sigma^{1/2}\rho^{1/2}|\right)\leq \frac{1}{2}\left(\tau(\sigma)+\tau(\rho)\right) =1.
\]
Therefore, $d_B^\tau(\sigma,\rho)=0$ implies that $\tau\left(|\sigma^{1/2}\rho^{1/2}|\right)=\frac{1}{2}\left(\tau(\sigma)+\tau(\rho)\right)$,
which in turn implies that $\sigma=\rho$ by \cite[Theorem 3.4]{farenick--manjegani2005}.
\end{proof}

\begin{lemma}\label{vna-lemma2} If $\tau$ is a faithful normal trace functional on a von Neumann algebra $\N$,
and if $\sigma,\rho\in\mathcal D_\tau(\N)$, then
\begin{enumerate}
\item $\sqrt{2}\,d_B^\tau(\sigma,\rho)\leq \sqrt{ \tau\left( |\sigma^{1/2}-\rho^{1/2}w|^2\right)}$, for every $w\in \N$ of norm $\|w\|\leq1$, and 
\item there exists a unitary $w\in\N$ for which equality in {\rm (1)} holds.
\end{enumerate}
\end{lemma}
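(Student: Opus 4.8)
The plan is to square both sides and reduce everything to one computation with the trace $\tau$, with $\tau(\sigma^{1/2}\rho^{1/2}w)$ as the governing quantity. Expanding $|\sigma^{1/2}-\rho^{1/2}w|^2=(\sigma^{1/2}-w^*\rho^{1/2})(\sigma^{1/2}-\rho^{1/2}w)$ and applying $\tau$, using $\tau(\sigma)=\tau(\rho)=1$, the trace property, and $\tau(x^*)=\overline{\tau(x)}$, gives the identity
\[
\tau\bigl(|\sigma^{1/2}-\rho^{1/2}w|^2\bigr)=1+\tau(w^*\rho w)-2\,\mathrm{Re}\,\tau(\sigma^{1/2}\rho^{1/2}w).
\]
Since $\bigl(d_B^\tau(\sigma,\rho)\bigr)^2=1-F_\tau(\sigma,\rho)$, the task becomes to show $\mathrm{Re}\,\tau(\sigma^{1/2}\rho^{1/2}w)\le F_\tau(\sigma,\rho)$ for every contraction $w$, and then to produce a \emph{unitary} $w_0\in\N$ realizing equality $\mathrm{Re}\,\tau(\sigma^{1/2}\rho^{1/2}w_0)=F_\tau(\sigma,\rho)$.

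For the inequality I would use the polar decomposition $\sigma^{1/2}\rho^{1/2}=u\,h$ in $\N$, with $h=|\sigma^{1/2}\rho^{1/2}|$ and $u$ a partial isometry, then write $h=h^{1/2}h^{1/2}$ and apply the Cauchy--Schwarz inequality for the trace inner product $\langle a,b\rangle=\tau(b^*a)$: after cycling, $\tau(\sigma^{1/2}\rho^{1/2}w)=\tau(uhw)=\tau(hwu)=\tau\bigl(h^{1/2}\,(h^{1/2}wu)\bigr)$, so $|\tau(\sigma^{1/2}\rho^{1/2}w)|^2\le\tau(h)\,\tau\bigl(u^*w^*hwu\bigr)$. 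Since $\|w\|\le1$ and $uu^*\le1$, the operator $wuu^*w^*$ is a positive contraction, hence $\tau(u^*w^*hwu)=\tau\bigl(h^{1/2}(wuu^*w^*)h^{1/2}\bigr)\le\tau(h)$, and therefore $|\tau(\sigma^{1/2}\rho^{1/2}w)|\le\tau(h)=F_\tau(\sigma,\rho)$. Feeding this back into the displayed identity yields $\tau\bigl(|\sigma^{1/2}-\rho^{1/2}w|^2\bigr)\ge 1+\tau(w^*\rho w)-2F_\tau(\sigma,\rho)$; when $w$ is unitary, $\tau(w^*\rho w)=\tau(\rho)=1$ and the right side equals $2\bigl(d_B^\tau(\sigma,\rho)\bigr)^2$, which is (1).

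For (2) I would exhibit the optimal unitary directly from the polar decomposition $\rho^{1/2}\sigma^{1/2}=v\,|\rho^{1/2}\sigma^{1/2}|$ in $\N$. Because $\N$ is a finite von Neumann algebra, $v^*v\sim vv^*$, hence $1-v^*v\sim 1-vv^*$ (their $\tau$-values agree), so $v$ extends to a unitary $w_0\in\N$; for this $w_0$ one has $w_0^*v=v^*v$, which is the support projection of $|\rho^{1/2}\sigma^{1/2}|$, whence $w_0^*\rho^{1/2}\sigma^{1/2}=v^*v\,|\rho^{1/2}\sigma^{1/2}|=|\rho^{1/2}\sigma^{1/2}|$. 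Taking traces and using the singular-value identity $\tau(|\rho^{1/2}\sigma^{1/2}|)=\tau(|\sigma^{1/2}\rho^{1/2}|)$ from the proof of Lemma~\ref{vna-lemma1}, we get $\tau(\sigma^{1/2}\rho^{1/2}w_0)=\overline{\tau(w_0^*\rho^{1/2}\sigma^{1/2})}=F_\tau(\sigma,\rho)$, a real number; inserting this and $\tau(w_0^*\rho w_0)=1$ into the displayed identity gives $\tau\bigl(|\sigma^{1/2}-\rho^{1/2}w_0|^2\bigr)=2\bigl(1-F_\tau(\sigma,\rho)\bigr)=2\bigl(d_B^\tau(\sigma,\rho)\bigr)^2$, i.e.\ equality in (1).

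I expect the main obstacle to be the operator-algebraic bookkeeping around the two polar decompositions: checking that the relevant partial isometries and range projections lie in $\N$, that $v$ genuinely completes to a unitary of $\N$ (this is exactly where finiteness of $\N$ enters, via $1-v^*v\sim 1-vv^*$), and---most delicately---that the support-projection identities are oriented so that the cyclic rearrangements of $\tau$ produce $F_\tau(\sigma,\rho)$ exactly rather than only a bound for it. Everything else is the Cauchy--Schwarz step above together with the singular-value facts already in the proof of Lemma~\ref{vna-lemma1}.
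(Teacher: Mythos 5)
Your proof is correct where it applies and follows the same overall skeleton as the paper's: expand $\tau(|\sigma^{1/2}-\rho^{1/2}w|^2)$, bound the cross term $\Re\,\tau(\sigma^{1/2}\rho^{1/2}w)$ by $F_\tau(\sigma,\rho)$, and produce a unitary polar part to force equality. The differences are in how the two key steps are justified. For the estimate, the paper uses $\Re\,\tau(wx)\le|\tau(wx)|\le\tau(|wx|)\le\tau(|x|)$ with $x=\sigma^{1/2}\rho^{1/2}$ (the last step from $|wx|^2\le|x|^2$ and operator monotonicity of the square root), while you inline the equivalent argument through the polar decomposition of $x$ and the tracial Cauchy--Schwarz inequality; yours is more self-contained, the paper's is shorter. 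For the equality case, the paper quotes Choda's theorem that in a finite von Neumann algebra the polar part may be taken to be an extreme point of the unit ball, hence a unitary, and applies it to $\sigma^{1/2}\rho^{1/2}$; you instead extend the partial isometry of $\rho^{1/2}\sigma^{1/2}$ to a unitary and then need $\tau(|\rho^{1/2}\sigma^{1/2}|)=\tau(|\sigma^{1/2}\rho^{1/2}|)$ from Lemma \ref{vna-lemma1} (decomposing $\sigma^{1/2}\rho^{1/2}$ itself would avoid that detour). One justification should be repaired: equality of the traces of $1-v^*v$ and $1-vv^*$ does not give Murray--von Neumann equivalence outside a factor; what you need is the standard consequence of finiteness that $p\sim q$ implies $1-p\sim 1-q$ (equivalently, every element of a finite von Neumann algebra admits a polar decomposition with unitary polar part), which is precisely what the citation of Choda packages in the paper.

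Be aware also that your argument proves (1) only for unitary $w$, since you use $\tau(w^*\rho w)=\tau(\rho)=1$. This is not a flaw in your reasoning: for a general contraction the claim as literally stated fails (take $w=0$ and $\sigma\perp\rho$, so the right-hand side is $1$ while the left-hand side is $\sqrt{2}$), and the paper's own first displayed equality already presupposes $\tau(w^*\rho w)=\tau(\rho)$, i.e.\ it is tacitly the unitary case. The unitary case is the only one invoked later, in Lemma \ref{vna-lemma3} and Proposition \ref{fuchs}, so your version is the usable one; just state the restriction explicitly rather than claiming the inequality for all contractions.
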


\begin{proof} If $w\in\N$ has norm $\|w\|\leq 1$, then $1-w^*w\in \N_+$ and
\[
\begin{array}{rcl}
\tau\left( |\sigma^{1/2}-\rho^{1/2}w|^2\right)&=&2\left(1-\Re\left[\tau(w\sigma^{1/2}\rho^{1/2}\right]\right) \\ && \\
&\geq & 2\left(1-\left|\tau(w\sigma^{1/2}\rho^{1/2}\right|\right) \\ && \\
&\geq& 2\left(1-\tau(|w\sigma^{1/2}\rho^{1/2}|)\right) \\ && \\
&\geq& 2\left(1-\tau(|\sigma^{1/2}\rho^{1/2}|)\right),
\end{array}
\]
where the final inequality follows from $|wx|^2=x^*(w^*w)x\leq x^*x=|x|^2$ and the monotonicity of the square root function $t\mapsto t^{1/2}$
in functional calculus. Thus, $\sqrt{2}\,d_b^\tau(\sigma,\rho)\leq \sqrt{ \tau\left( |\sigma^{1/2}-\rho^{1/2}w|^2\right)}$.

Because $\N$ is a finite von Neumann algebra, there exists an extreme point $v$ of the unit ball of $\N$ such that
$\sigma^{1/2}\rho^{1/2}=v|\sigma^{1/2}\rho^{1/2}|$ \cite{choda1970}. Therefore, since the extreme points of the unit ball of a finite von Neumann algebra
are necessarily unitary, $v^*v=vv^*=1$. Thus,
$v^*\sigma^{1/2}\rho^{1/2}=|\sigma^{1/2}\rho^{1/2}|=|\sigma^{1/2}\rho^{1/2}|^*=\rho^{1/2}\sigma^{1/2}v$. Hence, 
\[
2-2\tau\left(| \sigma^{1/2}\rho^{1/2} |\right)=2-2\Re\left[\tau(v^*\sigma^{1/2}\rho^{1/2})\right] =\tau\left( |\sigma^{1/2}-\rho^{1/2}v^*|^2\right),
\]
which yields equality in {\rm (1)}.
\end{proof}

\begin{lemma}\label{vna-lemma3} If $\sigma,\rho,\theta\in\mathcal D_\tau(\N)$, then
\[
d_B^\tau(\sigma,\rho)\leq d_B^\tau(\sigma, \theta)+d_B^\tau(\theta,\rho).
\]
\end{lemma}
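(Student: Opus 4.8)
The plan is to exploit the variational description of the Bures distance furnished by Lemma \ref{vna-lemma2}: for any $\sigma,\rho\in\mathcal D_\tau(\N)$ one has
\[
\sqrt{2}\,d_B^\tau(\sigma,\rho)=\min\left\{\sqrt{\tau\left(|\sigma^{1/2}-\rho^{1/2}w|^2\right)}\,:\,w\in\N,\ \|w\|\leq1\right\},
\]
the minimum being attained at a \emph{unitary} $w$. First I would invoke this to select unitaries $u,v\in\N$ with
\[
\sqrt{2}\,d_B^\tau(\sigma,\theta)=\sqrt{\tau\left(|\sigma^{1/2}-\theta^{1/2}u|^2\right)},\qquad
\sqrt{2}\,d_B^\tau(\theta,\rho)=\sqrt{\tau\left(|\theta^{1/2}-\rho^{1/2}v|^2\right)}.
\]

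Next, the idea is to feed the single contraction $w=vu$, which is a unitary and hence of norm $\leq 1$, into part (1) of Lemma \ref{vna-lemma2} applied to the pair $\sigma,\rho$. The algebraic identity
\[
\sigma^{1/2}-\rho^{1/2}(vu)=\left(\sigma^{1/2}-\theta^{1/2}u\right)+\left(\theta^{1/2}-\rho^{1/2}v\right)u,
\]
together with the triangle inequality for the seminorm $x\mapsto\tau(x^*x)^{1/2}$ on $\N$ — which is a genuine norm because $\tau$ is faithful, being the Hilbert-space norm of the GNS construction associated with $\tau$ — gives
\[
\sqrt{\tau\left(|\sigma^{1/2}-\rho^{1/2}(vu)|^2\right)}\leq\sqrt{\tau\left(|\sigma^{1/2}-\theta^{1/2}u|^2\right)}+\sqrt{\tau\left(|(\theta^{1/2}-\rho^{1/2}v)u|^2\right)}.
\]

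Finally I would absorb the trailing unitary in the last term using traciality and $uu^*=1$, namely $\tau\left(|(\theta^{1/2}-\rho^{1/2}v)u|^2\right)=\tau\left(u^*|\theta^{1/2}-\rho^{1/2}v|^2u\right)=\tau\left(|\theta^{1/2}-\rho^{1/2}v|^2\right)$. Chaining the two displays above with the inequality $\sqrt{2}\,d_B^\tau(\sigma,\rho)\leq\sqrt{\tau(|\sigma^{1/2}-\rho^{1/2}(vu)|^2)}$ from Lemma \ref{vna-lemma2}(1) then yields $\sqrt{2}\,d_B^\tau(\sigma,\rho)\leq\sqrt{2}\,d_B^\tau(\sigma,\theta)+\sqrt{2}\,d_B^\tau(\theta,\rho)$, and dividing by $\sqrt{2}$ completes the proof. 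I do not expect a genuine obstacle here: the only point demanding care is the bookkeeping of the order of the unitary factors, so that the telescoping identity is valid and the extraneous unitary really can be cancelled inside the trace.
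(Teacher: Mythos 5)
Your argument is correct and is essentially the paper's own proof: both select optimal unitaries via Lemma \ref{vna-lemma2}(2), use the telescoping identity with the product unitary, apply the triangle (Cauchy--Schwarz) inequality for the norm $x\mapsto\tau(x^*x)^{1/2}$, cancel the trailing unitary by traciality, and conclude with Lemma \ref{vna-lemma2}(1). No gaps to report.
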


\begin{proof}
By Lemma \ref{vna-lemma2}, there are unitaries $u,w\in\N$ such that
\[
\sqrt{2}\,d_B^\tau(\sigma,\theta)= \sqrt{ \tau\left( |\sigma^{1/2}-\theta^{1/2}w|^2\right)}
\mbox{ and }
\sqrt{2}\,d_B^\tau(\theta,\rho)= \sqrt{ \tau\left( |\theta^{1/2}-\rho^{1/2}u|^2\right)}.
\]
Let $v=uw$. Thus,
\[
\begin{array}{rcl}
\sigma^{1/2}-\rho^{1/2}v&=& \sigma^{1/2}-\theta^{1/2}w + \theta^{1/2}w - \rho^{1/2}v \\&&\\
&=&(\sigma^{1/2}-\theta^{1/2}w)+(\theta^{1/2}-\rho^{1/2}vw^*)w \\&&\\
&=&(\sigma^{1/2}-\theta^{1/2}w)+(\theta^{1/2}-\rho^{1/2}u)w.
\end{array}
\]

Let $x=\sigma^{1/2}-\theta^{1/2}w$ and $y=-(\theta^{1/2}-\rho^{1/2}u)w$ so that $x-y=\sigma^{1/2}-\rho^{1/2}v$.
The Cauchy-Schwarz inequality for the sesquilinear form $(x,y)\mapsto \tau(xy^*)$ yields
$\left| \tau(xy^*)\right| \leq \sqrt{\tau(|x|^2)\,\tau(|y|^2)}$, and so
\[
\begin{array}{rcl}
\left( \sqrt{\tau(|x|^2)}+\sqrt{\tau(|y^2|)}\right)^2 &=&\tau(|x|^2)+\tau(|y|^2)+2\sqrt{\tau(|x|^2)\,\tau(|y|^2)} \\ && \\
&\geq& \tau(|x|^2)+\tau(|y|^2)+2\left| \tau(xy^*)\right| \\ && \\
&\geq& \tau(|x|^2)+\tau(|y|^2)+2\Re\left[ \tau(xy^*)\right] \\ && \\ 
&=&\tau(|x-y|^2).
\end{array}
\]
That is,
\[
\begin{array}{rcl}
\sqrt{ \tau\left(\left|\sigma^{1/2}-\rho^{1/2}v\right|^2\right)} &\leq &
\sqrt{ \tau\left(\left|\sigma^{1/2}-\theta^{1/2}w\right|^2\right)} +
\sqrt{ \tau\left(\left|(\theta^{1/2}-\rho^{1/2}u)w\right|^2\right)} \\ && \\
&=&
\sqrt{ \tau\left(\left|\sigma^{1/2}-\theta^{1/2}w\right|^2\right)} +
\sqrt{ \tau\left(\left| \theta^{1/2}-\rho^{1/2}u \right|^2\right)} \\ && \\
&=& 
\sqrt{2}\,d_B^\tau(\sigma, \theta)+\sqrt{2}\,d_B^\tau(\theta,\rho).
\end{array}
\]
Because Lemma \ref{vna-lemma2} asserts that $\sqrt{2}\,d_B^\tau(\sigma,\rho)\leq \sqrt{ \tau\left(\left|\sigma^{1/2}-\rho^{1/2}v\right|^2\right)}$, 
the triangle inequality $d_B^\tau(\sigma,\rho)\leq d_B^\tau(\sigma, \theta)+d_B^\tau(\theta,\rho)$ follows.
\end{proof}

Lemmas \ref{vna-lemma2} and \ref{vna-lemma3} above are modelled on matrix-theoretic results: see \cite[Exercise 2.20]{Hayashi-book}.

We are now ready to prove that the Bures metric is indeed a metric in the formal sense.

\begin{theorem}\label{bures metric is a metric} 
The function $d_B^\tau$ is a metric on $\mathcal D_\tau(\A)$, for every unital C$^*$-algebra $\A$ 
and faithful trace functional $\tau$ on $\A$.
\end{theorem}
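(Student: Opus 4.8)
The plan is to reduce the C$^*$-algebra statement to the finite von Neumann algebra case already settled in Lemmas~\ref{vna-lemma1} and~\ref{vna-lemma3}, by passing to the enveloping von Neumann algebra. Recall that the canonical embedding $\iota:\A\to\A^{**}$ is an isometric unital $*$-monomorphism of $\A$ onto a C$^*$-subalgebra of $\A^{**}$, that $\tau^{**}$ is a faithful normal trace functional on $\A^{**}$, and that $\tau^{**}\circ\iota=\tau$. Because $\tau^{**}$ is a (bounded) trace, $\A^{**}$ is a finite von Neumann algebra, so Lemmas~\ref{vna-lemma1}--\ref{vna-lemma3} all apply to $(\A^{**},\tau^{**})$.

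First I would record that $\iota$ restricts to an injection $\mathcal D_\tau(\A)\hookrightarrow\mathcal D_{\tau^{**}}(\A^{**})$: if $\rho\in\A_+$ then $\iota(\rho)\in(\A^{**})_+$ (homomorphisms preserve positivity), and $\tau^{**}(\iota(\rho))=\tau(\rho)$. Moreover, continuous functional calculus is computed inside any C$^*$-subalgebra containing the element (spectral permanence for C$^*$-subalgebras), so for $\sigma,\rho\in\mathcal D_\tau(\A)$ one has $\iota(\sigma^{1/2})=\iota(\sigma)^{1/2}$, $\iota(\rho^{1/2})=\iota(\rho)^{1/2}$, and, since $\iota$ is a $*$-homomorphism, $\iota\bigl(|\sigma^{1/2}\rho^{1/2}|\bigr)=\bigl|\iota(\sigma)^{1/2}\iota(\rho)^{1/2}\bigr|$. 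Applying $\tau^{**}$ and using $\tau^{**}\circ\iota=\tau$ gives
\[
F_\tau(\sigma,\rho)=\tau\bigl(|\sigma^{1/2}\rho^{1/2}|\bigr)=\tau^{**}\bigl(|\iota(\sigma)^{1/2}\iota(\rho)^{1/2}|\bigr)=F_{\tau^{**}}\bigl(\iota(\sigma),\iota(\rho)\bigr),
\]
hence $d_B^\tau(\sigma,\rho)=d_B^{\tau^{**}}\bigl(\iota(\sigma),\iota(\rho)\bigr)$ for all $\sigma,\rho\in\mathcal D_\tau(\A)$. In particular the tracial arithmetic--geometric mean inequality (as used in Lemma~\ref{vna-lemma1}) gives $F_\tau(\sigma,\rho)\le1$, so $d_B^\tau$ is a well-defined, nonnegative, real-valued function, with $d_B^\tau(\sigma,\sigma)=\sqrt{1-\tau(\sigma)}=0$.

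With this identification in place the three metric axioms transfer verbatim. Symmetry of $d_B^\tau$, and the implication $d_B^\tau(\sigma,\rho)=0\Rightarrow\sigma=\rho$, follow from Lemma~\ref{vna-lemma1} applied in $\A^{**}$ together with the injectivity of $\iota$; and for $\sigma,\rho,\theta\in\mathcal D_\tau(\A)$ one has
\[
d_B^\tau(\sigma,\rho)=d_B^{\tau^{**}}\bigl(\iota(\sigma),\iota(\rho)\bigr)\le d_B^{\tau^{**}}\bigl(\iota(\sigma),\iota(\theta)\bigr)+d_B^{\tau^{**}}\bigl(\iota(\theta),\iota(\rho)\bigr)=d_B^\tau(\sigma,\theta)+d_B^\tau(\theta,\rho)
\]
by Lemma~\ref{vna-lemma3} applied in $\A^{**}$ (the needed intermediate element $\iota(\theta)$ lies in $\mathcal D_{\tau^{**}}(\A^{**})$).

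The only point demanding care — and the sole ``obstacle'', such as it is — is the compatibility of the functional-calculus constructions ($\rho^{1/2}$ and $|\sigma^{1/2}\rho^{1/2}|$) and of the trace with the embedding $\A\hookrightarrow\A^{**}$; once spectral permanence for C$^*$-subalgebras and the stated defining properties of $\tau^{**}$ are invoked, the argument is immediate, and no estimates beyond those already obtained in this section are required.
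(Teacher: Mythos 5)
Your overall strategy (embed $\A$ trace-compatibly into a finite von Neumann algebra with faithful normal trace, check that fidelity is preserved under the embedding, and then quote Lemmas \ref{vna-lemma1}--\ref{vna-lemma3}) is exactly the paper's strategy, and your transfer computations (spectral permanence, $\iota(|\sigma^{1/2}\rho^{1/2}|)=|\iota(\sigma)^{1/2}\iota(\rho)^{1/2}|$, injectivity of $\iota$) are fine. The genuine gap is in the choice of target algebra: the two facts you declare at the outset --- that $\tau^{**}$ is \emph{faithful} on $\A^{**}$, and that ``because $\tau^{**}$ is a (bounded) trace, $\A^{**}$ is a finite von Neumann algebra'' --- are both false in general. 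For $\A=C[0,1]$ with $\tau=$ Lebesgue integration, let $f_n$ be tent functions of height $1$ supported on intervals of length $2/n$ about $1/2$; the decreasing net $f_n$ has a nonzero infimum $F\in(\A^{**})_+$ (the normal extension of $\delta_{1/2}$ gives it value $1$), yet normality gives $\tau^{**}(F)=\lim\tau(f_n)=0$, so $\tau^{**}$ is not faithful. Likewise, the mere existence of a bounded normal trace does not force finiteness ($\style B(\ell^2)\oplus\mathbb C$ with the trace ``evaluate the second coordinate'' is not finite); finiteness would need faithfulness, which is precisely what fails. Since Lemma \ref{vna-lemma1}(2) uses the equality case of the tracial AM--GM inequality (faithfulness), Lemma \ref{vna-lemma2} uses that extreme points of the unit ball are unitaries (finiteness), and the symmetry argument uses Fack's singular-value theory for a faithful normal trace, none of the three lemmas can be applied verbatim in $(\A^{**},\tau^{**})$, so all three metric axioms are left unsupported as written. (The paper's own introductory remark makes the same faithfulness claim for $\tau^{**}$, but its proof of the theorem does not rely on it.)

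The repair is short and is what the paper actually does: pass instead to the GNS representation $\pi$ of $\tau$ and set $\N=\pi(\A)''$; by \cite[Proposition V.3.19]{Takesaki-bookI} there is a faithful normal trace $\tau_\N$ on $\N$ with $\tau=\tau_\N\circ\pi$, $\pi$ is injective because $\tau$ is faithful, and your transfer computation then gives $d_B^\tau(\sigma,\rho)=d_B^{\tau_\N}(\pi(\sigma),\pi(\rho))$, after which Lemmas \ref{vna-lemma1}--\ref{vna-lemma3} apply. Equivalently, you may stay inside $\A^{**}$ but first cut by the support projection $z$ of $\tau^{**}$ (central, since $\tau^{**}$ is tracial): $z\A^{**}$ is a finite von Neumann algebra with faithful normal trace, $a\mapsto z\iota(a)$ is still injective and trace-compatible because $\tau$ is faithful on $\A$, and this corner is just the GNS double commutant again. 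As written, though, your proof rests on hypotheses that fail, so the step ``the three metric axioms transfer verbatim'' is not justified.
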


\begin{proof} If $\A$ is a von Neumann algebra and if $\tau$ is a faithful normal trace functional, then Lemmas \ref{vna-lemma1}, 
\ref{vna-lemma2}, and \ref{vna-lemma3} show that $d_B^\tau$ is a metric. If these hypotheses on $\A$ and $\tau$ are not in effect,
then let the GNS representation of $\tau$ be given by $\tau(x)=\langle\pi(x)\xi,\xi\rangle$, for $x\in\A$, where $\pi:\A\rightarrow \B(\H)$
is a unital $*$-homomorphism and $\xi\in \H$ is a unit cyclic vector for the C$^*$-algebra $\pi(\A)$. Let $\N$ denote the double commutant
of $\pi(\A)$. By \cite[Proposition V.3.19]{Takesaki-bookI}, there exists a faithful normal trace on $\N$ such that $\tau=\tau_{\N}\circ\pi$.
Because $d_B^\tau(\sigma,\rho)=d_B^{\tau_{\N}}\left(\pi(\sigma),\pi(\rho)\right)$, for all 
$\sigma,\rho\in\mathcal D_\tau(\A)$, the metric properties of $d_B^\tau$ are inherited from the metric properties of $d_B^{\tau_{\N}}$.
\end{proof}

Fidelity satisfies $F_\tau(\sigma,\rho)\in[0,1]$, for all $\sigma,\rho\in \mathcal D_\tau(\A)$,
and thus the values of the Bures metric $d_b^\tau$ lie in the closed interval $[0,1]$.
Furthermore, a pair of $\tau$-density elements $\sigma$ and $\rho$ are at maximal distance apart 
if and only if $\sigma\rho=\rho\sigma=0$ \cite[Theorem 2.4]{farenick--jaques--rahaman2016}, 
which is an algebraic orthogonality relation that we shall denote by $\sigma\bot\rho$. More generally,
two elements $x,y\in\mathcal T$ are \emph{orthogonal}, denoted by $x\bot y$, if $xy=yx=x^*y=xy^*=0$.

The relationship between the metric induced by the trace norm and the Bures metric is made clear by the following fundamental
inequality \cite{fuchs--vandegraaf1999} for fidelity. (The proof in \cite{fuchs--vandegraaf1999} is for the case of matrices.)

\begin{proposition}[Fuchs-van de Graaf Inequality]\label{fuchs} 
For all $\sigma,\rho\in\mathcal D_\tau(\A)$,
\[
1-\frac{1}{2}\|\rho-\sigma\|_{1,\tau}\leq F_\tau(\rho,\sigma)\leq \sqrt{1-\frac{1}{4}\|\rho-\sigma\|_{1,\tau}^2}.
\]
Equivalently,
\[
2-2F_\tau(\rho,\sigma)\leq \|\rho-\sigma\|_{1,\tau}\leq 2\sqrt{1-F_\tau(\rho,\sigma)^2}.
\]
\end{proposition}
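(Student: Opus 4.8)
The plan is to reduce the C$^*$-algebra statement to the von Neumann algebra case, exactly as in the proof of Theorem \ref{bures metric is a metric}, and then prove the two inequalities directly inside a finite von Neumann algebra $\N$ using the singular-value / $\mu_x$ machinery recalled in the proof of Lemma \ref{vna-lemma1}. More precisely, by passing through the GNS representation $\pi$ of $\tau$ and the faithful normal trace $\tau_{\N}$ on the double commutant $\N$ of $\pi(\A)$, one has $F_\tau(\sigma,\rho)=F_{\tau_{\N}}(\pi(\sigma),\pi(\rho))$ and $\|\sigma-\rho\|_{1,\tau}=\|\pi(\sigma)-\pi(\rho)\|_{1,\tau_{\N}}$ (the latter because $\pi$ is a trace-preserving $*$-homomorphism and $\pi(|\sigma-\rho|)=|\pi(\sigma)-\pi(\rho)|$), so it suffices to treat $\sigma,\rho\in\mathcal D_\tau(\N)$.

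For the lower bound on $F_\tau$, I would argue as follows. By Lemma \ref{vna-lemma2}(2) there is a unitary $w\in\N$ with $2-2F_\tau(\sigma,\rho)=\tau\bigl(|\sigma^{1/2}-\rho^{1/2}w|^2\bigr)$. Write $a=\sigma^{1/2}$, $b=\rho^{1/2}w$. The idea is the operator identity $a^2-bb^*=\sigma-\rho$ combined with $aa^*-b^*b$-type manipulations; more directly, one uses $\|\sigma-\rho\|_{1,\tau}=\tau(|a^2-bb^*|)$ and the elementary factorization $a^2-bb^* = a(a-b)^* + (a-b)b^*$, whence $\tau(|a^2-bb^*|)\le \tau(|a(a-b)^*|)+\tau(|(a-b)b^*|)$. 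Applying the tracial Cauchy–Schwarz inequality $\tau(|cd|)\le \sqrt{\tau(|c|^2)}\sqrt{\tau(|d|^2)}$ together with $\tau(|a|^2)=\tau(|b|^2)=1$ (since $w$ is unitary and $\tau(\sigma)=\tau(\rho)=1$) gives $\tau(|a^2-bb^*|)\le 2\sqrt{\tau(|a-b|^2)}=2\sqrt{2-2F_\tau(\sigma,\rho)}$. That already yields the second displayed inequality $\|\rho-\sigma\|_{1,\tau}\le 2\sqrt{1-F_\tau(\rho,\sigma)^2}$ once one notes $2-2F=(1-F)\cdot 2\ge$ — wait: one has $4(2-2F)=8-8F$, and $4(1-F^2)=4(1-F)(1+F)$; since $1-F\le 1$ and $1+F\le 2$ these are comparable but not identical, so in fact the clean route to the \emph{right-hand} inequality is to bound $\tau(|a^2-bb^*|)\le \sqrt{\tau(|a-b|^2)}\bigl(\sqrt{\tau(|a|^2)}+\sqrt{\tau(|b|^2)}\bigr)\le 2\sqrt{2(1-F)}\le 2\sqrt{1-F^2}$ where the last step uses $2(1-F)\le (1-F)(1+F)+(1-F)$... to keep this honest I would instead model the estimate on the matrix proof of Fuchs–van de Graaf: use $\|\sigma-\rho\|_{1,\tau}=\tau(|a^2-bb^*|)$ and the bound $|a^2-bb^*|\preceq$-free argument via $\tau(|X^*Y-Y^*X|)$ is cleaner, so I would phrase it through $\tau(|(a-b)^*(a+b)|)$ after symmetrizing, landing on $\|\sigma-\rho\|_{1,\tau}\le \sqrt{\tau(|a-b|^2)}\,\sqrt{\tau(|a+b|^2)}$ with $\tau(|a+b|^2)=2+2F\le 4$... and then $\le 2\sqrt{2-2F}$, while $\tau(|a+b|^2)=2+2F$ combined with $\tau(|a-b|^2)=2-2F$ gives exactly $\sqrt{(2-2F)(2+2F)}=2\sqrt{1-F^2}$.

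So the two halves split as: (i) the \emph{upper} bound $F_\tau\le\sqrt{1-\tfrac14\|\rho-\sigma\|_{1,\tau}^2}$, i.e. $\|\rho-\sigma\|_{1,\tau}\le 2\sqrt{1-F_\tau^2}$, obtained from the polar decomposition $\sigma^{1/2}\rho^{1/2}=v|\sigma^{1/2}\rho^{1/2}|$ with $v$ unitary (as in Lemma \ref{vna-lemma2}), setting $a=\sigma^{1/2}$, $b=\rho^{1/2}v^*$, noting $a^2-bb^*=\sigma-\rho$, $\tau(|a-b|^2)=2-2F$, $\tau(|a+b|^2)=2+2F$, and invoking the tracial Cauchy–Schwarz inequality $\tau(|a^2-bb^*|)=\tau\bigl(|a(a-b)^*+(a-b)b^*|\bigr)\le \sqrt{\tau(|a-b|^2)}\bigl(\tau(|a|^2)^{1/2}+\tau(|b|^2)^{1/2}\bigr)$, then sharpening via the $a\pm b$ computation; and (ii) the \emph{lower} bound $1-\tfrac12\|\rho-\sigma\|_{1,\tau}\le F_\tau$, i.e. $2-2F_\tau\le\|\rho-\sigma\|_{1,\tau}$, for which I would use a variational characterization of fidelity: $F_\tau(\sigma,\rho)=\max\{\Re\,\tau(w\sigma^{1/2}\rho^{1/2}) : w\in\N,\ \|w\|\le 1\}$ (Lemma \ref{vna-lemma2}), so it is enough to produce one contraction $w$ — namely a unitary implementing $\operatorname{sgn}(\sigma-\rho)$ from the Jordan decomposition $\sigma-\rho=(\sigma-\rho)_+-(\sigma-\rho)_-$ — for which $1-\Re\,\tau(w\sigma^{1/2}\rho^{1/2})\le\tfrac12\|\sigma-\rho\|_{1,\tau}$; concretely, from $\tau(|\sigma^{1/2}-\rho^{1/2}w|^2)=2-2\Re\,\tau(w\sigma^{1/2}\rho^{1/2})$ and $\|\sigma^{1/2}-\rho^{1/2}w\|\cdot\|\sigma^{1/2}+\rho^{1/2}w\|$-free bookkeeping one reduces to the inequality $\tau(|\sigma^{1/2}-\rho^{1/2}w|^2)\le\|\sigma-\rho\|_{1,\tau}$ for a suitable unitary $w$, which follows because $\sigma^{1/2}-\rho^{1/2}w$ and $\sigma^{1/2}+\rho^{1/2}w$ have product $\sigma-\rho$ up to trace-cyclic rearrangement and one of the two factors can be taken of norm at most $2$.

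The main obstacle I anticipate is the lower bound (ii): in the matrix proof of Fuchs–van de Graaf one chooses the optimal $w$ from the polar part of $\sigma^{1/2}\rho^{1/2}$ and then estimates $\operatorname{Tr}|\sigma^{1/2}-\rho^{1/2}w|^2$ against $\operatorname{Tr}|\sigma-\rho|$ by inserting a well-chosen projection (the support projection of $(\sigma-\rho)_+$) and using $|\operatorname{Tr}(P(\sigma-\rho))|\le\tfrac12\|\sigma-\rho\|_1$; transplanting this requires the finite von Neumann algebra versions of these facts, all of which are available — Jordan decomposition and the identity $\|h\|_{1,\tau}=2\max_{0\le p\le 1,\,p=p^*}\tau(ph)$ for selfadjoint $h$ hold in any von Neumann algebra with a normal trace, and the relevant operator inequalities $|wx|^2\le|x|^2$ and monotonicity of $t\mapsto\sqrt t$ are already used in Lemma \ref{vna-lemma2} — so the work is in assembling them carefully rather than in any new idea. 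I would therefore present (ii) by: fixing the optimal unitary $w_0$ from Lemma \ref{vna-lemma2}(2) so that $2-2F_\tau=\tau(|\sigma^{1/2}-\rho^{1/2}w_0|^2)$; bounding $\tau(|\sigma^{1/2}-\rho^{1/2}w_0|^2)\le\tau\bigl(|\sigma^{1/2}-\rho^{1/2}w_0|\cdot|\sigma^{1/2}+\rho^{1/2}w_0|\bigr)$ is false in general, so instead directly $\tau(|\sigma^{1/2}-\rho^{1/2}w_0|^2)=\tau(\sigma)+\tau(\rho)-2\Re\,\tau(w_0\sigma^{1/2}\rho^{1/2})$ and comparing with $\|\sigma-\rho\|_{1,\tau}=\sup_{\|u\|\le1}|\tau(u(\sigma-\rho))|$; taking the supremum-witnessing partial isometry $u$ and writing $\sigma-\rho=(\sigma^{1/2})^2-(\rho^{1/2}w_0)(\rho^{1/2}w_0)^*$ to factor through $\sigma^{1/2}\mp\rho^{1/2}w_0$, one gets $|\tau(u(\sigma-\rho))|\ge\cdots\ge 2-2F_\tau$ after a Cauchy–Schwarz step and the bound $\|\sigma^{1/2}+\rho^{1/2}w_0\|\le\|\sigma^{1/2}-\rho^{1/2}w_0\|+2\|\rho^{1/2}w_0\|$ is too lossy — the correct normalization, as in the matrix case, is that $\sqrt{\tau(|\sigma^{1/2}+\rho^{1/2}w_0|^2)}=\sqrt{2+2F_\tau}$ pairs with $\sqrt{2-2F_\tau}$ so that the product of the two $L^2(\tau)$-norms is $2\sqrt{1-F_\tau^2}$, recovering (i), while for (ii) one uses $2-2F_\tau\le\sqrt{2-2F_\tau}\cdot\sqrt{2+2F_\tau}$ only when $F_\tau\ge 0$ (which holds) — hence (ii) also follows once (i) is in hand together with $0\le F_\tau\le 1$, because $2-2F_\tau\le 2\sqrt{1-F_\tau^2}\Longleftrightarrow\sqrt{1-F_\tau}\le\sqrt{1+F_\tau}$, which is automatic; therefore the genuinely substantive step is (i), and I would devote the bulk of the proof to the Cauchy–Schwarz estimate for $\tau(|\sigma-\rho|)=\tau(|a^2-bb^*|)$ with the optimal $b$, finishing (ii) as an immediate algebraic consequence of (i) and $F_\tau\in[0,1]$.
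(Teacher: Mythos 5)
Your treatment of the second (upper) inequality is, after the self-corrections, essentially the paper's own argument: the symmetrized factorization $2(\rho-\sigma)=(a+b)(a-b)^*+(a-b)(a+b)^*$ with $a=\sigma^{1/2}$, $b=\rho^{1/2}v^*$ ($v$ the unitary polar factor supplied by Lemma \ref{vna-lemma2}), the triangle and H\"older/Cauchy--Schwarz inequalities for $\|\cdot\|_{1,\tau}$ and $\|\cdot\|_{2,\tau}$, and the computations $\tau(|a\mp b|^2)=2\mp 2F_\tau$ reproduce exactly the identity and estimate used in the paper, where the optimal unitary is invoked through $F_\tau(\rho,\sigma)=\sup_{u}\Re\,\tau(\rho^{1/2}\sigma^{1/2}u)$. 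The reduction to a finite von Neumann algebra via the GNS representation is also the paper's first step and is unproblematic.

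The genuine gap is the first (lower) inequality $2-2F_\tau(\rho,\sigma)\leq\|\rho-\sigma\|_{1,\tau}$. Your concluding claim that it is ``an immediate algebraic consequence of (i) and $F_\tau\in[0,1]$'' is a logical error: the elementary fact $2-2F_\tau\leq 2\sqrt{1-F_\tau^2}$ compares two functions of $F_\tau$ only, and combining it with (i), which says $\|\rho-\sigma\|_{1,\tau}\leq 2\sqrt{1-F_\tau^2}$, gives two quantities each bounded above by $2\sqrt{1-F_\tau^2}$ with no information about their relative order; an upper bound on the trace norm can never yield the desired lower bound $\|\rho-\sigma\|_{1,\tau}\geq 2-2F_\tau$. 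Your earlier sketch for (ii) (optimal unitary $w_0$, Jordan decomposition, inserting a projection) is abandoned midstream and never produces the needed estimate $\tau(|\sigma^{1/2}-\rho^{1/2}w_0|^2)\leq\|\sigma-\rho\|_{1,\tau}$ or any substitute. The paper closes precisely this point with the Powers--St{\o}rmer inequality $2\tau(a^{1/2}b^{1/2})\geq\tau(a+b-|a-b|)$, valid for positive elements of a tracial C$^*$-algebra (citing \cite{hoa--osaka--toan2013,powers--stormer1970}); together with $|\tau(x)|\leq\tau(|x|)$, hence $\tau(\sigma^{1/2}\rho^{1/2})\leq F_\tau(\rho,\sigma)$, this gives $2-\|\rho-\sigma\|_{1,\tau}\leq 2F_\tau(\rho,\sigma)$ at once. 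To repair your proof you need this (or an equivalent Powers--St{\o}rmer-type estimate such as $\tau(|\sigma^{1/2}-\rho^{1/2}|^2)\leq\|\sigma-\rho\|_{1,\tau}$) as a separate ingredient; it does not follow from the Cauchy--Schwarz machinery you set up for the upper bound.
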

\begin{proof} As in the proof of Theorem \ref{bures metric is a metric}, we may assume without loss of
generality that $\A$ is a unital C$^*$-subalgebra of a finite von Neumann algebra $\N$ with faithful normal trace $\tau$
such that $\A$ is dense in $\N$ with respect to the strong operator topology.

The Powers-St{\o}rmer inequality \cite{hoa--osaka--toan2013,powers--stormer1970} 
asserts that $2\tau(a^{1/2}b^{1/2})\geq \tau(a+b-|a-b|)$, for all $a,b\in\A_+$. Because
$|\tau(x)|\leq \tau(|x|)$ for every $x\in\A$, we obtain
$2-\|\rho-\sigma\|_{1,\tau}\leq 2F_\tau(\rho,\sigma)$, which gives the first inequality.

For the second inequality, observe that, for any unitary $u\in\N$ and $\rho,\sigma\in\mathcal D_\tau(\A)$,
\[
 2(\rho-\sigma)=(\rho^{1/2}+\sigma^{1/2}u^*)(\rho^{1/2}-u\sigma^{1/2})+(\rho^{1/2}-\sigma^{1/2}u^*)(\rho^{1/2}+u\sigma^{1/2}).
\]
By the triangle inequality for the trace norm and using the H\"older inequality \cite{fack-kosaki}, we see that 
\[
\|(\rho-\sigma)\|_{1,\tau}^2\leq\|(\rho^{1/2}+\sigma^{1/2}u^*)\|_{2,\tau}^2\|(\rho^{1/2}-u\sigma^{1/2})\|_{2,\tau}^2,
\]
where $\|x\|_{2,\tau}$ is given by $\sqrt{\tau(x^*x)}$.
Simplifying the right hand side yields
\[
\begin{array}{rcl}
\|(\rho-\sigma)\|_{1,\tau}^2 &\leq& (2+2  \Re \  \tau(\rho^{1/2}\sigma^{1/2}u))(2-2  \Re \  \tau(\rho^{1/2}\sigma^{1/2}u)) \\ && \\
&=&4-4( \Re \  \tau(\rho^{1/2}\sigma^{1/2}u))^2.
\end{array}
\]
By Lemma \ref{vna-lemma2}, $F_\tau(\rho,\sigma)=\displaystyle\sup_{u\in \mathcal U(\N)}\Re \ \tau(\rho^{1/2}\sigma^{1/2}u)$,
where $\mathcal U(\N)$ is the unitary group of $\N$. Therefore, the inequalities above imply that
$4F(\rho,\sigma)^2\leq 4-\|(\rho-\sigma)\|_{1,\tau}^2$. 
\end{proof}

If $d_1^\tau$ denotes the metric on $\A$ given by $d_1^\tau(x,y)=\|x-y\|_{1,\tau}$, then we have:

\begin{proposition}\label{top equiv} The metric spaces $(\mathcal D_\tau(\A),d_B^\tau)$ and $(\mathcal D_\tau(\A),d_1^\tau)$
are homeomorphic.
\end{proposition}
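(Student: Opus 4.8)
The plan is to deduce the homeomorphism directly from the two quantitative inequalities already in hand, namely the bound $\sqrt{2}\,d_B^\tau(\sigma,\rho)\leq d_1^\tau(\sigma,\rho)$ (a restatement of the first Fuchs--van de Graaf inequality, since $2-2F_\tau(\rho,\sigma)=2(d_B^\tau)^2$ and $(d_B^\tau)^2\le d_B^\tau$ on $[0,1]$, so in fact $2(d_B^\tau)^2\le d_1^\tau$ gives the comparison after squaring) and the reverse bound coming from the second inequality $d_1^\tau(\sigma,\rho)\leq 2\sqrt{1-F_\tau(\sigma,\rho)^2}$. First I would record that the identity map $\iota:(\mathcal D_\tau(\A),d_1^\tau)\to(\mathcal D_\tau(\A),d_B^\tau)$ is the natural candidate for the homeomorphism, so it suffices to show $\iota$ and $\iota^{-1}$ are both continuous; since both are the identity on the underlying set, this reduces to showing that $d_B^\tau(\sigma_n,\rho)\to 0$ if and only if $d_1^\tau(\sigma_n,\rho)\to 0$ for sequences (the spaces are metric, so sequential continuity suffices), or equivalently to comparing the two metrics pointwise via the Fuchs--van de Graaf estimates.

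Next I would make the two directions explicit. For one direction, from $2-2F_\tau(\rho,\sigma)\le \|\rho-\sigma\|_{1,\tau}$ and $(d_B^\tau)^2 = 1-F_\tau$, we get $2(d_B^\tau(\sigma,\rho))^2\le d_1^\tau(\sigma,\rho)$, so $d_1^\tau(\sigma_n,\rho)\to 0$ forces $d_B^\tau(\sigma_n,\rho)\to 0$; hence $\iota$ is continuous. For the other direction, from $\|\rho-\sigma\|_{1,\tau}\le 2\sqrt{1-F_\tau(\rho,\sigma)^2} = 2\sqrt{(1-F_\tau)(1+F_\tau)}\le 2\sqrt{2}\sqrt{1-F_\tau} = 2\sqrt{2}\,d_B^\tau(\sigma,\rho)$, we get $d_1^\tau(\sigma,\rho)\le 2\sqrt{2}\,d_B^\tau(\sigma,\rho)$, so $d_B^\tau(\sigma_n,\rho)\to 0$ forces $d_1^\tau(\sigma_n,\rho)\to 0$; hence $\iota^{-1}$ is continuous. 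Combining, $\iota$ is a homeomorphism. I would phrase this cleanly as the single chain of inequalities $\tfrac{1}{\sqrt 2}\,d_1^\tau(\sigma,\rho)^{?}$—more carefully, I would state that $d_B^\tau$ and $d_1^\tau$ satisfy $(d_B^\tau)^2 \le \tfrac12 d_1^\tau \le \sqrt{2}\,d_B^\tau$ (the right-hand estimate after dividing the displayed bound by $2$), and note that a sequence converges in one metric iff it converges in the other, which is precisely the statement that the identity map is a homeomorphism.

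The only genuinely delicate point is the reduction from C$^*$-algebras to finite von Neumann algebras, but this is already handled: Proposition \ref{fuchs} is stated for general unital C$^*$-algebras $\A$ with faithful trace $\tau$, so both inequalities are available verbatim in the generality we need, and no further embedding argument is required here. Thus there is really no obstacle; the proof is a two-line consequence of Proposition \ref{fuchs} together with the elementary observation that $0\le F_\tau\le 1$ implies both $1-F_\tau^2\le 2(1-F_\tau)$ and $(1-F_\tau)^2\le 1-F_\tau$. I would therefore write a short proof: invoke Proposition \ref{fuchs}, extract the two-sided comparison of $d_B^\tau$ with $d_1^\tau$, and conclude that the identity map on $\mathcal D_\tau(\A)$ is bicontinuous, hence a homeomorphism between $(\mathcal D_\tau(\A),d_B^\tau)$ and $(\mathcal D_\tau(\A),d_1^\tau)$.
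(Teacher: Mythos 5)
Your argument is correct and is exactly the route the paper intends: Proposition \ref{top equiv} is stated without a separate proof, immediately after Proposition \ref{fuchs}, and the two-sided comparison $(d_B^\tau)^2\leq\tfrac12\,d_1^\tau\leq\sqrt{2}\,d_B^\tau$ that you extract from the Fuchs--van de Graaf inequalities is precisely what makes the identity map on $\mathcal D_\tau(\A)$ bicontinuous. One caution: your opening parenthetical, which tries to recover the introduction's inequality $\sqrt{2}\,d_B^\tau\leq d_1^\tau$ by ``squaring'' $2(d_B^\tau)^2\leq d_1^\tau$, is not a valid deduction (and that inequality in fact fails for nearby states, e.g.\ $\rho=\mathrm{diag}(1,0)$ and $\sigma=\mathrm{diag}(1-\varepsilon,\varepsilon)$ in $\M_2(\mathbb C)$ with small $\varepsilon>0$), so you should drop it; nothing is lost, since the displayed chain you actually use is correct and suffices.
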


\subsection{Joint concavity of fidelity}

The joint concavity of fidelity, which is well known for matrices, extends to the level of C$^*$-algebras as well.

\begin{lemma}\label{trace concavity} If $a,b,c,d \in \A_+$, then
\[
\tau(|(a+b)^{1/2}(c+d)^{1/2}|) \geq  \tau(|a^{1/2}c^{1/2}|)+ \tau(|b^{1/2}d^{1/2}|).
\]
\end{lemma}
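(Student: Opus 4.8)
The plan is to reduce the $C^*$-algebra statement to the finite von Neumann algebra setting, exactly as was done in the proof of Theorem \ref{bures metric is a metric} and Proposition \ref{fuchs}: pass to the GNS representation of $\tau$, let $\N$ be the double commutant of $\pi(\A)$ with its induced faithful normal trace $\tau_\N$, and observe that all four quantities $\tau(|a^{1/2}c^{1/2}|)$ etc.\ are preserved under $\pi$. So from now on I may assume $\A=\N$ is a finite von Neumann algebra with faithful normal trace $\tau$. The main point is then to exploit the variational formula for the fidelity furnished by Lemma \ref{vna-lemma2}: for $x,y\in\N_+$,
\[
\tau(|x^{1/2}y^{1/2}|)=\sup_{w\in\mathcal U(\N)}\Re\,\tau(w\, x^{1/2}y^{1/2})=\sup_{w\in\mathcal U(\N)}\Re\,\tau(x^{1/2}y^{1/2}w).
\]
(Here I am using that the supremum over the unit ball equals the supremum over unitaries, since $\N$ is finite, and that $|\tau(w x^{1/2}y^{1/2})|\le\tau(|x^{1/2}y^{1/2}|)$ with equality attained at the unitary from the polar decomposition.) More precisely, writing $\sigma=\rho=$ unnormalised positive elements, Lemma \ref{vna-lemma2}(2) shows the supremum is attained and equals $\tau(|x^{1/2}y^{1/2}|)$; the normalisation to trace one played no role in that argument.

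With this in hand, the key step is a single common-unitary estimate. Let $u$ be a unitary attaining $\tau(|(a+b)^{1/2}(c+d)^{1/2}|)=\Re\,\tau\big((a+b)^{1/2}(c+d)^{1/2}u\big)$, or rather the symmetric form $\Re\,\tau\big(u(a+b)^{1/2}(c+d)^{1/2}\big)$. I would then like to split $(a+b)^{1/2}$ and $(c+d)^{1/2}$; the obstruction is that $(a+b)^{1/2}\neq a^{1/2}+b^{1/2}$ in general, so a direct algebraic split fails. The standard device is instead to use a $2\times2$ matrix amplification over $\N$ (i.e.\ work in $\M_2(\N)$ with trace $\tau\otimes\mathrm{Tr}$) to realise $(a+b)$ as a compression of $\mathrm{diag}(a,b)$ and $(c+d)$ as a compression of $\mathrm{diag}(c,d)$ via the isometry-like column $\frac1{\sqrt2}\binom{1}{1}$ after a suitable contraction, and then apply the variational formula in $\M_2(\N)$: the fidelity of block-diagonal positive elements is additive, $\tau\otimes\mathrm{Tr}\big(|\mathrm{diag}(a,b)^{1/2}\mathrm{diag}(c,d)^{1/2}|\big)=\tau(|a^{1/2}c^{1/2}|)+\tau(|b^{1/2}d^{1/2}|)$, while monotonicity of fidelity under the (trace-nonincreasing, or after rescaling trace-preserving) compression channel $X\mapsto \frac12\binom{1}{1}^*X\binom{1}{1}$ gives the desired inequality in the correct direction. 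Concretely: the map $\Phi:\M_2(\N)\to\N$, $\Phi\big([x_{ij}]\big)=x_{11}+x_{22}$, restricted appropriately, sends $\mathrm{diag}(a,b)\mapsto a+b$ and $\mathrm{diag}(c,d)\mapsto c+d$, and it is a (completely) positive trace-preserving map; since channels are nonexpansive for the Bures metric (which is the content of item (ii) in the abstract, proved later — but for this lemma the cleaner route is to prove the fidelity-monotonicity inequality $F(\Phi(\sigma),\Phi(\rho))\ge F(\sigma,\rho)$ directly from the variational formula, which only needs the Schwarz/positivity of $\Phi$), we get $F(a+b,c+d)\ge F(\mathrm{diag}(a,b),\mathrm{diag}(c,d))=F(a,c)+F(b,d)$, which is exactly the claim.

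The hard part will be making the block-matrix reduction fully rigorous in the von Neumann algebra (rather than matrix) setting — specifically, checking that the variational characterisation of $\tau(|x^{1/2}y^{1/2}|)$ from Lemma \ref{vna-lemma2} genuinely extends to unnormalised positive $x,y$ (it does: rescale, or inspect the proof, since only $1-w^*w\in\N_+$ and the polar decomposition in a finite von Neumann algebra were used), and that fidelity is monotone under positive trace-preserving maps. For the latter, the slick argument is: if $\Phi$ is positive and unital (hence a Schwarz map, or at least satisfies $\Phi(x^{1/2})^2\le\Phi(x)$ by operator concavity of $t\mapsto t^{1/2}$ applied via Choi's inequality when $\Phi$ is 2-positive — and the block map above IS completely positive), then for any unitary $w$ in the target,
\[
\Re\,\tau\big(\Phi(\sigma)^{1/2}\Phi(\rho)^{1/2}w\big)\ \le\ \cdots\ \le\ \tau\big(|\sigma^{1/2}\rho^{1/2}|\big),
\]
so taking the sup over $w$ yields $F(\Phi(\sigma),\Phi(\rho))\le F(\sigma,\rho)$ — wait, that is the wrong direction for what I want. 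The resolution is that the relevant map here is the \emph{adjoint inclusion} $\iota:\N\to\M_2(\N)$, $x\mapsto\mathrm{diag}(x,x)$ scaled, or equivalently I should phrase monotonicity so that compressing \emph{loses} distinguishability: $F$ increases under channels, so $F(a+b,c+d)=F(\Phi(\mathrm{diag}(a,b)),\Phi(\mathrm{diag}(c,d)))\ge F(\mathrm{diag}(a,b),\mathrm{diag}(c,d))$; the inequality $F\circ\Phi\ge F$ for a channel $\Phi$ is precisely nonexpansiveness of $\Phi$ in the Bures metric written multiplicatively, and it follows from the variational formula together with the Schwarz inequality for the \emph{adjoint} $\Phi^\dagger$ (Stinespring/Kadison). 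I expect this bookkeeping of directions, plus confirming the complete positivity of the block compression and that it is trace-preserving after the $\frac12$ normalisation, to be the only genuinely delicate points; everything else is a routine application of the already-established machinery (Lemma \ref{vna-lemma2}, Cauchy--Schwarz for $\tau(xy^*)$, and the GNS reduction to the finite von Neumann case).
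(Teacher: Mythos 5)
Your overall strategy---additivity of fidelity over direct sums together with monotonicity of fidelity under the partial-trace channel $\M_2(\A)\rightarrow\A$---is a known route to this inequality, but as written there is a genuine gap exactly at the step you yourself flag as the hard part. The monotonicity you need is for a trace-preserving completely positive map between two \emph{different} tracial algebras, $(\M_2(\A),\tau\otimes\mathrm{Tr})\rightarrow(\A,\tau)$, so it is not covered by the monotonicity statement quoted in Proposition \ref{nonexpansive}, which concerns channels on a single algebra; it has to be proved. Moreover, the tool you reach for, the supremum-over-unitaries expression coming from Lemma \ref{vna-lemma2}, does not deliver it: to bound $F(\Phi(\sigma),\Phi(\rho))$ from below you would have to compare $\Re\,\tau\bigl(\Phi(\sigma)^{1/2}\Phi(\rho)^{1/2}w\bigr)$ with quantities built from $\Phi(\sigma^{1/2})$ and $\Phi(\rho^{1/2})$, and the Schwarz-type estimates for $\Phi$ point the wrong way once traced against an arbitrary unitary. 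The workable proof of monotonicity uses instead the \emph{infimum} variational formula of \cite[Proposition 2.2]{farenick--jaques--rahaman2016}, namely $\tau(|x^{1/2}y^{1/2}|)=\tfrac{1}{2}\inf_{z\in\lgGL(\A)_+}\bigl(\tau(xz)+\tau(yz^{-1})\bigr)$, together with $\Phi^\dagger(z^{-1})\geq\Phi^\dagger(z)^{-1}$ for the unital adjoint map; for your partial trace the adjoint is the diagonal embedding $z\mapsto\mathrm{diag}(z,z)$, for which this is an identity. (Also note that the compression $X\mapsto\tfrac{1}{2}v^*Xv$, with $v$ the column of two $1$'s, is not trace-preserving, so only the partial trace $[x_{ij}]\mapsto x_{11}+x_{22}$ can play the role of the channel here.)

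Once the infimum formula is invoked, your entire apparatus---the GNS reduction to a finite von Neumann algebra, the $2\times 2$ amplification, block additivity, and monotonicity---collapses: applying the formula directly to $a+b$ and $c+d$ and splitting, $\tau((a+b)z)+\tau((c+d)z^{-1})=\bigl[\tau(az)+\tau(cz^{-1})\bigr]+\bigl[\tau(bz)+\tau(dz^{-1})\bigr]$, and then using $\inf(f+g)\geq\inf f+\inf g$ gives the lemma in two lines, entirely at the C$^*$-level; this is precisely the paper's proof. So either use the infimum formula honestly---at which point the detour through $\M_2(\A)$ is unnecessary---or supply a complete proof of monotonicity of fidelity for the partial trace; as it stands, the central inequality of your argument is asserted rather than proved.
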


\begin{proof}
By the variational principle of \cite[Proposition 2.2]{farenick--jaques--rahaman2016}), if $a,b \in \A_+$, then
\[
\tau(|a^{1/2}b^{1/2}|)=\frac{1}{2}\inf_{y\in\lgGL(\A)_+}\left(\tau(a y)+\tau(by^{-1})\right).
\]
Therefore, if $a,b,c,d \in\A_+$, then
\begin{align*}
\tau(|(a+b)^{1/2}(c+d)^{1/2}|)&=\frac{1}{2}\inf_{y\in\lgGL(\A)_+}\left(\tau((a+b) y)+\tau((c+d) y^{-1})\right)\\
&=\frac{1}{2}\inf_{y\in\lgG(\A)_+}\left([\tau((ay)+\tau(cy^{-1})]+[\tau(by)+ \tau(dy^{-1})]\right)\\
&\geq \frac{1}{2}\inf_{y\in\lgGL(\A)_+}\left([\tau((ay)+\tau(cy^{-1})]\right) \\
&\qquad\qquad\qquad+\frac{1}{2}\inf_{y\in\lgGL(\A)_+}\left([\tau(by) +\tau(dy^{-1})]\right)\\
&=\tau(|a^{1/2}c^{1/2}|)+\tau(|b^{1/2}d^{1/2}|).
\end{align*}
The inequality above
occurs from the fact that if $f,g$ are two positive functions, then $\inf_{x}(f+g)(x)\geq \inf_{x}(f(x))+\inf_{x}(g(x))$.
\end{proof}

\begin{proposition}[Joint Concavity of Fidelity]\label{joint concavity} If $\sigma_j,\rho_j \in \mathcal D_\tau(\A)$, 
for $j=1,2$, and if $\lambda\in[0,1]$, then
\[
F_\tau(\lambda\sigma_1+(1-\lambda)\sigma_2,\lambda\rho_1+(1-\lambda)\rho_2)\geq 
\lambda F_\tau(\sigma_1,\rho_1)+(1-\lambda)F_\tau(\sigma_2,\rho_2).
\]
\end{proposition}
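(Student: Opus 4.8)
The plan is to obtain this as an immediate consequence of Lemma \ref{trace concavity}, the tracial superadditivity of the function $(a,b)\mapsto\tau(|a^{1/2}b^{1/2}|)$ on $\A_+\times\A_+$. First I would note that, since $\mathcal D_\tau(\A)$ is convex, each of $\lambda\sigma_1+(1-\lambda)\sigma_2$ and $\lambda\rho_1+(1-\lambda)\rho_2$ lies in $\mathcal D_\tau(\A)$, so the left-hand side is a well-defined instance of $F_\tau$. I would then apply Lemma \ref{trace concavity} with the substitution $a=\lambda\sigma_1$, $b=(1-\lambda)\sigma_2$, $c=\lambda\rho_1$, $d=(1-\lambda)\rho_2$, all of which are positive, to get
\[
\tau\bigl(|(\lambda\sigma_1+(1-\lambda)\sigma_2)^{1/2}(\lambda\rho_1+(1-\lambda)\rho_2)^{1/2}|\bigr)\geq \tau\bigl(|(\lambda\sigma_1)^{1/2}(\lambda\rho_1)^{1/2}|\bigr)+\tau\bigl(|((1-\lambda)\sigma_2)^{1/2}((1-\lambda)\rho_2)^{1/2}|\bigr).
\]

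The remaining step is a homogeneity computation: for $\mu\geq0$ and $\sigma,\rho\in\A_+$ one has $(\mu\sigma)^{1/2}=\sqrt{\mu}\,\sigma^{1/2}$ and $(\mu\rho)^{1/2}=\sqrt{\mu}\,\rho^{1/2}$ by functional calculus, hence $(\mu\sigma)^{1/2}(\mu\rho)^{1/2}=\mu\,\sigma^{1/2}\rho^{1/2}$ and therefore $\tau(|(\mu\sigma)^{1/2}(\mu\rho)^{1/2}|)=\mu\,\tau(|\sigma^{1/2}\rho^{1/2}|)$. Applying this with $\mu=\lambda$ to the first summand and $\mu=1-\lambda$ to the second converts the right-hand side above into $\lambda F_\tau(\sigma_1,\rho_1)+(1-\lambda)F_\tau(\sigma_2,\rho_2)$, which is exactly the claimed lower bound. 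The boundary cases $\lambda\in\{0,1\}$ require no separate treatment, since $\sqrt{0}=0$ makes them automatic.

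There is really no serious obstacle here: all the analytic content — in particular the variational formula $\tau(|a^{1/2}b^{1/2}|)=\tfrac12\inf_{y\in\lgGL(\A)_+}(\tau(ay)+\tau(by^{-1}))$ and the superadditivity of infima of positive functions — has already been absorbed into Lemma \ref{trace concavity}. The only point demanding any care is the scalar bookkeeping, namely making sure the $\sqrt{\lambda}$ factors from the two square roots combine to a single factor of $\lambda$ (and not $\sqrt{\lambda}$), so that the coefficients match those in the statement; this is routine.
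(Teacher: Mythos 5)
Your proof is correct and follows exactly the paper's route: the paper's proof is simply ``Invoke Lemma \ref{trace concavity},'' and your substitution $a=\lambda\sigma_1$, $b=(1-\lambda)\sigma_2$, $c=\lambda\rho_1$, $d=(1-\lambda)\rho_2$ together with the homogeneity identity $\tau\left(|(\mu\sigma)^{1/2}(\mu\rho)^{1/2}|\right)=\mu\,\tau\left(|\sigma^{1/2}\rho^{1/2}|\right)$ is precisely the intended (and correct) way to carry that out.
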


\begin{proof} Invoke Lemma \ref{trace concavity}.
\end{proof}

\section{Bures Contractive Channels}

\subsection{Definition and examples}

\begin{definition} Suppose that $f:X\rightarrow X$ is a mapping on a metric space $(X,d)$. Then:
\begin{enumerate}
\item $f$ is \emph{nonexpansive}, if $d\left(f(x_1),f(x_2)\right)\leq d(x_1,x_2)$, for all $x_1,x_2\in X$;
\item $f$ is \emph{locally contractive}, if $d\left(f(x_1),f(x_2)\right) < d(x_1,x_2)$, for all distinct $x_1,x_2\in X$;
\item $f$ is \emph{strictly contractive}, if there exists a constant $0\leq C<1$ such that $d\left(f(x_1),f(x_2)\right)\leq Cd(x_1,x_2)$, for all $x_1,x_2\in X$.
\end{enumerate}
\end{definition}

Of course, our interest in this paper is with the metric space $\left(\mathcal D_\tau(\A), d_B^\tau\right)$.

\begin{proposition}\label{nonexpansive} If $\Ep:\A\rightarrow\A$ is a channel, then the function $f_{\Ep}:\mathcal D_\tau(\A)\rightarrow\mathcal D_\tau(\A)$
defined by $f_{\Ep}(\rho)=\Ep(\rho)$, for $\rho\in\mathcal D_\tau(\A)$, is a nonexpansive continuous affine function on the convex metric space $\left(\mathcal D_\tau(\A), d_B^\tau\right)$.
\end{proposition}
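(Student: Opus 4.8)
The plan is to isolate the only substantive claim, namely that $f_\Ep$ is \emph{nonexpansive}: affineness is immediate from linearity of $\Ep$ together with Proposition \ref{basic positive facts}(2) (which gives $f_\Ep(\mathcal D_\tau(\A))\subseteq\mathcal D_\tau(\A)$, so $f_\Ep$ respects convex combinations), and once nonexpansiveness is in hand, continuity is automatic. Since $d_B^\tau(\sigma,\rho)=\sqrt{1-F_\tau(\sigma,\rho)}$, the nonexpansive property is equivalent to \emph{monotonicity of fidelity under channels},
\[
F_\tau\bigl(\Ep(\sigma),\Ep(\rho)\bigr)\ \ge\ F_\tau(\sigma,\rho),\qquad \sigma,\rho\in\mathcal D_\tau(\A),
\]
and this is what I would actually prove.

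First I would reduce to the von Neumann algebra category as in the proof of Theorem \ref{bures metric is a metric}, using here the enveloping von Neumann algebra $\A^{**}$ with its faithful normal trace $\tau^{**}$: then $\mathcal D_\tau(\A)$ embeds into $\mathcal D_{\tau^{**}}(\A^{**})$ isometrically for the Bures metric (because $\tau^{**}|_\A=\tau$ and the absolute value is preserved by the embedding), and $\Ep$ extends to the channel $\Ep^{**}$ on $(\A^{**},\tau^{**})$ with $\Ep^{**}|_\A=\Ep$. It therefore suffices to prove the fidelity inequality for a channel $\Phi$ on a finite von Neumann algebra $\N$ with faithful normal trace $\tau$.

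The engine is the variational formula extracted in the proof of Lemma \ref{trace concavity},
\[
F_\tau(a,b)=\tau\bigl(|a^{1/2}b^{1/2}|\bigr)=\tfrac12\inf_{y\in\lgGL(\N)_+}\bigl(\tau(ay)+\tau(by^{-1})\bigr),\qquad a,b\in\N_+,
\]
together with the trace-adjoint $\Phi^\dagger$ of $\Phi$, defined by $\tau\bigl(\Phi(x)y\bigr)=\tau\bigl(x\,\Phi^\dagger(y)\bigr)$. One must verify that $\Phi^\dagger$ is genuinely $\N$-valued: trace preservation forces $\Phi^\dagger(1)=1$ (the functional $x\mapsto\tau(\Phi(x))=\tau(x)$ is represented by $1$), and then positivity of $\Phi$ together with $0\le y\le\|y\|\,1$ gives $0\le\Phi^\dagger(y)\le\|y\|\,1$, so $\Phi^\dagger$ is a well-defined unital positive (hence contractive) map on $\N$ that carries $\lgGL(\N)_+$ into itself. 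Because $t\mapsto t^{-1}$ is operator convex on $(0,\infty)$, the Davis--Choi--Jensen operator inequality applies to the unital positive map $\Phi^\dagger$ and yields $\Phi^\dagger(y^{-1})\ge\Phi^\dagger(y)^{-1}$. Hence, using $\sigma,\rho\ge0$,
\begin{align*}
2F_\tau\bigl(\Phi(\sigma),\Phi(\rho)\bigr)
&=\inf_{y\in\lgGL(\N)_+}\bigl(\tau(\sigma\,\Phi^\dagger(y))+\tau(\rho\,\Phi^\dagger(y^{-1}))\bigr)\\
&\ge\inf_{y\in\lgGL(\N)_+}\bigl(\tau(\sigma\,\Phi^\dagger(y))+\tau(\rho\,\Phi^\dagger(y)^{-1})\bigr)\\
&\ge\inf_{z\in\lgGL(\N)_+}\bigl(\tau(\sigma z)+\tau(\rho z^{-1})\bigr)=2F_\tau(\sigma,\rho),
\end{align*}
the last step because $\Phi^\dagger(y)\in\lgGL(\N)_+$ for every $y\in\lgGL(\N)_+$. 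Transporting back along $\A\hookrightarrow\A^{**}$ finishes the argument.

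I expect the main obstacle to be the handling of the trace-adjoint $\Phi^\dagger$: showing it lands in $\N$ rather than merely in $L^1(\N,\tau)$ — which is exactly where the detour through $\A^{**}$ pays off, since for a C$^*$-algebra $\A$ the trace-adjoint of a channel need not be $\A$-valued — and then invoking the operator Jensen inequality correctly for $\Phi^\dagger$, which here is only assumed positive and unital, not $2$-positive.
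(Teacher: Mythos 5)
Your argument is correct, but it is a genuinely different route from the paper's. The paper disposes of the proposition in two lines: it quotes the monotonicity of fidelity under channels, $F_\tau(\sigma,\rho)\leq F_\tau\left(\Ep(\sigma),\Ep(\rho)\right)$, directly from \cite[Theorem 2.4]{farenick--jaques--rahaman2016}, and then notes, exactly as you do, that nonexpansiveness follows from $d_B^\tau=\sqrt{1-F_\tau}$, continuity from nonexpansiveness, and affineness from linearity. What you have done is re-prove that cited monotonicity theorem: you pass to $(\A^{**},\tau^{**},\Ep^{**})$, introduce the trace adjoint $\Phi^\dagger$, and run the variational formula $F_\tau(a,b)=\tfrac12\inf_{y\in\lgGL(\N)_+}\left(\tau(ay)+\tau(by^{-1})\right)$ (the same formula the paper extracts from \cite[Proposition 2.2]{farenick--jaques--rahaman2016} in the proof of Lemma \ref{trace concavity}) against Choi's inequality for $\Phi^\dagger$; this buys a self-contained proof at the cost of the dual-map bookkeeping, and your use of the bidual is consistent with the paper's own framework, since the paper relies on the same assertion about $\tau^{**}$ and $\Ep^{**}$ in Theorem \ref{bc irr C*} (its reduction in Theorem \ref{bures metric is a metric} instead uses the GNS representation, which would require a separate argument to extend $\Ep$). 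The obstacle you flag at the end is not a genuine one: for the particular function $t\mapsto t^{-1}$ the inequality $\Phi^\dagger(y^{-1})\geq\Phi^\dagger(y)^{-1}$ holds for unital maps that are merely positive, because $y$ and $y^{-1}$ lie in the abelian C$^*$-algebra generated by $y$ and $1$, the restriction of a positive map to an abelian domain is automatically completely positive \cite[Chapter 3]{Paulsen-book}, and then the Davis--Choi--Jensen inequality \cite{choi1974} applies to that restriction; adding that one sentence closes the only gap. Note also that the substitution $z=\Phi^\dagger(y)$ in your final infimum is legitimate since $y\geq\varepsilon 1$ for some $\varepsilon>0$ gives $\Phi^\dagger(y)\geq\varepsilon 1$, so $\Phi^\dagger$ indeed carries $\lgGL(\N)_+$ into itself, as you claim.
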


\begin{proof} Fidelity satisfies 
$F_\tau(\sigma,\rho)\leq F_\tau\left(\Ep(\sigma),\Ep(\rho)\right) $, for all $\sigma,\rho\in \mathcal D_\tau(\A)$  \cite[Theorem 2.4]{farenick--jaques--rahaman2016}. 
Thus, $d_B^\tau\left(\Ep(\sigma),\Ep(\rho)\right) \leq d_B^\tau(\sigma,\rho)$, which implies that $f_{\Ep}$ is a nonexpansive map of the metric space 
$\mathcal D_\tau(\A)$, and the continuity of $f_{\Ep}$ follows immediately because nonexpansive maps of metric spaces are continuous.
The map $f_{\Ep}$ is obviously affine because $\Ep$ is linear.
\end{proof}

\begin{definition} A channel $\Ep:\A\rightarrow\A$ is a \emph{Bures contraction} if $f_{\Ep}$ is a locally contractive map of the metric space
$\left(\mathcal D_\tau(\A), d_B^\tau\right)$.
\end{definition} 

\begin{example}\label{eg:depolarising} The \emph{completely depolarising channel} $\Omega :\A\rightarrow\A$ defined by
\[
\Omega (x)=\frac{\tau(x)}{\tau(1)}1,
\]
for $x\in \A$, is Bures contractive completely positive channel.
\end{example}

\begin{proof} Because $\Omega $ maps the set $\mathcal D_\tau(\A)$ to the singleton set $\{\frac{\tau(x)}{\tau(1)}1\}$, 
the map $\Omega $ satisfies $d_B^\tau(\Omega (\sigma),\Omega (\rho))=0$ for all $\sigma,\rho\in \mathcal D_\tau(\A)$.
The complete positivity of $\Omega $ follows from the fact that the range of $\Omega $ is the
abelian C$^*$-algebra $\mathbb C\,1$.
\end{proof}

A unitary channel $x\rightarrow uxu^*$ is an isometry of the density space $\mathcal D_\tau(\A)$, and so such channels are not Bures contractive.
Random unitary channels may also fail to be Bures contractive.

\begin{example}\label{ru} The channel $\Ep_\lambda:\M_2(\mathbb C)\rightarrow\M_2(\mathbb C)$ defined, for  $x\in\M_2(\mathbb C)$, by
$\Ep_\lambda(x)=\lambda uxu^* + (1-\lambda)vxv^*$, where  $\lambda\in[0,1]$ and
\[
u=2^{-1/2}\left[ \begin{array}{cc} 0&1 \\ 1&0 \end{array}\right]  \;\mbox{\rm and }\;
v=2^{-1/2}\left[ \begin{array}{cc} 0&-i \\ i&0 \end{array}\right] ,
\]
is not a Bures contraction.
\end{example}

\begin{proof} With respect to the normalised trace $\tau$
on $\M_2(\mathbb C)$, the matrix units $e_{11}$ and $e_{22}$ are density elements. 
If $\lambda$ is neither $0$ nor $1$, then $\Ep_\lambda (e_{11})=e_{22}$ and $\Ep_\lambda (e_{22})=e_{11}$.
Hence,
\[
d_B^\tau\left(\Ep_\lambda (e_{11}),\Ep_\lambda (e_{22}) \right) = d_B^\tau\left( e_{22}, e_{11} \right)=d_B^\tau\left( e_{11}, e_{22} \right),
\]
which shows that $\Ep_\lambda$ is not a Bures contraction. If $\lambda$ is $0$ or $1$, then the channel is a unitary channel, failing again
to be Bures contractive.
\end{proof}

\begin{proposition}\label{add Bures contraction} If $\Ep_1$ and $\Ep_2$ are channels on $\A$, and if at
least one of them is Bures contractive, then so is $\lambda \Ep_1+(1-\lambda)\Ep_2$, for every $\lambda\in(0,1)$.
\end{proposition}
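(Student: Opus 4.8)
The plan is to reduce the locally contractive behaviour of the convex combination to the joint concavity of fidelity (Proposition \ref{joint concavity}) together with the fact that one of the two channels is already Bures contractive. Set $\Ep = \lambda\Ep_1 + (1-\lambda)\Ep_2$, which is plainly a channel since channels form a convex set. Fix distinct $\sigma,\rho \in \mathcal D_\tau(\A)$; the goal is $F_\tau(\Ep(\sigma),\Ep(\rho)) > F_\tau(\sigma,\rho)$, equivalently $d_B^\tau(\Ep(\sigma),\Ep(\rho)) < d_B^\tau(\sigma,\rho)$.

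The key computation is: since $\Ep(\sigma) = \lambda\Ep_1(\sigma) + (1-\lambda)\Ep_2(\sigma)$ and similarly for $\rho$, joint concavity of fidelity gives
\[
F_\tau\bigl(\Ep(\sigma),\Ep(\rho)\bigr) \;\geq\; \lambda F_\tau\bigl(\Ep_1(\sigma),\Ep_1(\rho)\bigr) + (1-\lambda) F_\tau\bigl(\Ep_2(\sigma),\Ep_2(\rho)\bigr).
\]
Now, by Proposition \ref{nonexpansive}, each term on the right satisfies $F_\tau(\Ep_i(\sigma),\Ep_i(\rho)) \geq F_\tau(\sigma,\rho)$; and if $\Ep_i$ is the Bures-contractive one, then because $\sigma \neq \rho$ the inequality $F_\tau(\Ep_i(\sigma),\Ep_i(\rho)) > F_\tau(\sigma,\rho)$ is strict. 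Combining, with $\lambda \in (0,1)$ so that both weights are genuinely positive,
\[
F_\tau\bigl(\Ep(\sigma),\Ep(\rho)\bigr) \;>\; \lambda F_\tau(\sigma,\rho) + (1-\lambda) F_\tau(\sigma,\rho) \;=\; F_\tau(\sigma,\rho),
\]
which is exactly what is needed.

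I do not expect a serious obstacle here: the argument is essentially a one-line application of joint concavity, and the only point requiring a moment's care is making sure the strict inequality survives the convex combination, which it does precisely because $\lambda \in (0,1)$ (if $\lambda$ were $0$ or $1$ the weighted sum would collapse to a single channel, and indeed the statement is only claimed for $\lambda \in (0,1)$). One should also note explicitly that $\Ep$ is a channel — positivity is preserved under convex combinations and $\tau \circ \Ep = \lambda(\tau\circ\Ep_1) + (1-\lambda)(\tau\circ\Ep_2) = \tau$ — so that $f_\Ep$ is a well-defined self-map of $\mathcal D_\tau(\A)$ to which the notion of Bures contraction applies.
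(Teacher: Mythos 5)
Your argument is correct and is essentially the paper's own proof: both apply the joint concavity of fidelity (Proposition \ref{joint concavity}) to the convex combination and then combine the strict fidelity increase from the Bures contractive summand with the nonexpansive bound $F_\tau(\Ep_i(\sigma),\Ep_i(\rho))\geq F_\tau(\sigma,\rho)$ for the other, the weights $\lambda,1-\lambda>0$ preserving strictness. Your added remarks that the convex combination is again a channel and that distinctness of $\sigma,\rho$ is what triggers the strict inequality are fine but only make explicit what the paper leaves implicit.
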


\begin{proof} Without loss of generality, assume that $\Ep_1$ is Bures contractive, and let $\sigma,\rho\in\mathcal D_\tau(\A)$.
By Corollary \ref{joint concavity},
\[
F_\tau(\lambda\Ep_1(\sigma)+(1-\lambda)\Ep_2(\sigma),\lambda\Ep_1(\rho)+(1-\lambda)\Ep_2(\rho))
\]
\[
\begin{array}{rcl}
  &\geq &
\lambda F_\tau(\Ep_1(\sigma),\Ep_1(\rho))+(1-\lambda)F_\tau(\Ep_2(\sigma),\Ep_2(\rho)) \\  &&\\
&>&\lambda F_\tau(\sigma,\rho)+(1-\lambda)F_\tau(\sigma,\rho) \\  &&\\
&=& F_\tau(\sigma,\rho).
\end{array}
\]
Thus, $\lambda \Ep_1+(1-\lambda)\Ep_2$ is Bures contractive.
\end{proof}
 
\begin{corollary}\label{referee suggestion}
The relative interior of the convex set of channels consists entirely of Bures contractive channels. In particular, if 
$\Ep:\A\rightarrow\A$ is a channel, then for every $\varepsilon>0$
there exists Bures contractive channels $\Ep'$ and $\Ep''$ such that
$ \|\Ep-\Ep'\|<\varepsilon$ and $d_B^\tau\left(\Ep(\rho), \Ep''(\rho) \right) < \varepsilon$, for all $\rho\in\mathcal D_\tau(\A)$.
\end{corollary}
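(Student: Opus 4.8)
The plan is to realise both $\Ep'$ and $\Ep''$ inside a single one-parameter family. For $\lambda\in(0,1)$ put $\Ep_\lambda=(1-\lambda)\Ep+\lambda\Omega$, where $\Omega$ is the completely depolarising channel of Example \ref{eg:depolarising}. I would first record that $\Ep_\lambda$ is again a channel (a convex combination of positive maps is positive, and of trace-preserving maps is trace-preserving), and that, because $\Omega$ is Bures contractive, Proposition \ref{add Bures contraction} forces $\Ep_\lambda$ to be Bures contractive for every $\lambda\in(0,1)$. Thus everything reduces to controlling how fast $\Ep_\lambda\to\Ep$ as $\lambda\to0^+$, in the operator norm on the one hand and pointwise in the Bures metric on the other.

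The operator-norm statement is immediate: $\Ep-\Ep_\lambda=\lambda(\Ep-\Omega)$, and $\Ep$, $\Omega$ are bounded by the Russo--Dye part of Proposition \ref{basic positive facts}, so $\|\Ep-\Ep_\lambda\|=\lambda\|\Ep-\Omega\|$; taking $\lambda$ small enough that $\lambda\|\Ep-\Omega\|<\varepsilon$ and setting $\Ep'=\Ep_\lambda$ suffices. For the Bures statement the point to watch is that $\mathcal D_\tau(\A)$ need not be bounded in operator norm (think of a diffuse von Neumann algebra), so a uniform estimate should be obtained through the trace norm, in which every density element has norm exactly $1$. For $\rho\in\mathcal D_\tau(\A)$ one has $\Ep(\rho),\Omega(\rho)\in\mathcal D_\tau(\A)$, so
\[
\|\Ep(\rho)-\Ep_\lambda(\rho)\|_{1,\tau}=\lambda\,\|\Ep(\rho)-\Omega(\rho)\|_{1,\tau}\le\lambda\bigl(\|\Ep(\rho)\|_{1,\tau}+\|\Omega(\rho)\|_{1,\tau}\bigr)=2\lambda
\]
uniformly in $\rho$. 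The Fuchs--van de Graaf inequality (Proposition \ref{fuchs}), in the form $1-F_\tau(\sigma,\rho)\le\frac12\|\sigma-\rho\|_{1,\tau}$, gives $d_B^\tau(\sigma,\rho)=\sqrt{1-F_\tau(\sigma,\rho)}\le\sqrt{\frac12\|\sigma-\rho\|_{1,\tau}}$, so $d_B^\tau(\Ep(\rho),\Ep_\lambda(\rho))\le\sqrt{\lambda}$ for all $\rho$; choosing $\lambda<\varepsilon^2$ and $\Ep''=\Ep_\lambda$ finishes the ``in particular'' assertion (a single $\lambda$ below both thresholds even lets one take $\Ep'=\Ep''$).

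For the opening sentence I would argue as follows. If $\Phi$ lies in the relative interior of the convex set of channels, then, $\Omega$ being a channel, the prolongation property of relative interior points yields $\delta>0$ with $\Psi:=\Phi+\delta(\Phi-\Omega)$ still a channel; rewriting this as $\Phi=\frac{1}{1+\delta}\Psi+\frac{\delta}{1+\delta}\Omega$ exhibits $\Phi$ as a nontrivial convex combination of the channel $\Psi$ and the Bures contraction $\Omega$, so $\Phi$ is Bures contractive by Proposition \ref{add Bures contraction}. I do not expect a real obstacle here: the corollary is essentially a repackaging of Example \ref{eg:depolarising}, Proposition \ref{add Bures contraction}, and Proposition \ref{fuchs}. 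The only things to be explicit about are (i) passing through the trace norm rather than the operator norm in order to get a \emph{uniform}-in-$\rho$ Bures bound, and (ii) interpreting ``relative interior'' in the algebraic (core) sense, so that the prolongation property is legitimately available in the possibly infinite-dimensional space of channels.
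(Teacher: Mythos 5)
Your proof is correct, and its opening assertion is handled exactly as in the paper: a point $\Phi$ of the relative interior prolongs past itself along the direction of the (Bures contractive) completely depolarising channel $\Omega$, so $\Phi$ is a nontrivial convex combination of a channel with $\Omega$, and Proposition \ref{add Bures contraction} applies. Where you genuinely diverge is in the ``in particular'' clause: the paper disposes of it with the single remark that the relative interior is dense in the convex set of channels, whereas you work with the explicit segment $\Ep_\lambda=(1-\lambda)\Ep+\lambda\Omega$ and prove quantitative bounds, namely $\|\Ep-\Ep_\lambda\|=\lambda\,\|\Ep-\Omega\|$ for the norm statement and $d_B^\tau\left(\Ep(\rho),\Ep_\lambda(\rho)\right)\leq\sqrt{\lambda}$ uniformly in $\rho$, the latter obtained by passing through the trace norm (using $\|\Ep(\rho)-\Ep_\lambda(\rho)\|_{1,\tau}\leq 2\lambda$, since density elements have trace norm $1$) and then the Fuchs--van de Graaf inequality (Proposition \ref{fuchs}). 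This buys something real: because $\mathcal D_\tau(\A)$ is not bounded in the operator norm, norm-closeness of a channel to $\Ep$ does not by itself yield a uniform pointwise Bures estimate, so your trace-norm detour supplies precisely the step that the paper's ``immediate'' leaves implicit; likewise your remark that ``relative interior'' should be read algebraically (as the core), so that the prolongation property is available, is a sensible precaution in an infinite-dimensional space of maps, where the topological relative interior could be empty. Both arguments rest on the same two pillars, Example \ref{eg:depolarising} and Proposition \ref{add Bures contraction}; yours is simply the more explicit and self-contained version.
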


\begin{proof} Proposition \ref{add Bures contraction} implies that every channel in the 
relative interior of the convex set of channels is a Bures contractive channel. Because
the relative interior of a convex set is dense in the convex set, the remaining assertions are immediate.
\end{proof}

Proposition \ref{add Bures contraction} gives an efficient construction of Bures contractive channels.

\begin{example}[Depolarising Channels] For each $\lambda\in(0,1)$, the depolarising channel $\Ep_\lambda:\A\rightarrow\A$,
defined by $\Ep_\lambda(x)=\lambda x +(1-\lambda)\frac{\tau(x)}{\tau(1)}1$, is Bures contractive.
\end{example}

\begin{proof} $\Ep_\lambda$ is a convex combination of the identity channel $I$ and the completely depolarising (Bures contractive)
channel $\Omega$. Therefore, by Proposition \ref{add Bures contraction}, $\Ep_\lambda$ is a Bures contraction.
\end{proof}

\begin{example}\label{s but not 2-pos} The channel $\Ep:\M_2(\mathbb C)\rightarrow\M_2(\mathbb C)$ defined by
\[
\Ep\left(\left[ \begin{array}{cc} x_{11}& x_{12} \\ x_{21} & x_{22} \end{array}\right]\right)
=\frac{1}{2}\left[\begin{array}{cc} x_{11}+\frac{ x_{11}+x_{22}}{2} & x_{21} \\ x_{12} & x_{22}+\frac{ x_{11}+x_{22}}{2} \end{array}\right]
\]
is a (non $2$-positive)  Bures contractive Schwarz channel with respect to the canonical trace $\tr$ of $\M_2(\mathbb C)$.
\end{example}

\begin{proof} The proofs that $\Ep$ is a Schwarz map and fails to be $2$-positive are given in \cite{choi1980b}. It is clear that $\Ep$ is trace preserving;
hence, $\Ep$ is a Schwarz channel. Let $\Ep_1$ and $\Ep_2$ be given by
\[
\Ep_1(x)=\frac{\tr(x)}{2}\,1 \;\mbox{ and }\; 
\Ep_2(x)=\frac{1}{2}x^t,
\]
where $x^t$ denotes the transpose of $x$.  Since the map $\Ep_1$ is the completely depolarising channel on $\M_2(\mathbb C)$
with respect to the canonical trace, $\Ep_1$ is a Bures contraction. Therefore, by Proposition \ref{add Bures contraction},  
$\frac{1}{2}(\Ep_1+\Ep_2)=\Ep$ is also a Bures contraction.
\end{proof}

Unitary channels are isometries of the metric space $\left( \mathcal D_\tau(\A), d_B^\tau\right)$. The next result shows that
channels that commute with the unitary channels are necessarily locally contractive maps of $\left( \mathcal D_\tau(\A), d_B^\tau\right)$.

\begin{proposition}\label{covariance}  Assume that a channel $\Ep$ on a finite factor $\N$ commutes with every unitary channel on $\N$; that is,
$\Ep\circ\mbox{\rm Ad}_u = \mbox{\rm Ad}_u \circ \Ep$ for every unitary $u\in\N$, where $\mbox{\rm Ad}_u $ denotes the channel $x\mapsto uxu^*$.
Then there exist nonnegative real numbers $\alpha$ and $\beta$ such that $\Ep(x)=\alpha x+\beta\tau(x)1$, for all $x\in\N$. In particular, 
if $\Ep$ is not a scalar multiple of the identity channel, then $\Ep$ is Bures contractive.
\end{proposition}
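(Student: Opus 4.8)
The plan is to first pin down the structure of a channel $\Ep$ that commutes with all unitary channels, and then deduce the Bures-contractivity statement from the depolarising-channel examples already established. For the structural part, I would exploit that $\N$ is a \emph{factor}: since $\Ep\circ\operatorname{Ad}_u=\operatorname{Ad}_u\circ\Ep$ for every unitary $u\in\N$, the operator $\Ep$ is a bimodule map over the group von Neumann algebra generated by the unitaries, hence commutes (in the appropriate sense) with the averaging projection onto the $\operatorname{Ad}$-fixed vectors. In finite-dimensional terms this is the classical fact that a linear map $\mn\to\mn$ commuting with all $\operatorname{Ad}_u$ is a linear combination of the identity map and the trace-projection $x\mapsto\tau(x)1$; I would reproduce this by viewing $\Ep$ as an element of $\mn\otimes\mn$ (Choi matrix) that commutes with $u\otimes\bar u$ for all $u$, so by Schur's lemma it lies in the commutant of the tensor representation, which for a factor is spanned by $1\otimes 1$ and the swap/rank-one tensor, yielding $\Ep(x)=\alpha x+\beta\tau(x)1$ for scalars $\alpha,\beta$. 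Positivity and trace-preservation then force $\alpha,\beta$ to be real with $\alpha+\beta\tau(1)=1$ and $\alpha\ge 0$, $\beta\ge 0$ (testing $\Ep$ on rank-one projections and on $1$ gives both inequalities after a short computation).

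Once $\Ep(x)=\alpha x+\beta\tau(x)1$ with $\alpha,\beta\ge 0$ is in hand, the second assertion is almost immediate. If $\Ep$ is not a scalar multiple of the identity channel, then $\beta>0$, and (after rescaling the trace so $\tau(1)=1$, which we may do on a finite factor, or keeping $\tau(1)$ general and writing $\beta'=\beta\tau(1)$) we have $\alpha<1$ and $\Ep=\alpha I+(1-\alpha)\Omega$, where $\Omega$ is the completely depolarising channel of Example~\ref{eg:depolarising}. That is, $\Ep$ is exactly a depolarising channel with parameter $\alpha\in[0,1)$. By the Depolarising Channels example (an application of Proposition~\ref{add Bures contraction}), $\Ep$ is a Bures contraction; alternatively one applies Proposition~\ref{add Bures contraction} directly with $\Ep_1=\Omega$ Bures contractive and $\Ep_2=I$.

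The main obstacle I anticipate is the structural step: making the Schur-lemma argument fully rigorous in a way that genuinely uses the factor hypothesis and does not quietly assume finite-dimensionality. On a general (possibly type $\mathrm{II}_1$) factor the "Choi matrix" device is not literally available, so I would instead argue via the conditional-expectation/averaging point of view: the map $x\mapsto\int_{\mathcal U(\N)}u^*\Ep(uxu^*)u\,du$ (taken against Haar measure when $\N$ is finite-dimensional, or more carefully via a net of averages / an amenability argument, or simply by noting that commutation already gives $\Ep(uxu^*)=u\Ep(x)u^*$ so no averaging of $\Ep$ is needed and one only needs to identify maps $x\mapsto u\Ep(x)u^*$-covariant) shows $\Ep$ restricted to the $\operatorname{Ad}$-invariant subspace $\C 1$ acts by a scalar, and on any $x$ the covariance together with factoriality (triviality of the center) forces $\Ep(x)-\tau(x)1$ to be a scalar multiple of $x-\tau(x)1$ with the scalar independent of $x$ — this last independence is the delicate point and is where I would either cite the known classification of $\operatorname{Ad}$-covariant maps on a factor or give a direct polarization argument. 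Everything after this reduction is routine given the results already in the paper.
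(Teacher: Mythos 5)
Your reduction of the second assertion to Proposition \ref{add Bures contraction} (writing $\Ep=\alpha I+(1-\alpha)\Omega$ with $\Omega$ the completely depolarising channel once $\beta>0$) is exactly what the paper does and is fine. The problem is the structural step, which is the heart of the proposition, and which your proposal does not actually establish for a general finite factor. The Choi-matrix/Schur-lemma argument you lead with is intrinsically finite-dimensional: on a type $\mathrm{II}_1$ factor there is no Choi matrix and no $u\otimes\bar u$ commutant computation to appeal to, as you acknowledge. Your fallback --- ``cite the known classification of $\mathrm{Ad}$-covariant maps on a factor or give a direct polarization argument'' --- is precisely the statement that needs proof; you yourself identify the independence of the scalar in $\Ep(x)-\tau(x)1=\alpha\bigl(x-\tau(x)1\bigr)$ as ``the delicate point'' and then leave it unresolved, and no specific citable classification is named. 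The paper closes this gap by a different device: from $\Ep\circ\mathrm{Ad}_u=\mathrm{Ad}_u\circ\Ep$ it deduces that, for any von Neumann subalgebra $\M\subseteq\N$ and $y\in\M$, the element $\Ep(y)$ commutes with every unitary of $\M'\cap\N$, so $\Ep(\M)\subseteq(\M'\cap\N)'\cap\N\subseteq\M''=\M$; then it invokes Bhat's theorem \cite{bhat} that a map leaving every von Neumann subalgebra invariant has the form $\Ep(x)=\alpha x+\psi(x)1$, and unitary invariance of $\psi$ together with factoriality forces $\psi=\beta\tau$. Some argument of this kind (or a genuine reference valid beyond matrices) is required; as written, your structural step is a gap.

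A secondary weak point is the sign analysis. ``Testing $\Ep$ on rank-one projections'' is not available in a $\mathrm{II}_1$ factor, which has no minimal projections, and even in $\mn$ evaluating $\Ep$ on a projection $p$ only yields $\beta\geq 0$ and $\alpha\geq-\beta\tau(p)$, so the claimed ``short computation'' does not by itself deliver $\alpha\geq 0$. The paper extracts the constraints differently: trace preservation gives $\alpha+\beta=1$ (so $\Ep$ is unital and $\|\Ep\|=1$, whence $|\alpha|\leq 1$ by evaluating on nonzero elements of $\ker\tau$), preservation of selfadjointness gives $\alpha\in\mathbb R$, and positivity applied to a noninvertible positive element gives $\beta\geq 0$. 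You should either reproduce an argument of that type or otherwise justify the nonnegativity claims before concluding that $\Ep$ is a convex combination of the identity channel and $\Omega$.
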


\begin{proof}  Let $\tau$ denote the faithful normal trace on $\N$, which without loss of generality we assume to be normalised. Because $\N$ is a factor, $\tau$
is the unique tracial state on $\N$ for which $\tau(1)=1$.

We begin by adapting an argument used in the proof of \cite[Theorem 3.1]{bhat} to our purpose. 
Let $\M$ be any (unital) von Neumann subalgebra of $\N$, and suppose that a unitary $u\in\N$ satisfies $uy=yu$ for every $y\in\M$. 
Therefore, for each $y\in\M$, $\Ep(y)=\Ep(uyu^*)=u\Ep(y)u^*$, by the hypothesis that $\Ep\circ\mbox{\rm Ad}_u = \mbox{\rm Ad}_u \circ \Ep$.
Hence, $\Ep(y)$ commutes with $u$, which implies that
\[
\begin{array}{rcl}
\Ep(\M)&\subseteq& \left(\mbox{Span}\left\{ u\in \M'\cap \N\,:\, u^*u=uu^*=1\right\}\right)'\cap\N \\
&=&(\M'\cap\N)'\cap\N\subseteq\M''=\M.
\end{array}
\]
The property that $\Ep(\M)\subseteq\M$, for every von Neumann subalgebra $\M$ of $\N$, is equivalent to the assertion that
$\Ep$ has the form $\Ep(x)=\alpha x +  \psi(x)1$ for some $\alpha \in\mathbb C$ and linear functional $\psi$ on $\N$ \cite[Theorem 2.1]{bhat}.
The hypothesis $\Ep\circ\mbox{\rm Ad}_u = \mbox{\rm Ad}_u \circ \Ep$ for every unitary $u\in\N$ implies that
$\psi(uxu^*)=\psi(x)$ for all unitaries $u$ and all $x\in\N$. Hence, $\psi$ is a tracial functional. Since $\N$ is a factor, $\psi= \beta \tau$ 
for some $\beta\in\mathbb C$. The trace preservation of $\Ep$ yields $\alpha+\beta=1$, and so $\Ep$ is unital. Hence $\|\Ep\|=1$; thus, if $x\in\ker\tau$ is nonzero, then
$\|x\|\geq\|\Ep(x)\|=|\alpha|\,\|x\|$ and  $|\alpha|\leq1$. Moreover, $\Ep(x^*)=\Ep(x)^*$ for $x\in\ker\tau$ yields $\alpha\in\mathbb R$.
Select a noninvertible positive $h\in\N$; the positivity of $\Ep(h)$ implies that $\alpha\lambda+\beta \geq0$ for every $\lambda$ in the spectrum of $h$.
With $\lambda=0$, in particular, we obtain $\beta\geq 0$. Therefore, $\alpha=1-\beta$ and $|\alpha|\leq 1$ imply that $\alpha,\beta\in[0,1]$.

If $\Ep$ is not a scalar multiple of the identity channel, then $\beta\not=0$.
Thus, $\Ep$ is a convex combination of two channels, one of which is depolarising. Hence, by
Proposition \ref{add Bures contraction}, $\Ep$ is a Bures contraction.
 \end{proof}

\subsection{Multiplicative domains}

\begin{definition} The \emph{multiplicative domain} of a positive linear map
$\Phi:\A\rightarrow\A$ is the set
\[
\mathcal{M}_{\Phi}= \{x\in \A\,:\,  \Phi(xy)=\Phi(x)\Phi(y) ,\,  \Phi(yx)=\Phi(y)\Phi(x), \;\forall \,y \in \A\}.
\]
\end{definition}

Thus, the multiplicative domain of $\Phi$ is the largest C$^*$-subalgebra of $\A$ upon which the linear map $\Phi$ is
multiplicative. A related set is
\[
\mathcal{S}_\Phi=\{x\in\A\,:\,\Phi(x^*x)=\Phi(x)^*\Phi(x),\,
\Phi(xx^*)=\Phi(x)\Phi(x)^*\}.
\]
Of course $\mathcal{S}_\Phi\subseteq \mathcal{M}_{\Phi}$. However, if $\Phi$ is a Schwarz map, then 
$\mathcal{S}_\Phi=\mathcal{M}_{\Phi}$ \cite{choi1974,Stormer-book}.

\begin{proposition}\label{miza1} If $\Ep:\A\rightarrow\A$ is a  
channel, then 
$f_\Ep$ is isometric on the set $\mathcal D_\tau(\A)\cap \mathcal M_\Ep$.
That is,
\[
d_B^\tau\left(\Ep(\sigma),\Ep(\rho)\right)=d_B^\tau(\sigma,\rho),
\]
for all $\sigma,\rho\in \mathcal D_\tau(\A)\cap \mathcal M_\Ep$.
\end{proposition}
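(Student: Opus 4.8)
The plan is to reduce, as elsewhere in the paper, to the finite von Neumann algebra setting and then to exhibit the unitary $w$ computing the fidelity $F_\tau(\sigma,\rho)$ and to check that $\Ep$ applied to the relevant polar-decomposition data continues to do the job after the map is applied. Since $\Ep$ is nonexpansive (Proposition \ref{nonexpansive}), it suffices to prove the reverse inequality $F_\tau(\Ep(\sigma),\Ep(\rho))\le F_\tau(\sigma,\rho)$ when $\sigma,\rho\in\mathcal D_\tau(\A)\cap\mathcal M_\Ep$. First I would pass to the bidual: by the remarks after the channel definitions, $\Ep^{**}$ is a channel on $(\A^{**},\tau^{**})$, and it is standard that the multiplicative domain behaves well under this passage, so one may assume $\A$ is a finite von Neumann algebra $\N$ with faithful normal trace $\tau$.

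Next I would use Lemma \ref{vna-lemma2}(2): there is a unitary $v\in\N$ with $\sigma^{1/2}\rho^{1/2}=v|\sigma^{1/2}\rho^{1/2}|$ and
\[
F_\tau(\sigma,\rho)=\Re\,\tau\!\left(v^*\sigma^{1/2}\rho^{1/2}\right)=\sup_{u\in\mathcal U(\N)}\Re\,\tau\!\left(u\,\sigma^{1/2}\rho^{1/2}\right).
\]
The key algebraic input is that on the multiplicative domain $\mathcal M_\Ep=\mathcal S_\Ep$ one has $\Ep(x^*x)=\Ep(x)^*\Ep(x)$ and $\Ep(xx^*)=\Ep(x)\Ep(x)^*$, and moreover $\Ep$ is a $*$-homomorphism in the sense $\Ep(xy)=\Ep(x)\Ep(y)$ for $x\in\mathcal M_\Ep$, $y\in\N$. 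Since $\sigma\in\mathcal D_\tau(\A)\cap\mathcal M_\Ep$, also $\sigma^{1/2}\in\mathcal M_\Ep$ (the multiplicative domain is a C$^*$-subalgebra, hence closed under continuous functional calculus), and similarly $\rho^{1/2}\in\mathcal M_\Ep$. Consequently $\Ep(\sigma)^{1/2}=\Ep(\sigma^{1/2})$ and $\Ep(\rho)^{1/2}=\Ep(\rho^{1/2})$, and
\[
\Ep(\sigma)^{1/2}\Ep(\rho)^{1/2}=\Ep(\sigma^{1/2})\Ep(\rho^{1/2})=\Ep\!\left(\sigma^{1/2}\rho^{1/2}\right).
\]
Then for any unitary $u\in\N$ on which we want to test the fidelity of the images, I would instead estimate using the variational characterisation: choosing the specific unitary $v$ from above and applying $\Ep$,
\[
F_\tau(\Ep(\sigma),\Ep(\rho))\;=\;\sup_{u}\Re\,\tau\!\left(u\,\Ep(\sigma^{1/2}\rho^{1/2})\right),
\]
and the hard part is to bound $\tau(|\Ep(\sigma^{1/2}\rho^{1/2})|)$ above by $\tau(|\sigma^{1/2}\rho^{1/2}|)$. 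This I would get from the Schwarz inequality applied to $x=\sigma^{1/2}\rho^{1/2}$ together with the fact that $\sigma^{1/2}$ lies in $\mathcal M_\Ep$: writing $x=\sigma^{1/2}\rho^{1/2}$ with $\sigma^{1/2}\in\mathcal M_\Ep$, one computes $\Ep(x)^*\Ep(x)=\Ep(\rho^{1/2})^*\Ep(\sigma^{1/2})^*\Ep(\sigma^{1/2})\Ep(\rho^{1/2})=\Ep(\rho^{1/2})^*\Ep(\sigma)\Ep(\rho^{1/2})\le\Ep(\rho^{1/2})^*\Ep(\sigma\,\cdot\,)\cdots$; more cleanly, $\Ep(x^*x)=\Ep(\rho^{1/2}\sigma\rho^{1/2})$ and, since $\sigma^{1/2}$ multiplicative, $\Ep(x)^*\Ep(x)=\Ep(x^*x)$, i.e.\ $x\in\mathcal S_\Ep$ as well. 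Hence $|\Ep(x)|^2=\Ep(|x|^2)$, and by operator monotonicity of the square root in a finite von Neumann algebra together with trace monotonicity, $\tau(|\Ep(x)|)=\tau(\Ep(|x|^2)^{1/2})$. Finally, applying the operator Jensen/Schwarz inequality $\Ep(|x|^2)^{1/2}$ versus $\Ep(|x|)$ in the opposite direction — using that $\Ep$ is a Schwarz channel so $\Ep(|x|)^2\le\Ep(|x|^2)$, hence $\Ep(|x|)\le\Ep(|x|^2)^{1/2}$ — is the wrong direction, so instead I would argue directly: $|x|\in\mathcal M_\Ep$ because $|x|^2=\rho^{1/2}\sigma\rho^{1/2}\in\mathcal M_\Ep$ and $\mathcal M_\Ep$ is closed under functional calculus, whence $\Ep(|x|)=|\Ep(x)|$ exactly, and $\tau(|\Ep(x)|)=\tau(\Ep(|x|))=\tau(|x|)$ by trace preservation. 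This gives $F_\tau(\Ep(\sigma),\Ep(\rho))\ge\tau(|\Ep(\sigma^{1/2}\rho^{1/2})|)=\tau(|\sigma^{1/2}\rho^{1/2}|)=F_\tau(\sigma,\rho)$, combined with the nonexpansive inequality, equality follows, and therefore $d_B^\tau(\Ep(\sigma),\Ep(\rho))=d_B^\tau(\sigma,\rho)$.

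The main obstacle I anticipate is the claim that $|x|=|\sigma^{1/2}\rho^{1/2}|$ lies in $\mathcal M_\Ep$: one knows $|x|^2=\rho^{1/2}\sigma\rho^{1/2}$, and $\rho^{1/2},\sigma$ all lie in the C$^*$-subalgebra $\mathcal M_\Ep$, so $|x|^2\in\mathcal M_\Ep$, and then $|x|=(|x|^2)^{1/2}\in\mathcal M_\Ep$ since a C$^*$-subalgebra is closed under continuous functional calculus and $\sqrt{\cdot}$ is a continuous function vanishing at $0$. Granting this, everything else is bookkeeping with the identities $\Ep(\sigma^{1/2})=\Ep(\sigma)^{1/2}$, $\Ep$-multiplicativity on $\mathcal M_\Ep$, and trace preservation; care is only needed to ensure all the square roots and polar parts stay inside $\mathcal M_\Ep$ so that $\Ep$ acts on them as a $*$-homomorphism.
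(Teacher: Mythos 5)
Your core argument is exactly the paper's proof: since $\sigma^{1/2}$, $\rho^{1/2}$, and $|\sigma^{1/2}\rho^{1/2}|=(\rho^{1/2}\sigma\rho^{1/2})^{1/2}$ all lie in the C$^*$-subalgebra $\mathcal M_\Ep$, the map $\Ep$ commutes with the relevant square roots, giving $\tau\left(|\Ep(\sigma)^{1/2}\Ep(\rho)^{1/2}|\right)=\tau\left(\Ep(|\sigma^{1/2}\rho^{1/2}|)\right)=\tau\left(|\sigma^{1/2}\rho^{1/2}|\right)$ by trace preservation, which is precisely the paper's computation. The surrounding machinery in your write-up (passage to the bidual, the polar-decomposition unitary from Lemma \ref{vna-lemma2}, and the appeal to $\mathcal S_\Ep=\mathcal M_\Ep$, which would require a Schwarz hypothesis the statement does not make) is never actually used in your final argument and can simply be deleted; the purely algebraic computation works verbatim in any tracial C$^*$-algebra for a mere channel.
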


\begin{proof}
We shall prove that $F_\tau(\rho,\sigma)=F_\tau(\mathcal{E}(\rho),\mathcal{E}(\sigma))$ 
for all $\sigma,\rho\in \mathcal D_\tau(\A)\cap \mathcal M_\Ep$. 

To this end, select $x\in \A_+\cap \mathcal M_\Ep$. Because $\mathcal M_\Ep$ is a C$^*$-algebra,
the element $x^{1/2}$ also belongs to $\mathcal M_\Ep$. Thus, 
\[
\mathcal{E}(x)=\mathcal{E}(x^{1/2}x^{1/2}) 
=\mathcal{E}(x^{1/2})\mathcal{E}(x^{1/2}) 
=[\mathcal{E}(x^{1/2})]^2,
\]
which shows that $\mathcal{E}(x)^{1/2}=\mathcal{E}(x^{1/2})$.
Now for $\rho,\sigma \in \mathcal D_\tau(\A)\cap \mathcal M_\Ep$, the element
$\rho^{1/2}\sigma\rho^{1/2}$ lies in $\mathcal{M}_{\mathcal{E}}$; thus,
\begin{align*}
F_\tau(\mathcal{E}(\rho),\mathcal{E}(\sigma))&=\tau[(\mathcal{E}(\rho)^{1/2}\mathcal{E}(\sigma)\mathcal{E}(\rho)^{1/2})^{1/2}]\\
&=\tau[(\mathcal{E}(\rho^{1/2})\mathcal{E}(\sigma)\mathcal{E}(\rho^{1/2}))^{1/2}]\\
&=\tau[(\mathcal{E}(\rho^{1/2}\sigma\rho^{1/2}))^{1/2}]\\
&=\tau[\mathcal{E}((\rho^{1/2}\sigma\rho^{1/2})^{1/2})]\\
&=\tau[(\rho^{1/2}\sigma\rho^{1/2})^{1/2}]  \\
&=F_\tau(\rho,\sigma).
\end{align*}
Hence, the affine function $f_\Ep$ is isometric on
$\mathcal D_\tau(\A)\cap \mathcal M_\Ep$.
\end{proof}

\begin{corollary}\label{trivial mult.dom} The multiplicative domain of a Bures contractive channel is $\mathbb C\,1=\{\lambda\,1\,:\,\lambda\in\mathbb C\}$.
\end{corollary}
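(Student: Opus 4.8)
The plan is to combine Proposition~\ref{miza1} with the defining property of a Bures contraction. If $\Ep$ is a Bures contractive channel, then by definition $f_\Ep$ is locally contractive, so $d_B^\tau(\Ep(\sigma),\Ep(\rho)) < d_B^\tau(\sigma,\rho)$ for all \emph{distinct} $\sigma,\rho\in\mathcal D_\tau(\A)$. On the other hand, Proposition~\ref{miza1} asserts that $f_\Ep$ is isometric on $\mathcal D_\tau(\A)\cap\mathcal M_\Ep$. These two facts are only compatible if $\mathcal D_\tau(\A)\cap\mathcal M_\Ep$ contains no two distinct elements, i.e., it is at most a singleton.

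First I would argue that $\mathcal M_\Ep$, being a unital C$^*$-subalgebra of $\A$ (it contains $1$ since $\Ep(1)=1$ by Proposition~\ref{basic positive facts}(3), noting $\Ep$ is a Schwarz channel — or more simply, trace preservation plus positivity forces $\Ep(1)=1$ once one knows $\|\Ep\|\le 1$; in any case $1\in\mathcal M_\Ep$), cannot contain two linearly independent positive elements without also containing two distinct $\tau$-density elements. Concretely, if $h\in\mathcal M_\Ep$ were a positive element not a scalar multiple of $1$, then both $\tfrac{1}{\tau(1)}1$ and $\tfrac{1}{\tau(h)}h$ lie in $\mathcal D_\tau(\A)\cap\mathcal M_\Ep$ and are distinct, contradicting the strict inequality above applied in conjunction with Proposition~\ref{miza1}. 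Since a C$^*$-algebra is spanned by its positive elements, $\mathcal M_\Ep$ containing only scalar multiples of $1$ among its positive elements forces $\mathcal M_\Ep=\mathbb C\,1$. Conversely, $\mathbb C\,1\subseteq\mathcal M_\Ep$ trivially, so $\mathcal M_\Ep=\mathbb C\,1$.

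The main (minor) obstacle is just the bookkeeping: one must be careful that the density space is genuinely two-dimensional whenever $\mathcal M_\Ep\supsetneq\mathbb C\,1$, and that $\tau(h)\ne 0$ for $0\ne h\in\A_+$, which holds because $\tau$ is faithful. There are no analytic subtleties here — the work was all done in Proposition~\ref{miza1}, and this corollary is a one-line consequence of juxtaposing "isometric on $\mathcal D_\tau(\A)\cap\mathcal M_\Ep$" against "strictly distance-decreasing on distinct pairs." I would write the proof in two or three sentences: invoke Proposition~\ref{miza1}, observe that a Bures contraction admits no two distinct density elements on which it is isometric, deduce that $\mathcal D_\tau(\A)\cap\mathcal M_\Ep$ is a singleton, and then pass from "no two distinct densities in $\mathcal M_\Ep$" to "$\mathcal M_\Ep=\mathbb C\,1$" using that $\mathcal M_\Ep$ is a C$^*$-algebra spanned by its positive cone together with faithfulness of $\tau$.
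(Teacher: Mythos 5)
Your proposal is correct and takes essentially the same route as the paper: it plays the isometry of $f_\Ep$ on $\mathcal D_\tau(\A)\cap\mathcal M_\Ep$ from Proposition \ref{miza1} against the strict inequality defining a Bures contraction, forcing that intersection to contain no two distinct density elements. The paper leaves the final passage to $\mathcal M_\Ep=\mathbb C\,1$ implicit, whereas you spell it out (normalising a nonscalar positive element of $\mathcal M_\Ep$, using faithfulness of $\tau$ and that a C$^*$-algebra is spanned by its positive cone) --- a harmless elaboration of the same argument.
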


\begin{proof} If not, then the Bures contractive channel $\mathcal E$ in question possesses at least two distinct $\tau$-density elements
$\sigma$ and $\rho$. Therefore, the feature $d_B^\tau(\Ep(\sigma),\Ep(\rho))<d_B^\tau(\sigma,\rho)$ cannot hold, by 
Proposition \ref{miza1}.
\end{proof}

The preceding result  is a stronger form of a corresponding theorem of 
Raginsky \cite{raginsky2002} for the trace-norm metric.

A partial converse to Proposition \ref{miza1} is the following result on the hereditary nature of multiplicative domains.

\begin{proposition}\label{miza2}
If $\Ep$ is a Schwarz channel, and if $\sigma\in \mathcal M_\Ep$ is a $\tau$-density element for which there exists
$\rho\in\mathcal D_\tau(\A)$ such that $d_B^\tau(\sigma,\rho)=d_B^\tau\left(\Ep(\sigma),\Ep(\rho)\right)$, then 
$\sigma^{1/2}\rho\sigma^{1/2}\in\mathcal M_\Ep$.
\end{proposition}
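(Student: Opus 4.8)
The plan is to mimic the chain of equalities in the proof of Proposition \ref{miza1}, but now exploiting the fact that each inequality used there must be an equality because $d_B^\tau(\sigma,\rho)=d_B^\tau(\Ep(\sigma),\Ep(\rho))$, equivalently $F_\tau(\sigma,\rho)=F_\tau(\Ep(\sigma),\Ep(\rho))$. Since $\sigma\in\mathcal M_\Ep$ and $\Ep$ is a Schwarz channel (hence unital, by Proposition \ref{basic positive facts}), we still have $\mathcal M_\Ep=\mathcal S_\Ep$, so $\sigma^{1/2}\in\mathcal M_\Ep$ and $\Ep(\sigma)^{1/2}=\Ep(\sigma^{1/2})$. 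The key first step is therefore to write
\[
F_\tau\bigl(\Ep(\sigma),\Ep(\rho)\bigr)=\tau\Bigl[\bigl(\Ep(\sigma^{1/2})\,\Ep(\rho)\,\Ep(\sigma^{1/2})\bigr)^{1/2}\Bigr].
\]
Now, because $\sigma^{1/2}\in\mathcal M_\Ep$, the Schwarz map $\Ep$ is actually multiplicative against $\sigma^{1/2}$ on both sides, so $\Ep(\sigma^{1/2})\,\Ep(\rho)\,\Ep(\sigma^{1/2})=\Ep(\sigma^{1/2}\rho\sigma^{1/2})$. Set $a=\sigma^{1/2}\rho\sigma^{1/2}\in\A_+$. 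Then $F_\tau(\Ep(\sigma),\Ep(\rho))=\tau[\Ep(a)^{1/2}]$ while $F_\tau(\sigma,\rho)=\tau[a^{1/2}]$, and the hypothesis becomes the single scalar equation $\tau\bigl(\Ep(a)^{1/2}\bigr)=\tau\bigl(a^{1/2}\bigr)$.

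The second step is to convert that scalar equality into the operator equality $\Ep(a)^{1/2}=\Ep(a^{1/2})$, which (since $a^{1/2}\in\A_+$ and the square root is the object whose square is $a$) is exactly the statement $\Ep(a^{1/2})^2=\Ep(a)=\Ep\bigl((a^{1/2})^2\bigr)$, i.e. $a^{1/2}\in\mathcal S_\Ep=\mathcal M_\Ep$, which is the desired conclusion that $a=\sigma^{1/2}\rho\sigma^{1/2}\in\mathcal M_\Ep$ (a C$^*$-algebra, so it contains $a$ iff it contains $a^{1/2}$). To make this conversion I would invoke the operator concavity of the square root: for any $b\in\A_+$ the Schwarz-type inequality for $\Ep$ on $b^{1/2}$ gives $\Ep(b^{1/2})^2\le\Ep(b)$, hence $\Ep(b^{1/2})\le\Ep(b)^{1/2}$ by monotonicity of $t\mapsto t^{1/2}$; applying $\tau$ yields $\tau(\Ep(b^{1/2}))\le\tau(\Ep(b)^{1/2})$, and since $\Ep$ is trace preserving the left side is $\tau(b^{1/2})$. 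Thus in general $\tau(b^{1/2})\le\tau(\Ep(b)^{1/2})$ with strictness governed by the gap in $\Ep(b^{1/2})\le\Ep(b)^{1/2}$. Applying this with $b=a$ and using the hypothesis forces $\tau\bigl(\Ep(a)^{1/2}-\Ep(a^{1/2})\bigr)=0$; since $\Ep(a)^{1/2}-\Ep(a^{1/2})\ge0$ and $\tau$ is faithful, we conclude $\Ep(a)^{1/2}=\Ep(a^{1/2})$, hence $\Ep(a^{1/2}a^{1/2})=\Ep(a^{1/2})^2$, so $a^{1/2}\in\mathcal S_\Ep=\mathcal M_\Ep$ and therefore $a\in\mathcal M_\Ep$.

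The step I expect to need the most care is the passage $\Ep(b^{1/2})^2\le\Ep(b)\Rightarrow\Ep(b^{1/2})\le\Ep(b)^{1/2}$ together with the deduction that equality of traces forces equality of the operators: one must be sure the square root is operator monotone (standard) and that $\Ep(a)^{1/2}\ge\Ep(a^{1/2})\ge0$ so the difference is genuinely positive before applying faithfulness of $\tau$. A secondary subtlety is justifying $\Ep(\sigma^{1/2})\Ep(\rho)\Ep(\sigma^{1/2})=\Ep(\sigma^{1/2}\rho\sigma^{1/2})$ from $\sigma^{1/2}\in\mathcal M_\Ep$; this is immediate from the definition of the multiplicative domain applied twice (first $\Ep(\sigma^{1/2}\rho)=\Ep(\sigma^{1/2})\Ep(\rho)$, then $\Ep\bigl((\sigma^{1/2}\rho)\sigma^{1/2}\bigr)=\Ep(\sigma^{1/2}\rho)\Ep(\sigma^{1/2})$), but it does use that $\sigma^{1/2}$, not merely $\sigma$, lies in $\mathcal M_\Ep$, which is why the Schwarz hypothesis (giving $\mathcal M_\Ep=\mathcal S_\Ep$, a C$^*$-algebra closed under continuous functional calculus) is needed. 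Everything else is bookkeeping: assemble the two scalar computations of fidelity, equate them via the hypothesis, and run the faithfulness argument.
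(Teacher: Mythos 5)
Your proposal is correct and follows essentially the same route as the paper's proof: write $F_\tau(\Ep(\sigma),\Ep(\rho))=\tau\bigl[\Ep(\sigma^{1/2}\rho\sigma^{1/2})^{1/2}\bigr]$ using $\sigma^{1/2}\in\mathcal M_\Ep$, compare with $F_\tau(\sigma,\rho)=\tau\bigl[(\sigma^{1/2}\rho\sigma^{1/2})^{1/2}\bigr]$ via the Schwarz inequality, operator monotonicity of the square root, and trace preservation, and then use faithfulness of $\tau$ to upgrade the scalar equality to $\Ep(a)^{1/2}=\Ep(a^{1/2})$, giving $a^{1/2}\in\mathcal S_\Ep=\mathcal M_\Ep$. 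Your write-up merely makes explicit some steps the paper leaves implicit (e.g., the faithfulness argument and the final passage from $a^{1/2}$ to $a$).
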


\begin{proof}
Because $\sigma\in \mathcal{M}_{\mathcal{E}}$, 
we have that $\mathcal{E}(\sigma)^{1/2}=\mathcal{E}(\sigma^{1/2})$ and, for every $x\in \A$, that
$\mathcal{E}(\sigma^{1/2}x)=\mathcal{E}(\sigma^{1/2})\mathcal{E}(x)$. Thus,
using the Schwarz inequality 
\[
\mathcal{E}(\sigma^{1/2}\rho\sigma^{1/2})^{1/2}\geq \mathcal{E}([\sigma^{1/2}\rho\sigma^{1/2}]^{1/2}),
\]
we obtain
\[
\begin{array}{rcl}
\tau([\mathcal{E}(\sigma)^{1/2}\mathcal{E}(\rho)\mathcal{E}(\sigma)^{1/2}]^{1/2}) 
&=&\tau([\mathcal{E}(\sigma^{1/2})\mathcal{E}(\rho)\mathcal{E}(\sigma^{1/2})]^{1/2}) \\ && \\
&=&\tau([\mathcal{E}(\sigma^{1/2}\rho \sigma^{1/2})]^{1/2}) \\ && \\
&\geq&\tau\circ\mathcal{E}[(\sigma^{1/2}\rho\sigma^{1/2})^{1/2}] .
\end{array}
\]
Hence, the inequality must be an equality,  
resulting in
\[
[\mathcal{E}(\sigma)^{1/2}\mathcal{E}(\rho)\mathcal{E}(\sigma)^{1/2}]^{1/2} =
\mathcal{E}[(\sigma^{1/2}\rho\sigma^{1/2})^{1/2}].
\]
The equation above implies that 
$\sigma^{1/2}\rho\sigma^{1/2}\in\mathcal M_\Ep$.
\end{proof}

\begin{corollary}
If $\Ep$ is a Schwarz channel, and if $\sigma\in \mathcal M_\Ep$ is an invertible $\tau$-density element for which there exists
$\rho\in\mathcal D_\tau(\A)$ such that $d_B^\tau(\sigma,\rho)=d_B^\tau\left(\Ep(\sigma),\Ep(\rho)\right)$, then $\rho\in \mathcal M_\Ep$.
\end{corollary}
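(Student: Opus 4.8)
The plan is to reduce this to Proposition \ref{miza2} and then to invert the congruence by $\sigma^{1/2}$ inside the C$^*$-algebra $\mathcal M_\Ep$. First, Proposition \ref{miza2} applies verbatim under the present hypotheses and yields $\sigma^{1/2}\rho\sigma^{1/2}\in\mathcal M_\Ep$. It therefore suffices to show that $\sigma^{-1/2}\in\mathcal M_\Ep$, for then
\[
\rho=\sigma^{-1/2}\left(\sigma^{1/2}\rho\sigma^{1/2}\right)\sigma^{-1/2}
\]
is a product of three elements of $\mathcal M_\Ep$, hence lies in $\mathcal M_\Ep$ because the latter is a C$^*$-subalgebra of $\A$.

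Next I would record that $\mathcal M_\Ep$ is a \emph{unital} C$^*$-subalgebra of $\A$ sharing the unit of $\A$: since $\Ep$ is a Schwarz channel, Proposition \ref{basic positive facts} gives $\Ep(1)=1$, so $1\in\mathcal S_\Ep=\mathcal M_\Ep$. The element $\sigma$ is positive, invertible in $\A$, and belongs to $\mathcal M_\Ep$. By spectral permanence for C$^*$-subalgebras that contain the unit, the spectrum of $\sigma$ computed in $\mathcal M_\Ep$ coincides with its spectrum in $\A$; in particular $0\notin\operatorname{sp}_{\mathcal M_\Ep}(\sigma)$, so the function $t\mapsto t^{-1/2}$ is continuous on $\operatorname{sp}_{\mathcal M_\Ep}(\sigma)$. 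Applying the continuous functional calculus inside $\mathcal M_\Ep$ produces $\sigma^{-1/2}\in\mathcal M_\Ep$, which completes the argument.

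There is essentially no genuine obstacle here beyond the bookkeeping above; the only point requiring a moment's care is the spectral-permanence step, i.e. that invertibility of $\sigma$ in $\A$ forces invertibility in the possibly much smaller algebra $\mathcal M_\Ep$, so that $\sigma^{-1/2}$ is available within $\mathcal M_\Ep$ rather than merely in $\A$. Once that is in hand, the displayed congruence identity finishes the proof.
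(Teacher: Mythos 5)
Your proof is correct and follows essentially the same route as the paper: invoke Proposition \ref{miza2} to get $\sigma^{1/2}\rho\sigma^{1/2}\in\mathcal M_\Ep$, then conjugate by $\sigma^{-1/2}$ inside the C$^*$-algebra $\mathcal M_\Ep$. Your added justification that $\sigma^{-1/2}\in\mathcal M_\Ep$ (unitality of $\mathcal M_\Ep$ via $\Ep(1)=1$ plus spectral permanence and functional calculus) simply makes explicit a step the paper states without comment.
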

\begin{proof} The hypothesis and Proposition \ref{miza2} imply that
$\sigma^{1/2}\rho\sigma^{1/2}\in\mathcal M_\Ep$. Because $\sigma^{-1/2}\in\mathcal M_\Ep$, the element
$\rho=\sigma^{-1/2}\left(\sigma^{1/2}\rho\sigma^{1/2}\right)\sigma^{-1/2}$ also lies
in $\mathcal M_\Ep$.
\end{proof}

\begin{definition} The \emph{centre} of the convex set $\mathcal D_\tau(\A)$ is the element $\zeta=\tau(1)^{-1}1$.
\end{definition}

The next proposition gives a necessary and sufficient criteria for a positive element to be in the multiplicative domain.

\begin{proposition}{\label{multiplicative dom. lemma}}
If $\Ep:\A\rightarrow\A$ is a Schwarz channel and if $a\in\A_+$ is nonzero, 
then $a\in\mathcal M_\Ep$ if and
only if the $\tau$-density elements
$\tau(a)^{-1}a$ and $\tau(a)^{-1}\Ep(a)$ are equidistant from the centre $\zeta$ of $\mathcal D_\tau(\A)$.
\end{proposition}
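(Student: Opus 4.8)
The plan is to reduce the geometric ``equidistant'' condition to a single scalar identity, and then to recognise that identity as exactly the one forcing equality in the Schwarz inequality at $a^{1/2}$. Note first that $\tau(a)>0$ since $a\in\A_+$ is nonzero and $\tau$ is faithful, so the normalisations make sense; and since $\Ep$ is trace preserving, $\tau(a)^{-1}\Ep(a)=\Ep\bigl(\tau(a)^{-1}a\bigr)\in\mathcal D_\tau(\A)$.

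First I would compute the Bures distance of an arbitrary $\rho\in\mathcal D_\tau(\A)$ to the centre $\zeta=\tau(1)^{-1}1$. Because $\zeta^{1/2}=\tau(1)^{-1/2}1$ is a scalar multiple of the identity, $|\rho^{1/2}\zeta^{1/2}|=\tau(1)^{-1/2}\rho^{1/2}$, so $F_\tau(\rho,\zeta)=\tau(1)^{-1/2}\tau(\rho^{1/2})$ and hence $d_B^\tau(\rho,\zeta)^2=1-\tau(1)^{-1/2}\tau(\rho^{1/2})$. Applying this with $\rho=\tau(a)^{-1}a$ and with $\rho=\tau(a)^{-1}\Ep(a)$, and cancelling the common positive factor $\tau(a)^{-1/2}$, one sees that $\tau(a)^{-1}a$ and $\tau(a)^{-1}\Ep(a)$ are equidistant from $\zeta$ if and only if $\tau(a^{1/2})=\tau(\Ep(a)^{1/2})$. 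Thus the proposition is reduced to the claim that $a\in\mathcal M_\Ep$ if and only if $\tau(a^{1/2})=\tau(\Ep(a)^{1/2})$.

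For the forward implication, if $a\in\mathcal M_\Ep$ then $a^{1/2}\in\mathcal M_\Ep$, since $\mathcal M_\Ep$ is a C$^*$-algebra; hence $\Ep(a)=\Ep(a^{1/2})^2$, and as $\Ep(a^{1/2})\geq 0$ this gives $\Ep(a)^{1/2}=\Ep(a^{1/2})$, whereupon trace preservation yields $\tau(\Ep(a)^{1/2})=\tau(\Ep(a^{1/2}))=\tau(a^{1/2})$. For the converse, assume $\tau(a^{1/2})=\tau(\Ep(a)^{1/2})$. The Schwarz inequality applied to $x=a^{1/2}$ gives $\Ep(a^{1/2})^2\leq\Ep(a)$, and operator monotonicity of $t\mapsto t^{1/2}$ then gives $\Ep(a^{1/2})\leq\Ep(a)^{1/2}$. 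The positive element $\Ep(a)^{1/2}-\Ep(a^{1/2})$ has $\tau$-value $\tau(\Ep(a)^{1/2})-\tau(a^{1/2})=0$, so faithfulness of $\tau$ forces $\Ep(a^{1/2})=\Ep(a)^{1/2}$, i.e. $\Ep(a^{1/2})^*\Ep(a^{1/2})=\Ep(a^{1/2})^2=\Ep(a)=\Ep\bigl((a^{1/2})^*a^{1/2}\bigr)$. Since $a^{1/2}$ is self-adjoint, this says $a^{1/2}\in\mathcal S_\Ep$, which equals $\mathcal M_\Ep$ because $\Ep$ is a Schwarz map; hence $a=(a^{1/2})^2\in\mathcal M_\Ep$.

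The work here is more bookkeeping than depth. The one point that needs care is the first step: establishing that the ``equidistant from $\zeta$'' hypothesis is genuinely equivalent to the scalar identity $\tau(a^{1/2})=\tau(\Ep(a)^{1/2})$, which is why the explicit formula for $d_B^\tau(\cdot,\zeta)$ is needed. In the converse direction, the essential moves are the combination of operator monotonicity of the square root with faithfulness of $\tau$ to upgrade the Schwarz \emph{inequality} to the operator \emph{identity} $\Ep(a^{1/2})=\Ep(a)^{1/2}$, and then the passage from equality in the Schwarz inequality at the self-adjoint element $a^{1/2}$ to membership in $\mathcal M_\Ep$, which rests on the already-cited fact that $\mathcal S_\Ep=\mathcal M_\Ep$ for Schwarz maps.
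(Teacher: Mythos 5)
Your proof is correct and follows essentially the same route as the paper: reduce equidistance from $\zeta$ to the scalar identity $\tau(a^{1/2})=\tau(\Ep(a)^{1/2})$, get the forward direction from $\Ep(a^{1/2})=\Ep(a)^{1/2}$ on the multiplicative domain, and for the converse combine the Schwarz inequality, operator monotonicity of the square root, and faithfulness of $\tau$ to force $\Ep(a^{1/2})=\Ep(a)^{1/2}$, then invoke $\mathcal S_\Ep=\mathcal M_\Ep$. The only difference is presentational: you make the fidelity-to-centre computation and the cancellation of $\tau(a)^{-1/2}$ explicit, which the paper leaves implicit.
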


\begin{proof} 
Assume that $a\in\mathcal M_\Ep$ and let $\rho=\tau(a)^{-1}a$. 
It was shown in the proof of Proposition \ref{miza1}
that $\mathcal{E}(\rho^{1/2})=\mathcal{E}(\rho)^{1/2}$; therefore,
\[
\begin{array}{rcl}
F_\tau(\rho,\zeta)&=&F_\tau(\rho,\tau(1)^{-1}1) =\displaystyle\frac{\tau(\rho^{1/2})}{\tau(1)^{\frac{1}{2}}} 
=\displaystyle\frac{\tau[\mathcal{E}(\rho^{1/2})]}{\tau(1)^{\frac{1}{2}}}  
=\displaystyle\frac{\tau[\mathcal{E}(\rho)^{1/2}]}{\tau(1)^{\frac{1}{2}}} \\ && \\
&=&F_\tau(\Ep(\rho),\tau(1)^{-1}1) \\ && \\
&=&F_\tau(\Ep(\rho),\zeta).
\end{array}
\]
Hence, 
$\tau(a)^{-1}a$ and $\tau(a)^{-1}\Ep(a)$ are equidistant from the centre $\zeta$ of $\mathcal D_\tau(\A)$.

Conversely, assume that $d_B^\tau\left(\tau(a)^{-1}a,\zeta)\right)=d_B^\tau\left(\tau(a)^{-1}\Ep(a),\zeta)\right)$.
Therefore, the fidelities $F_\tau(\left(\tau(a)^{-1}a,\tau(1)^{-1}1)\right)$ and $F_\tau(\left(\tau(a)^{-1}\Ep(a),\tau(1)^{-1}1)\right)$
coincide, which implies that $\tau(a^{1/2})=\tau\left(\Ep(a)\right)^{1/2}$. 
The Schwarz inequality asserts that
$\mathcal{E}(a)=\mathcal{E}(a^{1/2}a^{1/2})\geq (\mathcal{E}(a^{1/2}))^2$.
Since the square root is an operator monotone function, we obtain
$\mathcal{E}(a)^{1/2}\geq \mathcal{E}(a^{1/2})$.
Hence, 
\[
\tau(a^{1/2})=\tau[\mathcal{E}(a)^{1/2}]
\geq \tau[\mathcal{E}(a^{1/2})]
=\tau(a^{1/2}).
\]
By the faithfulness of trace, 
$\mathcal{E}(a)^{1/2}=\mathcal{E}(a^{1/2})$.
Therefore,
\[
\mathcal{E}(a)=[\mathcal{E}(a)^{1/2}]^{2}
 =[\mathcal{E}(a^{1/2})]^2
 =\mathcal{E}(a^{1/2})\mathcal{E}(a^{1/2}),
\]
which shows that $a^{1/2}\in\mathcal S_\Ep$. Because 
$\mathcal{S}_\Ep=\mathcal{M}_{\Ep}$ for Schwarz maps, we deduce that $a\in\mathcal M_\Ep$.
\end{proof}

Recall that $2$-positive linear maps are Schwarz maps; thus, by Proposition \ref{basic positive facts},
$2$-positive channels are unital maps.

\begin{proposition} Suppose that a $2$-positive channel $\Ep$ has the form 
$\mathcal{E}= \frac{1}{2}(\Phi+\Psi)$, for some $2$-positive channels $\Phi$ and $\Psi$.
If one of $\Phi$ or $\Psi$ is a Bures contraction, then the multiplicative domain of $\Ep$ is $\mathbb C 1$.
\end{proposition}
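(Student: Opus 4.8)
The plan is to show that the multiplicative domain of $\Ep$ is always contained in those of $\Phi$ and $\Psi$, and then apply Corollary \ref{trivial mult.dom} to the one that is a Bures contraction.

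First some preliminaries. Since $\Ep$, $\Phi$, $\Psi$ are $2$-positive they are Schwarz maps, hence unital by Proposition \ref{basic positive facts}(3). Unitality gives $\mathbb{C}1\subseteq\mathcal M_\Ep$ for free, so only the reverse inclusion is at issue. Being Schwarz, each of the three maps satisfies $\mathcal M_\Ep=\mathcal S_\Ep$, $\mathcal M_\Phi=\mathcal S_\Phi$, $\mathcal M_\Psi=\mathcal S_\Psi$ by \cite{choi1974,Stormer-book}, so it suffices to work with the sets $\mathcal S$, i.e.\ with the two Schwarz defects. For a Schwarz map $\Lambda$ on $\A$ and $x\in\A$, set $D_\Lambda(x)=\Lambda(x^*x)-\Lambda(x)^*\Lambda(x)\ge 0$ and $D_\Lambda'(x)=\Lambda(xx^*)-\Lambda(x)\Lambda(x)^*\ge 0$, so that $x\in\mathcal S_\Lambda$ if and only if $D_\Lambda(x)=0=D_\Lambda'(x)$.

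The key step is the identity obtained by substituting $\Ep=\tfrac12(\Phi+\Psi)$, using linearity of $\Phi$ and $\Psi$, and collecting terms:
\[
D_\Ep(x)=\tfrac12 D_\Phi(x)+\tfrac12 D_\Psi(x)+\tfrac14\bigl(\Phi(x)-\Psi(x)\bigr)^*\bigl(\Phi(x)-\Psi(x)\bigr),
\]
and likewise $D_\Ep'(x)=\tfrac12 D_\Phi'(x)+\tfrac12 D_\Psi'(x)+\tfrac14\bigl(\Phi(x)-\Psi(x)\bigr)\bigl(\Phi(x)-\Psi(x)\bigr)^*$. Every summand on the right is positive. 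Hence if $x\in\mathcal S_\Ep$, so that $D_\Ep(x)=0=D_\Ep'(x)$, then each summand vanishes; in particular $D_\Phi(x)=D_\Phi'(x)=0$ and $D_\Psi(x)=D_\Psi'(x)=0$, so $x\in\mathcal S_\Phi\cap\mathcal S_\Psi$. This proves $\mathcal M_\Ep=\mathcal S_\Ep\subseteq\mathcal S_\Phi=\mathcal M_\Phi$ (and $\subseteq\mathcal M_\Psi$). Assuming, say, that $\Phi$ is the Bures contraction, Corollary \ref{trivial mult.dom} gives $\mathcal M_\Phi=\mathbb{C}1$, whence $\mathcal M_\Ep\subseteq\mathbb{C}1$; combined with $\mathbb{C}1\subseteq\mathcal M_\Ep$ this yields $\mathcal M_\Ep=\mathbb{C}1$.

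The only delicate point is the verification of the displayed identities — a short but slightly fiddly expansion, and one must remember to handle both the $x^*x$ side and the $xx^*$ side to land inside $\mathcal S_\Phi$ rather than merely in $\mathcal M_\Phi$ via a one-sided condition. Beyond that there is no real obstacle; the positivity of the three summands does all the work.
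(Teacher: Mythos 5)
Your proof is correct: the displayed identity
\(
D_\Ep(x)=\tfrac12 D_\Phi(x)+\tfrac12 D_\Psi(x)+\tfrac14\bigl(\Phi(x)-\Psi(x)\bigr)^*\bigl(\Phi(x)-\Psi(x)\bigr)
\)
checks out by direct expansion, and since each summand is positive, $x\in\mathcal S_\Ep$ forces $x\in\mathcal S_\Phi\cap\mathcal S_\Psi$; together with $\mathcal S=\mathcal M$ for Schwarz maps and Corollary \ref{trivial mult.dom} applied to the Bures contractive summand, this yields $\mathcal M_\Ep=\mathbb C\,1$. The difference from the paper is one of packaging rather than substance: the paper simply invokes Choi's result \cite[Theorem 3.3]{choi1974}, which states that $\mathcal M_\Ep=\mathcal M_\Phi\cap\mathcal M_\Psi\cap\{x\in\A:\Ep(x)=\Phi(x)=\Psi(x)\}$, and then applies Corollary \ref{trivial mult.dom}, whereas you re-derive the only inclusion that is needed, $\mathcal M_\Ep\subseteq\mathcal M_\Phi\cap\mathcal M_\Psi$, from scratch via the convexity identity for the Schwarz defects (your identity in fact also gives $\Phi(x)=\Psi(x)$ on $\mathcal M_\Ep$, so you have essentially reproduced the proof of the cited theorem). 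What your route buys is a self-contained argument that makes visible exactly where $2$-positivity enters (only through the Schwarz property and $\mathcal S=\mathcal M$); what the paper's route buys is brevity and the stronger structural statement from Choi. Your handling of both the $x^*x$ and $xx^*$ defects is the right care to take, and your appeal to the paper's standing convention that $2$-positive channels are unital Schwarz channels matches the remark preceding the proposition.
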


\begin{proof} By \cite[Theorem 3.3]{choi1974}, 
\[
 \mathcal{M}_{\mathcal{E}}=\mathcal{M}_{\Phi}\cap
\mathcal{M}_{\Psi}\cap\{x\in\A\,:\, \mathcal{E}(x)=\Phi(x)=\Psi(x)\} .
\]
If $\Phi$ or $\Psi$ is a Bures contraction, then it is has multiplicative domain $\mathbb C 1$,
by Corollary \ref{trivial mult.dom}. Therefore, $\mathcal M_\Ep=\mathbb C 1$.
\end{proof}

Because the set of $2$-positive channels is convex, the proposition above
says that any map lying on a line segment passing through a Bures contraction 
will have trivial multiplicative domain.

\subsection{Fixed Points}

\begin{definition} If $\Phi$ is a linear transformation on a vector space $V$, then $x \in V$ is a 
\emph{fixed point} of $\Phi$ if $\Phi(x)=x$. The vector subspace
\[
\mbox{\rm Fix}\,\Phi=\{x\in V\,:\,\Phi(x)=x\} 
\]
is called the \emph{fixed point space of $\Phi$}.
\end{definition}

Our interest here is with the fixed points of positive linear maps and channels.
The first assertion of the following proposition is
widely known, while the second assertion is an algebraic variant of a theorem of Kribs \cite{kribs2003}.
 
The notation $[x,y]$ below denotes the commutator $[x,y]=xy-yx$.

\begin{proposition}\label{kribs1} The following statements hold for a unital
channel $\Ep$ on $\A$:
\begin{enumerate}
\item if $\Ep$ is a Schwarz channel, then $\mbox{\rm Fix}\,\Ep$ is a unital C$^*$-algebra;
\item if there exist $w_1,\dots, w_n\in\A$ such that $\Ep(x)=\displaystyle\sum_{k=1}^n w_kx w_k^*$, for all $x\in \A$,
and if the linear span of the projections in the C$^*$-algebra $\mbox{\rm Fix}\,\Ep$ is dense in $\mbox{\rm Fix}\,\Ep$, then
\[
\mbox{\rm Fix}\,\Ep=\{x\in \A\,:\,[x,w_k]=[x,w_k^*]=0, \; \forall\, k=1,\dots,n\}.
\]
\end{enumerate}
\end{proposition}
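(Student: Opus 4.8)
The plan is to prove statement (1) using the multiplicative-domain machinery already in place, and statement (2) via the standard Kribs-type fixed-point argument adapted to the tracial setting.

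For (1), the key observation is that $\mathrm{Fix}\,\Ep \subseteq \mathcal M_\Ep$ whenever $\Ep$ is a unital Schwarz channel. To see this, take $x\in\mathrm{Fix}\,\Ep$; since $\Ep$ is unital it is trace preserving, so for each $\tau$-density element nothing forces the issue directly, but the cleaner route is to apply the Schwarz inequality to $x$ and to $x^*$ and exploit trace preservation. Concretely, $\Ep(x^*x)\geq \Ep(x)^*\Ep(x)=x^*x$, so $\Ep(x^*x)-x^*x\in\A_+$; applying $\tau$ and using $\tau\circ\Ep=\tau$ gives $\tau(\Ep(x^*x)-x^*x)=0$, and faithfulness of $\tau$ forces $\Ep(x^*x)=x^*x=\Ep(x)^*\Ep(x)$. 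The same argument with $x$ replaced by $x^*$ gives $\Ep(xx^*)=\Ep(x)\Ep(x)^*$, so $x\in\mathcal S_\Ep=\mathcal M_\Ep$ (the last equality holding since $\Ep$ is a Schwarz map). Thus $\mathrm{Fix}\,\Ep$ is contained in the C$^*$-algebra $\mathcal M_\Ep$, and on $\mathcal M_\Ep$ the map $\Ep$ is a $*$-homomorphism; hence the set of its fixed points is a $*$-subalgebra. It is norm-closed because $\Ep$ is continuous (Proposition \ref{basic positive facts}(1)), and it contains $1$ because $\Ep$ is unital. Therefore $\mathrm{Fix}\,\Ep$ is a unital C$^*$-algebra.

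For (2), write $\mathcal F=\mathrm{Fix}\,\Ep$ and let $\mathcal A_0=\{x\in\A:[x,w_k]=[x,w_k^*]=0,\ \forall k\}$ denote the relative commutant. The inclusion $\mathcal A_0\subseteq\mathcal F$ is immediate: if $x$ commutes with every $w_k$ and $w_k^*$, then $\sum_k w_k x w_k^* = x\sum_k w_k w_k^* = x\Ep(1)=x$, using unitality. For the reverse inclusion, the standard argument proceeds through projections. If $p\in\mathcal F$ is a projection, then by part (1) it lies in $\mathcal M_\Ep$, so $\Ep(p)=\Ep(p)^2$ and more importantly $\Ep(pxp)=\Ep(p)\Ep(x)\Ep(p)=p\Ep(x)p$ and $\Ep((1-p)x(1-p))=(1-p)\Ep(x)(1-p)$ for all $x$. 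Taking $x=1$ and summing, and comparing with $\Ep(1)=1=\Ep(pxp)+\Ep(px(1-p))+\Ep((1-p)xp)+\Ep((1-p)x(1-p))$ forces the off-diagonal corners $\Ep(px(1-p))$ to vanish for $x=1$; the usual trick is then to apply a positivity/Schwarz estimate to the element $w_k p$ (or $\sum$) to conclude $w_k p = p w_k p$, i.e. $(1-p)w_k p=0$, and symmetrically $p w_k (1-p)=0$, giving $[p,w_k]=0$ and likewise $[p,w_k^*]=0$. Hence every projection in $\mathcal F$ lies in $\mathcal A_0$. Since by hypothesis the linear span of the projections of the C$^*$-algebra $\mathcal F$ is dense in $\mathcal F$, and $\mathcal A_0$ is norm-closed, we conclude $\mathcal F\subseteq\mathcal A_0$, completing the proof.

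The main obstacle is the projection step in (2): extracting $[p,w_k]=0$ from $p\in\mathcal F$. The cleanest way is to observe that for a projection $p\in\mathcal M_\Ep$ one has, for every $x\in\A$, $\Ep(px)=\Ep(p)\Ep(x)=p\Ep(x)$ and $\Ep(xp)=\Ep(x)p$; applying this with $x=1$ is vacuous, so instead one uses that $\Ep(p)=p$ together with the Kraus form to write $p=\sum_k w_k p w_k^*$, hence $\tau(p)=\sum_k\tau(w_k p w_k^*)=\sum_k\tau(p w_k^* w_k p)$... the efficient finish is to note $p - \sum_k p w_k p w_k^* p = p(1-\Ep_p(p))p\geq 0$ where $\Ep_p=\mathrm{Ad}_p\circ\Ep\circ\mathrm{Ad}_p$ is the compression, and a trace computation combined with $\sum_k p w_k^* (1-p) w_k p\geq 0$ pins down $(1-p)w_k p=0$. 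This is where one must be careful to use faithfulness of $\tau$ rather than finite-dimensional rank arguments, and it is the one place where the C$^*$-algebraic generality genuinely requires the Schwarz hypothesis and trace preservation to substitute for the Hilbert-space dimension counting in Kribs's original proof.
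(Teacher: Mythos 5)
Your part (1) is correct and is essentially the paper's own argument: the Schwarz inequality together with trace preservation and faithfulness of $\tau$ places $\mathrm{Fix}\,\mathcal{E}$ inside $\mathcal{S}_{\mathcal{E}}=\mathcal{M}_{\mathcal{E}}$, and the multiplicative-domain property then closes $\mathrm{Fix}\,\mathcal{E}$ under products. In (2), your easy inclusion (commutation with the $w_k$ plus $\mathcal{E}(1)=1$), the reduction of the reverse inclusion to projections, and the finish via density of the span of projections and norm-closedness of the relative commutant all match the paper's outline.

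The genuine gap is the central step of (2): from $\mathcal{E}(p)=p$ for a projection $p\in\mathrm{Fix}\,\mathcal{E}$ you never actually derive $(1-p)w_kp=0$. Your first attempt reduces to taking $x=1$ in the corner decomposition, which, as you yourself note, is vacuous because $p(1-p)=0$, and the phrase ``apply a positivity/Schwarz estimate to $w_kp$'' is an assertion, not an argument. Your second attempt trails off in an ellipsis, and the displayed relation $p-\sum_k pw_kpw_k^*p=p\bigl(1-\mathcal{E}_p(p)\bigr)p\geq 0$ is in fact the identity $0=0$, since $\sum_k pw_kpw_k^*p=p\,\mathcal{E}(p)\,p=p$; it carries no information. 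The step can be completed, and more cheaply than you suggest: from $p=\mathcal{E}(p)$ one gets
\[
0=(1-p)p(1-p)=(1-p)\mathcal{E}(p)(1-p)=\sum_{k=1}^n\bigl((1-p)w_kp\bigr)\bigl((1-p)w_kp\bigr)^*,
\]
a sum of positive elements, whence $(1-p)w_kp=0$ for every $k$; applying the same computation to $1-p\in\mathrm{Fix}\,\mathcal{E}$ gives $pw_k(1-p)=0$, so $[p,w_k]=0$ and, taking adjoints, $[p,w_k^*]=0$ --- no trace, Schwarz estimate, or multiplicative-domain detour is needed. (Alternatively, your trace idea does work if carried out: $\tau\bigl(\sum_k pw_k^*(1-p)w_kp\bigr)=\tau\bigl((1-p)\mathcal{E}(p)(1-p)\bigr)=0$ by traciality, and faithfulness kills each positive summand.) The paper accomplishes the same thing in vector form, representing $\A\subseteq\B(\H)$ and computing $\|p\xi\|^2=\sum_k\|pw_k^*\xi\|^2$ to conclude that $\ker p$ and $\mathrm{ran}\,p$ are invariant, hence reducing, for every $w_k^*$. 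As written, however, your proof of (2) is missing its key step.
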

\begin{proof}
%
Suppose that $\Ep$ is a Schwarz channel.
If $x\in \mbox{\rm Fix}\,\mathcal E$, then the Schwarz inequality yields
$0\leq\mathcal E(x^*x)-\mathcal E(x^*)\mathcal E(x)=\mathcal E(x^*x)-x^*x$,
where the final equality is a consequence of the hypothesis $x\in \mbox{\rm Fix}\,\mathcal E$.
On evaluating the trace, we obtain the inequality
\[
0\leq\tau\left[\mathcal E(x^*x)-\mathcal E(x^*)\mathcal E(x)\right]=\tau(x^*x)-\tau(x^*x)=0.
\]
Thus, the positive element
$\mathcal E(x^*x)-\mathcal E(x^*)\mathcal E(x)$ has zero trace, which yields 
$\mathcal E(x^*x)=\mathcal E(x^*)\mathcal E(x)$. Hence, 
$x\in \mathcal S_\Ep =\mathcal M_\Ep$. 
Therefore, if $x_1,x_2\in \mbox{\rm Fix}\,\mathcal E$,
then $x_1,x_2\in \mathcal M_\Ep$ and so $\mathcal E(x_1x_2)=\mathcal E(x_1)\mathcal E(x_2)=x_1x_2$,
which proves that $x_1x_2\in \mbox{\rm Fix}\,\mathcal E$.

To prove the second assertion, suppose without loss of generality that $\A$ is 
represented faithfully as a unital C$^*$-subalgebra of $\B(\H)$
for some Hilbert space $\H$. The set $\B_\Ep=\{x\in \A\,:\,[x,w_k]=[x,w_k^*]=0, \; \forall\, k=1,\dots,n\}$
is a unital  C$^*$-subalgebra of $\A$ such that
$\B_\Ep \subseteq \mbox{\rm Fix}\,\Ep$. 
Conversely, choose a projection $p\in \mbox{\rm Fix}\,\Ep$. If $\xi\in \H$,
then,
\[
\|p\xi\|^2 = 
\langle p\xi,\xi\rangle= \langle \Ep(p)\xi,\xi\rangle=\sum_{k=1}^n\langle pw_k^*\xi,w_k^*\xi\rangle=
\sum_{k=1}^n\|p w_k^*\xi\|^2 .
\]
Thus, if $\xi\in\ker p$, then $w_k^*\xi\in\ker p$ for every $k$, which proves that $\ker p$ is invariant for each $w_k^*$.
A similar argument, using $1-p\in\mbox{\rm Fix}\,\mathcal E$ in place of $p$ shows that $\mbox{ran}\,p=\ker (1-p)$
is invariant for each $w_k^*$. Thus, $\mbox{ran}\, p$ is a reducing subspace for each $w_k^*$, which proves that
$pw_k=w_kp$ for each $k$. Hence, $p\in \B_\Ep $. Because the linear span of the projections in 
$\mbox{\rm Fix}\,\mathcal E$ is dense in $\mbox{\rm Fix}\,\mathcal E$,
we deduce from the continuity of $\Ep$ that $\mbox{\rm Fix}\,\mathcal E\subseteq \B_\Ep$.
\end{proof}

A useful result concerning fixed points of Bures contractive Schwarz channels is:

\begin{proposition}\label {bcsm fixed} If $\Ep$ is a Bures contractive Schwarz channel, then $\mbox{\rm Fix}\,\Ep=\mathbb C\,1$
\end{proposition}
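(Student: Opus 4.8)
The plan is to combine the fixed-point structure result from Proposition \ref{kribs1} with the isometry-on-multiplicative-domain result from Proposition \ref{miza1}. First I would note that since $\Ep$ is a Schwarz channel, Proposition \ref{basic positive facts}(3) gives $\Ep(1)=1$, so $\Ep$ is a unital Schwarz channel and Proposition \ref{kribs1}(1) applies: $\mbox{\rm Fix}\,\Ep$ is a unital C$^*$-subalgebra of $\A$. The key observation, already extracted inside the proof of Proposition \ref{kribs1}, is that every $x\in\mbox{\rm Fix}\,\Ep$ lies in $\mathcal S_\Ep=\mathcal M_\Ep$; hence $\mbox{\rm Fix}\,\Ep\subseteq\mathcal M_\Ep$.

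Now suppose, for contradiction, that $\mbox{\rm Fix}\,\Ep\neq\mathbb C\,1$. Since $\mbox{\rm Fix}\,\Ep$ is a unital C$^*$-algebra strictly larger than $\mathbb C\,1$, it contains a selfadjoint element $h$ that is not a scalar multiple of $1$. Replacing $h$ by a suitable affine combination $\alpha h+\beta 1$ (which is still in $\mbox{\rm Fix}\,\Ep$ because $\Ep$ is unital and linear), I can arrange that $h\in\A_+$ is nonzero, noninvertible (so not faithful in the sense that it has $0$ in its spectrum is not needed — only that it is genuinely nonconstant), and in particular produce two distinct density elements in $\mbox{\rm Fix}\,\Ep$. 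Concretely: since $h$ is not scalar, its spectral projections give at least two distinct positive elements $a,b\in\mbox{\rm Fix}\,\Ep\cap\A_+$ with $\tau(a),\tau(b)>0$; then $\sigma=\tau(a)^{-1}a$ and $\rho=\tau(b)^{-1}b$ are distinct $\tau$-density elements lying in $\mbox{\rm Fix}\,\Ep\subseteq\mathcal M_\Ep$, and moreover $\Ep(\sigma)=\sigma$, $\Ep(\rho)=\rho$.

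Apply Proposition \ref{miza1} to the pair $\sigma,\rho\in\mathcal D_\tau(\A)\cap\mathcal M_\Ep$: it yields $d_B^\tau\left(\Ep(\sigma),\Ep(\rho)\right)=d_B^\tau(\sigma,\rho)$. But $\Ep(\sigma)=\sigma$ and $\Ep(\rho)=\rho$, so this reads $d_B^\tau(\sigma,\rho)=d_B^\tau(\sigma,\rho)$ — trivially true, which on its own is not yet a contradiction. The contradiction comes instead from the definition of Bures contraction: $\Ep$ being Bures contractive requires $d_B^\tau\left(\Ep(\sigma),\Ep(\rho)\right)<d_B^\tau(\sigma,\rho)$ for all \emph{distinct} $\sigma,\rho$, yet here $d_B^\tau\left(\Ep(\sigma),\Ep(\rho)\right)=d_B^\tau(\sigma,\rho)$ with $\sigma\neq\rho$ (they are distinct because $a$ and $b$ are distinct and Bures distance separates points by Theorem \ref{bures metric is a metric}). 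This contradicts local contractivity, so no such nonscalar $h$ exists, and therefore $\mbox{\rm Fix}\,\Ep=\mathbb C\,1$.

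The only real subtlety — the step I would want to state carefully — is the production of two \emph{distinct density elements} inside $\mbox{\rm Fix}\,\Ep$ from the mere fact that $\mbox{\rm Fix}\,\Ep\supsetneq\mathbb C\,1$. This is where one uses that $\mbox{\rm Fix}\,\Ep$ is a genuine C$^*$-algebra (Proposition \ref{kribs1}(1)): a unital C$^*$-algebra properly containing $\mathbb C\,1$ has a selfadjoint element with at least two points in its spectrum, hence two orthogonal nonzero spectral projections $p,q$, each with strictly positive trace by faithfulness of $\tau$; normalising gives the required distinct $\tau$-density elements. Everything else is a direct invocation of Propositions \ref{basic positive facts}, \ref{kribs1}, and \ref{miza1} together with the definition of a Bures contraction. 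I would present it as a short proof-by-contradiction, roughly one paragraph.

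\begin{proof}
Since $\Ep$ is a Schwarz channel, Proposition \ref{basic positive facts} gives $\Ep(1)=1$, so $\Ep$ is unital and Proposition \ref{kribs1} applies: $\mbox{\rm Fix}\,\Ep$ is a unital C$^*$-subalgebra of $\A$, and the argument in the proof of that proposition shows $\mbox{\rm Fix}\,\Ep\subseteq\mathcal S_\Ep=\mathcal M_\Ep$. Suppose $\mbox{\rm Fix}\,\Ep\neq\mathbb C\,1$. Then the unital C$^*$-algebra $\mbox{\rm Fix}\,\Ep$ contains a selfadjoint element that is not a scalar multiple of $1$, hence it contains two nonzero orthogonal spectral projections $p$ and $q$; by faithfulness of $\tau$ we have $\tau(p),\tau(q)>0$. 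Set $\sigma=\tau(p)^{-1}p$ and $\rho=\tau(q)^{-1}q$. These are distinct $\tau$-density elements belonging to $\mathcal D_\tau(\A)\cap\mathcal M_\Ep$, and $\Ep(\sigma)=\sigma$, $\Ep(\rho)=\rho$. By Proposition \ref{miza1},
\[
d_B^\tau\left(\Ep(\sigma),\Ep(\rho)\right)=d_B^\tau(\sigma,\rho).
\]
Since $\sigma\neq\rho$, this contradicts the assumption that $\Ep$ is a Bures contraction, namely that $d_B^\tau\left(\Ep(\sigma),\Ep(\rho)\right)<d_B^\tau(\sigma,\rho)$ for all distinct $\sigma,\rho\in\mathcal D_\tau(\A)$. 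Therefore $\mbox{\rm Fix}\,\Ep=\mathbb C\,1$.
\end{proof}
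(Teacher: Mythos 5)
Your overall strategy matches the paper's: show $\mbox{\rm Fix}\,\Ep\subseteq\mathcal M_\Ep$ via the Schwarz-inequality-plus-trace argument from Proposition \ref{kribs1}, and then let Bures contractivity force triviality. (The paper simply cites Corollary \ref{trivial mult.dom}, i.e.\ $\mathcal M_\Ep=\mathbb C\,1$ for a Bures contraction, whereas you re-derive that corollary's content inline via Proposition \ref{miza1}.) However, there is one step in your argument that fails as written: from a nonscalar selfadjoint $h\in\mbox{\rm Fix}\,\Ep$ you claim that $\mbox{\rm Fix}\,\Ep$ ``contains two nonzero orthogonal spectral projections $p$ and $q$.'' Spectral projections of $h$ live in the enveloping von Neumann algebra (or bidual), not in the C$^*$-algebra generated by $h$; a unital C$^*$-algebra strictly larger than $\mathbb C\,1$ may contain no nontrivial projections at all (e.g.\ $C[0,1]$), and nothing in the hypotheses prevents $\mbox{\rm Fix}\,\Ep$ from being projectionless. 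Since the standing assumption is that $\A$ is a general unital C$^*$-algebra, this is a genuine gap, not a matter of exposition.

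The repair is immediate and does not change your structure: given a nonscalar selfadjoint $h\in\mbox{\rm Fix}\,\Ep$, set $a=h+\|h\|\,1\in\mbox{\rm Fix}\,\Ep\cap\A_+$, which is nonzero and nonscalar, and take $\sigma=\tau(a)^{-1}a$ together with the centre $\zeta=\tau(1)^{-1}1$; both are $\tau$-density elements in $\mbox{\rm Fix}\,\Ep\subseteq\mathcal M_\Ep$, they are distinct because $\sigma$ is nonscalar, and Proposition \ref{miza1} then gives $d_B^\tau\left(\Ep(\sigma),\Ep(\zeta)\right)=d_B^\tau(\sigma,\zeta)$ with $\sigma\neq\zeta$, contradicting Bures contractivity. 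Alternatively, once you have $\mbox{\rm Fix}\,\Ep\subseteq\mathcal M_\Ep$ you can simply quote Corollary \ref{trivial mult.dom}, which is exactly the paper's (shorter) route.
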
 

\begin{proof} By Proposition \ref{kribs1},  the fixed point subspace
of a Schwarz channel is a C$^*$-algebra; hence, $\mbox{\rm Fix}\,\Ep$ is a subset of the multiplicative
domain $ \mathcal M_{\Ep}$ of $\Ep$.
In addition, $\Ep$ is a Bures contraction; therefore, $\mathcal M_{\Ep}=\mathbb C\,1$  by Corollary \ref{trivial mult.dom}.
\end{proof}

\section{Frobenius Theory of Bures Contractions}

\subsection{Irreducible channels}

The central idea in Frobenius theory is that of irreducibility. This notion appears as a combinatorial concept in matrix theory,
but in more general contexts the notion of irreducibility is related to the absence of invariant faces in a cone. The main
results of this section show that Bures contractive channels are irreducible, in both the C$^*$-algebra and von Neumann algebra frameworks.

\begin{definition} A nonempty subset $F$ of $\A_+$ is a \emph{face} of $\A_+$ if, for $a\in \A_+$ and $b\in F$, the inequality
$a\leq b$ holds only if $a\in F$.
\end{definition}

\begin{definition}\label{defn:irr} Assume that $\Phi:\A\rightarrow\A$ and $\Psi:\M\rightarrow\M$ are positive linear maps of a
C$^*$-algebra $\A$ and a von Neumann algebra $\M$, and assume that $\Psi$ is normal.
\begin{enumerate}
\item If there exists a norm-closed
face $F$ of $\A_+$ different from $\{0\}$ and $\A_+$ such that $\Phi(F)\subseteq F$, 
then $\Phi$ is said to be \emph{reducible}.
\item If there exists a ultraweakly-closed
face $F$ of $\M_+$ different from $\{0\}$ and $\M_+$ such that $\Psi(F)\subseteq F$, 
then $\Psi$ is said to be \emph{reducible}.
\end{enumerate}
A positive linear map on $\A$ or $\M$ that is not reducible is called \emph{irreducible}.
\end{definition}

The precise determination of the norm-closed faces and the ultraweakly-closed faces of the positive cones of, respectively, 
C$^*$-algebras and von Neumann algebras 
is provided in the monograph \cite{Alfsen--Schultz-book1}.

The concept of irreducibility is highly studied in matrix theory. In our setting, a matrix $\Phi$ with nonnegative entries is a
positive linear map on the 
abelian C$^*$-algebra $\A=\mathbb C^d$ relative to the canonical trace, and such a map $\Phi$ is irreducible in the sense of Definition \ref{defn:irr} 
if and only if $\Phi$ is irreducible in the sense of matrix theory (in that the directed adjacency matrix of $\Phi$ is strongly connected).

In noncommutative algebra, a
decomposable map $\Phi:\M_d(\mathbb C)\rightarrow\M_d(\mathbb C)$ of the form
\[
\Phi(x)=\sum_{k=1}^m a_k x a_k^*  +  \sum_{\ell=1}^n b_\ell x^t b_\ell^*,
\]
for $x\in\M_d(\mathbb C)$, is irreducible if and only if the only invariant subspaces
in common for the family of operators $\{a_k^*,b_\ell^*\}_{k,\ell}$ are the trivial subspaces $\{0\}$ and $\mathbb C^d$
\cite{farenick1996}.

The following observation is a variant of \cite[Proposition 1]{farenick1996}.

\begin{lemma}\label{irr1} If $\Phi:\M\rightarrow\M$ is a 
normal positive linear map of norm $\|\Phi\|\leq 1$ on a von Neumann algebra $\M$,
then $\Phi$ is reducible if and only if
there exists a nontrivial projection $p\in\M$ such that $\Phi(p)\leq p$. 
\end{lemma}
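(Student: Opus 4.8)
The statement is an equivalence, and the ``if'' direction is immediate: if $p\in\M$ is a nontrivial projection with $\Phi(p)\le p$, then the order interval $F_p=\{a\in\M_+:a\le\lambda p\text{ for some }\lambda\ge 0\}$ is a norm-closed (indeed ultraweakly-closed) hereditary face of $\M_+$ which is neither $\{0\}$ nor $\M_+$, and I claim $\Phi(F_p)\subseteq F_p$: if $a\le\lambda p$ then $\Phi(a)\le\lambda\Phi(p)\le\lambda p$, so $\Phi(a)\in F_p$. Hence $\Phi$ is reducible. (Here the description of ultraweakly-closed faces of $\M_+$ from \cite{Alfsen--Schultz-book1} — every such face is $p\M p\cap\M_+$ for a unique projection $p$ — is what identifies $F_p$ as a genuine face.)

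\textbf{The substantive direction} is ``only if''. Suppose $\Phi$ is reducible, so there is an ultraweakly-closed face $F$ of $\M_+$, with $F\ne\{0\}$ and $F\ne\M_+$, such that $\Phi(F)\subseteq F$. By the structure theory of faces of the positive cone of a von Neumann algebra \cite{Alfsen--Schultz-book1}, $F$ has the form $F=(p\M p)_+=\{a\in\M_+:a\le\lambda p,\ \text{some }\lambda\ge 0\}$ for a unique projection $p\in\M$, and $p$ is nontrivial precisely because $F\ne\{0\},\M_+$. The goal is then to deduce $\Phi(p)\le p$. First I would observe that $p\in F$ (take $\lambda=1$), so $\Phi(p)\in F$, i.e. $\Phi(p)\le\mu p$ for some $\mu\ge 0$; in particular $\Phi(p)=p\Phi(p)p$, so $\Phi(p)$ is supported in $p\M p$. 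It remains to improve $\Phi(p)\le\mu p$ to $\Phi(p)\le p$, i.e. to get the constant down to $1$. This is where the hypothesis $\|\Phi\|\le 1$ enters: since $0\le p\le 1$ and $\Phi$ is positive, $0\le\Phi(p)\le\Phi(1)$, and by Russo--Dye (Proposition \ref{basic positive facts}(1)) $\|\Phi(1)\|=\|\Phi\|\le 1$, so $\|\Phi(p)\|\le 1$; combined with $\Phi(p)=p\Phi(p)p\in (p\M p)_+$, this gives $\Phi(p)\le\|\Phi(p)\|\,p\le p$. (Normality of $\Phi$ is not needed for this argument but is retained in the statement because it is standing in the surrounding discussion; it does ensure $F$ and its image behave well ultraweakly.)

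\textbf{Where the difficulty lies.} The only genuinely nonroutine ingredient is the identification of ultraweakly-closed faces of $\M_+$ with order intervals $(p\M p)_+$ determined by projections $p\in\M$ — this is the cited facial structure theory and I would simply invoke \cite{Alfsen--Schultz-book1} rather than reprove it. Everything else (the order-interval computation for the easy direction, and the norm bound via Russo--Dye for the hard direction) is short. One point to be careful about: when $F=(p\M p)_+$, the condition $\Phi(F)\subseteq F$ is \emph{equivalent} to $\Phi(p)\in F$ (since $p$ dominates every element of $F$ up to scalar and $\Phi$ is positive and bounded), so there is no loss in testing the face-invariance on the single element $p$; I would state this reduction explicitly, as it is what makes the projection $p$ the ``coordinate'' recording reducibility.
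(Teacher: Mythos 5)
Your proof is correct and takes essentially the same route as the paper: both directions rest on the Alfsen--Schultz identification of ultraweakly-closed faces of $\M_+$ with the positive parts determined by a projection $p$ (the paper passes through an ultraweakly closed left ideal with $F=J_+$ and the largest projection of $F\cap\{x:\|x\|\le1\}$), together with the hypothesis $\|\Phi\|\le 1$. The only cosmetic differences are that in the ``only if'' step you get $\Phi(p)\le p$ from $\Phi(p)=p\Phi(p)p$ and $\|\Phi(p)\|\le 1$ rather than from invariance of the truncated face $F\cap\{x:\|x\|\le1\}$, and in the easy direction you use the face $(p\M p)_+$ where the paper uses the order interval $\{a\in\M_+:a\le p\}$; both variants are valid.
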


\begin{proof} Assume that $\Phi$ is reducible; thus, there is a 
nontrivial ultraweakly-closed face $F$ of $\M_+$ such that $\Phi(F)\subseteq F$.
Therefore, there exists an ultraweakly-closed left ideal $J$ of $\M$ such that $F=J_+$
\cite[Theorem 3.13]{Alfsen--Schultz-book1}. Now, if $1$ were an element of $F$, then 
$1\in J_+\subseteq J$ implies that $x=x(1)\in J$ for every $x\in \M$, and so we would obtain $J=\M$
and $F=\M_+$, which is impossible since $F$ is nontrivial. 

Consider now $F_1=F\cap\{x\in\M\,:\,\|x\|\leq 1\}$, which is a (weakly-closed) face of $\M_+$.
Because both $F$ and the closed unit ball of $\M$ are invariant under $\Phi$, we deduce that $F_1$ is $\Phi$-invariant.
Furthermore, the exists a projection $p\in F_1$ such that $a\leq p$ for all $a\in F_1$ 
\cite[Proposition 3.9]{Alfsen--Schultz-book1}. By the remarks of the previous paragraph, $p\not=1$; and
since $F\not=\{0\}$, $F$ has a nonzero elements $a$ of norm $\|a\|\leq 1$. Thus, $p$ is a nontrivial projection.
Because $p\in F_1$ and $F_1$ is $\Phi$-invariant, we have that $\Phi(p)\in F_1$, whence $\Phi(p)\leq p$.

Conversely, suppose that $p\in\M$ is a nontrivial projection such that $\Phi(p)\leq p$. Let $F=\{a\in\M_+\,:\,a\leq p\}$,
which is a proper ultraweakly closed face of $\M_+$. Further, if $a\in F$, then $a\leq p$ implies that $\Phi(a)\leq\Phi(p)\leq p$,
and so $\Phi(a)\in F$. Therefore, $\Phi$ is reducible.
\end{proof}

\begin{corollary}\label{irr criterion} A contractive normal trace-preserving positive linear 
map $\Ep$ on a finite von Neumann algebra $\N$ is reducible 
if and only if 
there exists a nontrivial projection $p\in\N$ such that $\Ep(p)=p$.
\end{corollary}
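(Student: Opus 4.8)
The plan is to deduce Corollary \ref{irr criterion} from Lemma \ref{irr1} by upgrading the inequality $\Ep(p)\leq p$ to the equality $\Ep(p)=p$, using the trace-preservation hypothesis on $\Ep$. One direction is trivial: if there is a nontrivial projection $p$ with $\Ep(p)=p$, then in particular $\Ep(p)\leq p$, so $\Ep$ is reducible by Lemma \ref{irr1}, once we note that the hypotheses of Lemma \ref{irr1} are in force. Here $\Ep$ is normal and positive by assumption, and it is contractive, so $\|\Ep\|\leq 1$; thus Lemma \ref{irr1} applies. (Strictly, Lemma \ref{irr1} is about a von Neumann algebra $\M$; we use it with $\M=\N$.)

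For the forward direction, suppose $\Ep$ is reducible. By Lemma \ref{irr1} there is a nontrivial projection $p\in\N$ with $\Ep(p)\leq p$. The key step is that $p-\Ep(p)$ is a positive element of $\N$, and applying the trace gives
\[
\tau\bigl(p-\Ep(p)\bigr)=\tau(p)-\tau\bigl(\Ep(p)\bigr)=\tau(p)-\tau(p)=0,
\]
using $\tau\circ\Ep=\tau$. Since $\tau$ is a faithful trace on $\N$ and $p-\Ep(p)\in\N_+$ has trace zero, faithfulness forces $p-\Ep(p)=0$, i.e.\ $\Ep(p)=p$. This produces the required nontrivial projection fixed by $\Ep$.

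I do not anticipate a serious obstacle here; the corollary is essentially an immediate consequence of Lemma \ref{irr1} together with the standard faithfulness argument already used in the proof of Proposition \ref{basic positive facts}(3). The one point that requires a moment's care is confirming that "contractive" in the statement means $\|\Ep\|\leq 1$ (so that Lemma \ref{irr1} is literally applicable) rather than something strictly smaller; this is just the reading of the hypothesis, and on a unital positive map it would in any case follow from $\Ep(1)\leq 1$ via Russo--Dye (Proposition \ref{basic positive facts}(1)). With that understood, the proof is a two-line deduction in each direction.

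Here is the proof I would write.

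\begin{proof}
Suppose first that $\Ep$ is reducible. Since $\Ep$ is a contractive normal positive linear map on the von Neumann algebra $\N$, Lemma \ref{irr1} provides a nontrivial projection $p\in\N$ with $\Ep(p)\leq p$. Then $p-\Ep(p)\in\N_+$, and trace-preservation gives $\tau\bigl(p-\Ep(p)\bigr)=\tau(p)-\tau(p)=0$. Because $\tau$ is faithful, $p-\Ep(p)=0$; that is, $\Ep(p)=p$ for a nontrivial projection $p$.

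Conversely, if $p\in\N$ is a nontrivial projection with $\Ep(p)=p$, then in particular $\Ep(p)\leq p$, so $\Ep$ is reducible by Lemma \ref{irr1}.
\end{proof}
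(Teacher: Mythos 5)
Your proof is correct and is essentially identical to the paper's: both directions invoke Lemma \ref{irr1}, and the forward direction upgrades $\Ep(p)\leq p$ to $\Ep(p)=p$ by applying the trace-preserving property and the faithfulness of $\tau$ to the positive element $p-\Ep(p)$. No gaps to report.
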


\begin{proof} If $\Ep$ is reducible, then Lemma \ref{irr1} shows that $\Ep(p)\leq p$
for some nontrivial projection $p\in\N$.
Because $\tau\circ\Ep=\tau$,  
\[
0\leq \tau\left(p-\Ep(p)\right)=\tau(p)-\tau\circ\Ep(p)=\tau(p)-\tau(p)=0.
\]
Hence, as the trace $\tau$ is faithful, $\Ep(p)=p$. 

Conversely, if $\Ep(p)=p$, then  $\Ep$ is reducible by Lemma \ref{irr1}. 
\end{proof}

\begin{example}\label{ru irr} Suppose, for $\lambda\in(0,1)$, that $\Ep_\lambda:\M_2(\M)\rightarrow\M_2(\M)$ is the channel
$\Ep_\lambda(x)=\lambda uxu^* + (1-\lambda)vxv^*$, for  $x\in\M_2(\M)$, where 
\[
u= \left[ \begin{array}{cc} 0&1 \\ 1&0 \end{array}\right]  \;\mbox{\rm and }\;
v= \left[ \begin{array}{cc} 0&-i \\ i&0 \end{array}\right] ,
\]
and where $\M$ is a finite von Neumann algebra. Then $\Ep_\lambda$  is reducible, except in the case where $\M\cong\mathbb C$.
\end{example} 

\begin{proof} Because $u$ and $v$ are selfadjoint unitaries, Proposition \ref{kribs1} shows that $\mbox{\rm Fix}\,\Ep_\lambda$ is the
commutant of the set $\{u,v\}$. A straightforward calculation shows that the commutant of $\{u,v\}$
is $\left\{\left[\begin{array}{cc} a&0\\ 0&a \end{array}\right] \,:\,a\in\M\right\}$. Hence, there is always nontrivial 
projection $p$ for which $\Ep_\lambda(p)=p$, except in the case where $\M\cong\mathbb C$. Therefore, by Corollary \ref{irr criterion}, 
$\Ep_\lambda$ is irreducible if and only if $\M \cong\mathbb C$.
\end{proof}

Recall that if $\Phi:\A\rightarrow\A$ is a positive linear map on a unital C$^*$-algebra $\A$, then $\Phi^{**}:\A^{**}\rightarrow\A^{**}$
is a normal positive linear map on the enveloping von Neumann algebra $\A^{**}$ of $\A$.

In what follows, if $S\subseteq\A$, then $S_+$ shall denote $S\cap\A_+$.

\begin{lemma}\label{reducible dd} The following statements are equivalent for a positive linear map $\Phi:\A\rightarrow\A$:
\begin{enumerate}
\item $\Phi$ is reducible;
\item $\Phi^{**}$ is reducible.
\end{enumerate}
\end{lemma}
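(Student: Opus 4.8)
The plan is to pass between $\A$ and $\A^{**}$ through the canonical embedding $\iota:\A\hookrightarrow\A^{**}$, exploiting the fact that $\A_+$ is weak$^*$-dense in $\A^{**}_+$, that $\Phi^{**}$ extends $\Phi$ and is normal, and that norm-closed faces of $\A_+$ and ultraweakly-closed faces of $\A^{**}_+$ are both classified by (closed, resp. ultraweakly-closed) left ideals or equivalently by projections in $\A^{**}$ \cite[Theorem 3.13, Proposition 3.9]{Alfsen--Schultz-book1}. The key observation is that a norm-closed face $F$ of $\A_+$ and an ultraweakly-closed face $\widetilde F$ of $\A^{**}_+$ correspond to one another when $\widetilde F$ is the ultraweak closure of $F$ and $F=\widetilde F\cap\A_+$; this is exactly the correspondence between a closed left ideal $J\subseteq\A$ and the ultraweakly-closed left ideal it generates in $\A^{**}$, whose support projection $p\in\A^{**}$ is the same for both.

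First I would prove (2)$\Rightarrow$(1). Suppose $\Phi^{**}$ is reducible, so there is a nontrivial ultraweakly-closed face $\widetilde F$ of $\A^{**}_+$ with $\Phi^{**}(\widetilde F)\subseteq\widetilde F$. Write $\widetilde F=\{a\in\A^{**}_+: a\le p\}$ for a nontrivial projection $p\in\A^{**}$, which exists by \cite[Proposition 3.9]{Alfsen--Schultz-book1} applied to the unit ball intersection (as in the proof of Lemma \ref{irr1}). Set $F=\widetilde F\cap\A_+=\{a\in\A_+: a\le p\}$; this is a norm-closed face of $\A_+$, and $\Phi(F)=\Phi^{**}(F)\subseteq\Phi^{**}(\widetilde F)\cap\A_+\subseteq\widetilde F\cap\A_+=F$. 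The only thing to check is that $F$ is nontrivial: $F\ne\A_+$ because $1\notin\widetilde F$ (else $\widetilde F=\A^{**}_+$), and $F\ne\{0\}$ requires producing a nonzero element of $\A_+$ dominated by $p$. For this I would use that $\A$ is weak$^*$-dense in $\A^{**}$ and Kaplansky density, together with the fact that if a nontrivial projection $p\in\A^{**}$ had $\{a\in\A_+:a\le p\}=\{0\}$ then, by a hereditary-subalgebra argument, the hereditary subalgebra $\overline{p\A^{**}p}\cap\A$ would be zero, forcing $p$ to annihilate $\A$ and hence $p=0$, a contradiction. This density argument is the step I expect to require the most care.

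Next I would prove (1)$\Rightarrow$(2). Suppose $\Phi$ is reducible, with a nontrivial norm-closed face $F$ of $\A_+$ satisfying $\Phi(F)\subseteq F$. By \cite[Theorem 3.13]{Alfsen--Schultz-book1} there is a norm-closed left ideal $J\subseteq\A$ with $F=J_+$; let $\widetilde J$ be the ultraweak closure of $J$ in $\A^{**}$, an ultraweakly-closed left ideal, and set $\widetilde F=\widetilde J_+$, an ultraweakly-closed face of $\A^{**}_+$. I would first show $\Phi^{**}(\widetilde F)\subseteq\widetilde F$: since $\Phi^{**}$ is ultraweakly continuous (it is a second adjoint) and $\Phi^{**}(F)=\Phi(F)\subseteq F\subseteq\widetilde F$, continuity and the fact that $F$ is ultraweakly dense in $\widetilde F$ — because $J$ is ultraweakly dense in $\widetilde J$ and positivity is an ultraweakly closed condition, so one can approximate positive elements of $\widetilde J$ by positive elements of $J$ using Kaplansky — give $\Phi^{**}(\widetilde F)\subseteq\overline{\Phi^{**}(F)}^{\,w^*}\subseteq\widetilde F$. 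Finally $\widetilde F$ is nontrivial: $\widetilde F\ne\{0\}$ since $F\ne\{0\}$ and $F\subseteq\widetilde F$; and $\widetilde F\ne\A^{**}_+$ since otherwise $1\in\widetilde J$, which would force $\widetilde J=\A^{**}$ and then $J$ dense in $\A$, contradicting that $J$ is a proper closed ideal (properness of $J$ being equivalent to $F\ne\A_+$). Hence $\Phi^{**}$ is reducible, completing the equivalence.
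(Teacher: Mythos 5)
Your direction (2)$\Rightarrow$(1) contains a genuine gap, and it sits exactly at the step you flag as delicate. Like the paper, you take a nontrivial projection $p\in\A^{**}$ with $\Phi^{**}(p)\le p$ and set $F=\{a\in\A_+ : a\le p\}$, but your proposed proof that $F\ne\{0\}$ — that $p\A^{**}p\cap\A=\{0\}$ would force $p$ to annihilate $\A$ — is false. Take $\A=C[0,1]$ and let $p\in\A^{**}$ be the support projection of the pure state $f\mapsto f(1/2)$: then $p\A^{**}p=\mathbb{C}\,p$ meets $\A$ only in $0$, yet $pfp=f(1/2)p\ne 0$ for suitable $f$, so $p\ne 0$; and no nonzero positive continuous function is dominated by $p$, so $F=\{0\}$ for this nontrivial $p$. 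In fact the statement you are trying to establish (every nontrivial projection of $\A^{**}$ dominates a nonzero positive element of $\A$) is false, and the implication (2)$\Rightarrow$(1) itself appears to fail at this level of generality: for $\A=C(\mathbb T)$ and $\Phi(f)=f\circ R$ with $R$ an irrational rotation, minimality of $R$ shows that $\Phi$ leaves no nontrivial norm-closed face of $\A_+$ invariant, whereas $\Phi^{**}$ fixes the projection $q=\sup_{w\in O}p_w$, where $O$ is a single $R$-orbit and $p_w$ is the support projection of evaluation at $w$; since $q\notin\{0,1\}$, $\Phi^{**}$ is reducible. So no density or hereditary-subalgebra argument can close this gap. (For comparison, the paper's own proof of this direction simply asserts that $F$ is a proper face without addressing $F\ne\{0\}$; note also that only the implication (1)$\Rightarrow$(2) is invoked later, in Theorem \ref{bc irr C*}.)

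Your direction (1)$\Rightarrow$(2) is essentially sound and takes a somewhat different route from the paper: the paper fixes an increasing right approximate identity $\{e_\lambda\}$ of the closed left ideal $J$ with $F=J_+$, puts $q=\sup_\lambda e_\lambda$, and uses normality of $\Phi^{**}$ together with Lemma \ref{irr1} to conclude from $\Phi^{**}(q)\le q$, whereas you transport the face itself to its ultraweak closure using weak$^*$-continuity of $\Phi^{**}$ plus a Kaplansky-type approximation. Two details in your version need tightening: Kaplansky density should be applied to the hereditary C$^*$-subalgebra $J\cap J^*$ (whose positive part is $F$ and whose ultraweak closure has positive part $\widetilde F$), since $J$ itself is only a left ideal and not a $*$-subalgebra; and ultraweak density of $J$ in $\A^{**}$ does not yield norm density of $J$ in $\A$ — to see $\widetilde J\ne\A^{**}$ one should argue instead that a proper closed left ideal lies in the left kernel of a (pure) state $\phi$, so the normal extension of $\phi$ vanishes on the support projection of $\widetilde J$, which therefore cannot be $1$. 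With these repairs this half works; the breakdown is entirely in the other half.
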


\begin{proof} If $\pi_u:\A\rightarrow\B(\H_u)$ is the universal representation of $\A$, then $\A\cong\pi(\A)\subseteq\pi(\A)''=\A^{**}$;
therefore, assume without loss of generality that $\A$ is represented as a weakly dense unital C$^*$-subalgebra of $\A^{**}$.

If $\Phi^{**}$ is reducible, then there exists a nontrivial projection $p\in\A^{**}$ such that $\Phi^{**}(p)\leq p$.
The set $F=\{a\in\A_+\,:\,a\leq p\}$ is a norm-closed proper face of $\A_+$ (as $p\not\in\{0,1\}$). Furthermore, if $a\in F$, 
then the inequality $a\leq p$ leads to $\Phi(a)=\Phi^{**}(a)\leq\Phi^{**}(p)\leq p$, implying that $\Phi(a)\in F$. Hence,
$\Phi$ is reducible.

Conversely, suppose that $\Phi$ is reducible. Without loss
of generality we may assume that $\Phi$ is a contraction, as the invariance of a proper norm-closed face $F$ of $\A_+$ under $\Phi$ is
independent of the norm of $\Phi$.
Let $J$ be a norm-closed left ideal such that $F=J_+$, and fix an increasing right 
approximate identity for $J$ -- namely, an increasing net $\{e_\lambda\}_\lambda\subset J_+$ such that
$\lim_\lambda\|x-xe_\lambda\|=0$ for every $x\in J$. 
Considered as a subset of $\A^{**}$, the ultraweak closure of $J$
is $\overline{J}^{\sigma{\rm-wk}}=\{zq\,:\,z\in\A^{**}\}$, where 
$q\in\A^{**}$ is the 
least projection such that $aq=a$ for every $a\in J_+$ and is given by
$q=\sup_\lambda e_\lambda$ \cite[Proposition 3.44]{Alfsen--Schultz-book1}.
Since $\|e_\lambda\|\leq\|q\|=1$ and $\Phi$ is contractive, the element $\Phi(e_\lambda)\in F$ has norm no greater than $1$;
thus, $\Phi(e_\lambda) \leq q$ \cite[Proposition 3.9]{Alfsen--Schultz-book1}.
Hence, using the normality of $\Phi^{**}$, we obtain
\[
\Phi^{**}(q)= \Phi^{**}\left(\sup_\lambda e_\lambda\right) = \sup_\lambda \Phi^{**}(e_\lambda) =\sup_\lambda \Phi (e_\lambda)\leq q,
\]
which implies that $\Phi^{**}$ is reducible, by Lemma \ref{irr1}.
\end{proof}

The following two results establish the relationship between Bures contractiveness and the irreducibility of Schwarz channels.

\begin{theorem}\label{bc irr vN} A Bures contractive Schwarz channel on a finite von Neumann algebra is irreducible.
\end{theorem}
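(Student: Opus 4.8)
The plan is to prove the contrapositive: if a Schwarz channel $\Ep$ on a finite von Neumann algebra $\N$ is reducible, then $\Ep$ is not a Bures contraction. By Corollary \ref{irr criterion}, reducibility of $\Ep$ gives a nontrivial projection $p\in\N$ with $\Ep(p)=p$. The key idea is that a fixed projection produces a pair of distinct density elements on which $f_\Ep$ acts isometrically with respect to $d_B^\tau$, thereby violating local contractivity.

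First I would observe that $p\in\mbox{\rm Fix}\,\Ep$, and since $\Ep$ is a Schwarz channel, Proposition \ref{kribs1}(1) shows that $\mbox{\rm Fix}\,\Ep$ is a unital C$^*$-algebra; in particular $1-p\in\mbox{\rm Fix}\,\Ep$, and both $p$ and $1-p$ lie in $\mathcal M_\Ep$ (as established inside the proof of Proposition \ref{kribs1}, fixed points of a Schwarz channel lie in $\mathcal S_\Ep=\mathcal M_\Ep$). Because $p$ is nontrivial and $\tau$ is faithful, $\tau(p)\in(0,1)$ after normalising $\tau(1)=1$; write $\alpha=\tau(p)$, so $\tau(1-p)=1-\alpha$ with $\alpha\in(0,1)$. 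Then $\sigma=\alpha^{-1}p$ and $\rho=(1-\alpha)^{-1}(1-p)$ are two density elements in $\mathcal D_\tau(\N)$, they are distinct (indeed $\sigma\bot\rho$), and both lie in $\mathcal M_\Ep\cap\mathcal D_\tau(\N)$.

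Now Proposition \ref{miza1} applies directly: $f_\Ep$ is isometric on $\mathcal D_\tau(\N)\cap\mathcal M_\Ep$, so $d_B^\tau\bigl(\Ep(\sigma),\Ep(\rho)\bigr)=d_B^\tau(\sigma,\rho)$. Since $\sigma\neq\rho$, this contradicts the requirement $d_B^\tau(\Ep(\sigma),\Ep(\rho))<d_B^\tau(\sigma,\rho)$ in the definition of a Bures contraction. Hence $\Ep$ cannot be Bures contractive, and the contrapositive establishes the theorem. (Alternatively, one can bypass the two density elements and argue directly from $\mbox{\rm Fix}\,\Ep\subseteq\mathcal M_\Ep\neq\mathbb C\,1$ together with Proposition \ref{bcsm fixed} or Corollary \ref{trivial mult.dom}, which is really the same argument packaged differently.)

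I expect essentially no obstacle here: the heavy lifting has already been done in Corollary \ref{irr criterion} (translating reducibility into a fixed projection), in Proposition \ref{kribs1} (fixed points of a Schwarz channel lie in the multiplicative domain), and in Proposition \ref{miza1} (isometry on the multiplicative domain). The only point requiring a line of care is confirming that $p$ and $1-p$ genuinely give \emph{distinct} normalised density elements, which follows from $0<\tau(p)<1$ by faithfulness of $\tau$; the boundary cases $\tau(p)=0$ or $\tau(p)=1$ are excluded precisely because $p$ is a nontrivial projection and $\tau$ is faithful. So the whole proof is a short assembly of earlier results rather than anything new.
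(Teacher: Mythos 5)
Your proof is correct and follows essentially the same route as the paper: reducibility yields a nontrivial fixed projection (Corollary \ref{irr criterion}), which lies in $\mathcal M_\Ep$ for a Schwarz channel, contradicting the triviality of the multiplicative domain of a Bures contraction. The only difference is cosmetic: the paper cites Corollary \ref{trivial mult.dom} directly (itself proved via Proposition \ref{miza1}), whereas you unpack that corollary by exhibiting the two orthogonal density elements $\tau(p)^{-1}p$ and $\tau(1-p)^{-1}(1-p)$ on which $f_\Ep$ is isometric.
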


\begin{proof} Let $\Ep:\N\rightarrow\N$ be a Bures contractive Schwarz channel on a finite von Neumann algebra $\N$
with faithful normal trace $\tau$. Assume that $\Ep$ is reducible. Thus, there is a nontrivial projection $p$ such that
$\Ep(p)=p$, by Proposition \ref{irr1}. Since $p=pp^*=p^*p=\Ep(p)^*\Ep(p)$ and $\Ep(p)=\Ep(p^*p)$, the projection $p$
is in the multiplicative domain of $\Ep$. Corollary \ref{trivial mult.dom} asserts that the multiplicative domain of a 
Bures contractive Schwarz channel 
is $\mathbb C 1$.
Hence, $p$ must be $0$ or $1$, in contradiction to the fact that $p$ is neither $0$ nor $1$. Therefore, $\Ep$ must be
irreducible.
\end{proof}

\begin{theorem}\label{bc irr C*} A Bures contractive Schwarz channel on a tracial C$^*$-algebra is irreducible.
\end{theorem}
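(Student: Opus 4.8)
The plan is to reduce the C$^*$-algebra statement to the von Neumann algebra case already handled in Theorem \ref{bc irr vN}, using the passage to the enveloping von Neumann algebra $\A^{**}$. The key tools are: (i) Lemma \ref{reducible dd}, which says $\Ep$ is reducible if and only if $\Ep^{**}$ is reducible; and (ii) the fact noted in the introduction (via \cite[Lemma 3]{gardner1979a}) that if $\Ep$ is a Schwarz channel on $(\A,\tau)$, then $\Ep^{**}$ is a Schwarz channel on $(\A^{**},\tau^{**})$, where $\tau^{**}$ is a faithful normal trace on $\A^{**}$.

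First I would argue the contrapositive: suppose $\Ep$ is a Bures contractive Schwarz channel on $\A$ that is reducible. By Lemma \ref{reducible dd}, $\Ep^{**}$ is then a reducible positive linear map on $\A^{**}$. Since $\Ep^{**}$ is a Schwarz channel on the finite von Neumann algebra $\A^{**}$ (with faithful normal trace $\tau^{**}$), Theorem \ref{bc irr vN} will give the desired contradiction provided I can show $\Ep^{**}$ is itself Bures contractive on $\mathcal D_{\tau^{**}}(\A^{**})$. That last point is the crux of the argument.

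The main obstacle, then, is establishing that Bures contractiveness of $\Ep$ on $\mathcal D_\tau(\A)$ implies Bures contractiveness of $\Ep^{**}$ on $\mathcal D_{\tau^{**}}(\A^{**})$. Here I would use Corollary \ref{trivial mult.dom} and Theorem \ref{bc irr vN} somewhat differently: instead of proving the contraction property on all of $\mathcal D_{\tau^{**}}(\A^{**})$ directly, I would run the same argument as in the proof of Theorem \ref{bc irr vN} but at the level of $\A^{**}$. Specifically, reducibility of $\Ep^{**}$ produces (via Lemma \ref{irr1}, noting $\|\Ep^{**}\| = \|\Ep^{**}(1)\| = 1$ by Proposition \ref{basic positive facts}) a nontrivial projection $p \in \A^{**}$ with $\Ep^{**}(p) = p$. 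Since $p = p^*p = \Ep^{**}(p)^*\Ep^{**}(p)$ and $\Ep^{**}(p) = \Ep^{**}(p^*p)$, the projection $p$ lies in $\mathcal M_{\Ep^{**}}$. By Proposition \ref{miza1} applied to the Schwarz channel $\Ep^{**}$, the map $f_{\Ep^{**}}$ is isometric on $\mathcal D_{\tau^{**}}(\A^{**}) \cap \mathcal M_{\Ep^{**}}$; choosing two distinct density elements supported under $p$ and $1-p$ respectively (which exist because $p$ is nontrivial and $\tau^{**}$ is faithful) shows $f_{\Ep^{**}}$ fails to be locally contractive. So what I actually need is the reverse implication: that if $\Ep$ is Bures contractive then so is $\Ep^{**}$ — and the cleanest route is to observe that $\A$ is weakly dense in $\A^{**}$ while $\Ep^{**}$ extends $\Ep$, together with the fidelity formula $F_{\tau^{**}}(\sigma,\rho) = \sup_{w \in \mathcal U(\A^{**})} \Re\,\tau^{**}(\sigma^{1/2}\rho^{1/2}w)$ from Lemma \ref{vna-lemma2} and a density/normality argument reducing the supremum and the square-root operations to $\A$.

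An alternative, and perhaps more economical, organization avoids defining Bures contractiveness on $\A^{**}$ altogether. One assumes $\Ep$ on $\A$ is reducible; by Lemma \ref{reducible dd} so is $\Ep^{**}$, and hence by Corollary \ref{irr criterion} there is a nontrivial projection $p \in \A^{**}$ with $\Ep^{**}(p) = p$, placing $p$ in $\mathcal M_{\Ep^{**}}$. Now I invoke the fact that $\mathcal M_{\Ep^{**}} \cap \A = \mathcal M_\Ep$ (which follows since $\Ep^{**}$ restricts to $\Ep$ on $\A$ and $\mathcal M_\Ep$ is the multiplicative domain); but $p$ need not lie in $\A$, so this requires a little care — I would instead use Corollary \ref{trivial mult.dom} at the level of $\A^{**}$, which states that a Bures contractive Schwarz channel has trivial multiplicative domain, provided I already know $\Ep^{**}$ is Bures contractive. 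Either way the argument funnels through showing Bures contractiveness ascends to the bidual, so I would include a short lemma to that effect: $\Ep$ is Bures contractive on $(\A,\tau)$ if and only if $\Ep^{**}$ is Bures contractive on $(\A^{**},\tau^{**})$, proved by the GNS/weak-density reduction used in Theorem \ref{bures metric is a metric} and Proposition \ref{fuchs}. With that lemma in hand, the theorem is immediate: Bures contractive $\Ep$ gives Bures contractive $\Ep^{**}$, which is irreducible by Theorem \ref{bc irr vN}, hence $\Ep$ is irreducible by Lemma \ref{reducible dd}.
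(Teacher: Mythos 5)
Your reduction skeleton (Lemma \ref{reducible dd} to pass reducibility to $\Ep^{**}$, Corollary \ref{irr criterion} to get a nontrivial projection $p\in\A^{**}$ with $\Ep^{**}(p)=p$, and the multiplicative-domain/Proposition \ref{miza1} reasoning at the bidual level) is sound as far as it goes, but the argument hinges entirely on a lemma you never prove: that Bures contractiveness of $\Ep$ on $\mathcal D_\tau(\A)$ ascends to Bures contractiveness of $\Ep^{**}$ on $\mathcal D_{\tau^{**}}(\A^{**})$. The route you sketch for it --- weak density of $\A$ in $\A^{**}$, the fidelity supremum formula from Lemma \ref{vna-lemma2}, and ``a density/normality argument'' --- cannot work as stated, because local contractivity is a \emph{strict} inequality required at every pair of distinct density elements of $\A^{**}$, and strict inequalities do not survive limits: approximating $\sigma,\rho\in\mathcal D_{\tau^{**}}(\A^{**})$ by elements of $\mathcal D_\tau(\A)$ and passing to the limit only yields $d_B^{\tau^{**}}(\Ep^{**}(\sigma),\Ep^{**}(\rho))\leq d_B^{\tau^{**}}(\sigma,\rho)$, i.e.\ nonexpansiveness, which is already known and useless here. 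Note also that you cannot recover the ascent from irreducibility: Examples \ref{ru} and \ref{ru irr} exhibit an irreducible channel that is not a Bures contraction, so there is no cheap way back from ``$\Ep^{**}$ irreducible'' to ``$\Ep^{**}$ Bures contractive.'' As it stands, both organizations you propose funnel through this unproven (and at best highly nontrivial) claim, so the proof has a genuine gap.

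The paper's proof is structured precisely to avoid that claim: it never asserts that $\Ep^{**}$ is Bures contractive, and instead brings the contradiction back down to $\A$. From $\Ep^{**}(p)=p$ and $\Ep^{**}(q)=q$ with $q=1-p$, one uses the Kaplansky density theorem to produce increasing nets of positive elements $a_\alpha, b_\beta\in\A$ with $\sup_\alpha a_\alpha=p$ and $\sup_\beta b_\beta=q$; a small trace computation shows that any positive $a\leq p$ and $b\leq q$ in $\A^{**}$ are orthogonal, and since $\Ep(a_\alpha)=\Ep^{**}(a_\alpha)\leq\Ep^{**}(p)=p$ and $\Ep(b_\beta)\leq q$, the pairs $a_\alpha\bot b_\beta$ and $\Ep(a_\alpha)\bot\Ep(b_\beta)$ all lie in $\A$. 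Normalizing gives distinct elements of $\mathcal D_\tau(\A)$ at maximal Bures distance whose images remain at maximal distance (orthogonality being equivalent to $d_B^\tau=1$), contradicting Bures contractivity of $\Ep$ itself. If you want to salvage your outline, you should either supply a genuine proof of the ascent lemma or replace it with this kind of descent argument, which uses only the fixed projections in $\A^{**}$ and tests contractivity on density elements of $\A$.
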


\begin{proof} Let $\Ep:\A\rightarrow\A$ be a Bures contractive Schwarz channel. 
Assuming that $A\subseteq\A^{**}$, the
faithful $\Ep$-invariant trace $\tau$ on $\A$ extends to a faithful $\Ep^{**}$-invariant
trace $\tau^{**}$ on $A^{**}$. 

Assume, contrary to what we aim to prove, that
$\Ep$ is reducible. Therefore, by Lemma \ref{reducible dd},   $\Ep^{**}$ is reducible; thus, $\Ep^{**}(p)=p$, for some nontrivial projection
$p\in\A^{**}$. Likewise, $\Ep^{**}(q)=q$, where $q=1-p$. Hence, 
$p$ and $q$ are nonzero positive contractions $ \A^{**}$ such that
$p\bot q$ and $\Ep^{**}(p)\bot \Ep^{**}(q)$. If $a,b\in\A^{**}$ are positive and satisfy $a \leq p$ and $b\leq q$, then $ab=ba=0$. To verify this, note that
\[
0\leq \tau(b^{1/2}ab^{1/2})\leq \tau^{**}(b^{1/2}pb^{1/2})=\tau^{**}(pbp)\leq\tau^{**}(pqp)=\tau^{**}(pq)=0,
\]
which implies that $b^{1/2}ab^{1/2}=0$. As these are operators acting on the universal representation Hilbert space $\H_u$ for $\A$, we obtain
$0=\|ab^{1/2}\xi\|^2$ for every $\xi\in\H_u$; hence, $ab^{1/2}=0$, which yields $ab=0$ and $ba=0$.

By the Kaplansky Density Theorem, there are increasing nets $\{a_\alpha\}_\alpha$ and $\{b_\beta\}_\beta$ of positive operators
 in $\A$ converging strongly to $p$ and $q$, respectively. Thus, $p=\sup_\alpha a_\alpha$ and $q=\sup_\beta b_\beta$ yields, by the normality
 of $\Ep^{**}$, that 
 \[
 \Ep^{**}(p)=\sup_\alpha\Ep^{**}(a_\alpha)=\sup_\alpha\Ep (a_\alpha) \;\mbox{ and }\; \Ep^{**}(q)=\sup_\beta\Ep^{**}(b_\beta)=\sup_\beta\Ep (b_\beta).
 \]
 Thus, by the arguments of the previous paragraph,
 $a_\alpha \bot b_\beta$ and $\Ep(a_\alpha)\bot\Ep(b_\beta)$ for all $\alpha$ and $\beta$, which contradicts the hypothesis that $\Ep$
 is a Bures contraction.
Therefore, it must be that $\Ep$ is irreducible.
\end{proof}

Examples \ref{ru} and \ref{ru irr} demonstrate that the converses of Propositions \ref{bc irr vN} and 
\ref{bc irr C*} do not hold.

\begin{proposition}\label{cor kribs1} If $\Ep$ is a Schwarz channel on $\N$,
then there exists a (normal) completely positive channel $\Pi:\N\rightarrow\N$
such that $\Pi^2=\Pi$ and $\mbox{\rm ran}\,\Pi=\mbox{\rm Fix}\,\Ep$. Furthermore, 
$\Ep$ is an irreducible channel if and only if $\Pi$ is Bures contractive channel.
\end{proposition}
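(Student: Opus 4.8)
The plan is to build $\Pi$ as the trace-preserving conditional expectation onto $\mbox{\rm Fix}\,\Ep$, and then to show that each of the two conditions in the second assertion --- irreducibility of $\Ep$ and Bures contractivity of $\Pi$ --- is equivalent to the single statement $\mbox{\rm Fix}\,\Ep=\mathbb C\,1$. For the construction, note first that a Schwarz channel is unital by Proposition \ref{basic positive facts}(3), so Proposition \ref{kribs1}(1) shows that $\mbox{\rm Fix}\,\Ep$ is a unital C$^*$-subalgebra of $\N$; and since $\Ep$ is normal, $\mbox{\rm Fix}\,\Ep=\ker(\Ep-I)$ is ultraweakly closed, hence a von Neumann subalgebra. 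As $\N$ is finite with faithful normal trace $\tau$, a standard fact yields a $\tau$-preserving normal conditional expectation $\Pi:\N\rightarrow\mbox{\rm Fix}\,\Ep$; conditional expectations onto C$^*$-subalgebras are unital and completely positive, and by construction $\Pi^2=\Pi$ and $\mbox{\rm ran}\,\Pi=\mbox{\rm Fix}\,\Ep$, so $\Pi$ is a normal completely positive channel with the required properties. (Alternatively, $\Pi$ may be realised as the point-ultraweak limit of the Ces\`aro averages $\frac1n\sum_{k=0}^{n-1}\Ep^k$, whose existence follows from the mean ergodic theorem for the contraction $\Ep$ induces on $L^2(\N,\tau)$ and whose complete positivity follows from Tomiyama's theorem; this produces the same map.)

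For the equivalence, observe that $\Ep$ is unital and positive, so $\|\Ep\|=1$ by Proposition \ref{basic positive facts}(1), and Corollary \ref{irr criterion} then says that $\Ep$ is reducible if and only if $\mbox{\rm Fix}\,\Ep$ contains a nontrivial projection. Since a von Neumann algebra is the ultraweak closure of the linear span of its projections, $\mbox{\rm Fix}\,\Ep$ has no nontrivial projection exactly when $\mbox{\rm Fix}\,\Ep=\mathbb C\,1$; hence $\Ep$ is irreducible if and only if $\mbox{\rm Fix}\,\Ep=\mathbb C\,1$. Now suppose $\mbox{\rm Fix}\,\Ep=\mathbb C\,1$: then $\Pi$ takes values in $\mathbb C\,1$, and trace preservation forces $\Pi(x)=\tau(x)\tau(1)^{-1}1$ for all $x$, so $\Pi$ is the completely depolarising channel $\Omega$, which is Bures contractive by Example \ref{eg:depolarising}. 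Conversely, suppose $\mbox{\rm Fix}\,\Ep\neq\mathbb C\,1$ and pick a selfadjoint $h\in\mbox{\rm Fix}\,\Ep$ that is not a scalar; then $h+(\|h\|+1)1$ is a positive non-scalar element of $\mbox{\rm Fix}\,\Ep$, and after normalising its trace we obtain a $\tau$-density element $\rho\in\mbox{\rm Fix}\,\Ep$ with $\rho\neq\zeta$. Because $\Pi$ restricts to the identity on its range $\mbox{\rm Fix}\,\Ep$, both $\rho$ and the centre $\zeta$ are fixed by $\Pi$, and so $d_B^\tau(\Pi(\rho),\Pi(\zeta))=d_B^\tau(\rho,\zeta)>0$; thus $\Pi$ is not Bures contractive. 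Therefore $\Pi$ is Bures contractive if and only if $\mbox{\rm Fix}\,\Ep=\mathbb C\,1$, and combining this with the previous sentences completes the proof.

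The step I expect to require the most care is the construction of $\Pi$: one has to be sure $\mbox{\rm Fix}\,\Ep$ is a genuine von Neumann subalgebra --- which needs the normality of $\Ep$ on top of Proposition \ref{kribs1} --- so that a normal $\tau$-preserving conditional expectation onto it is available; and if one prefers the Ces\`aro-average description of $\Pi$, one must recover complete positivity in the limit via Tomiyama's theorem rather than by a naive passage to the limit, since the averages $\frac1n\sum_{k=0}^{n-1}\Ep^k$ are in general only Schwarz maps. Everything after $\Pi$ is constructed is a short deduction from Corollary \ref{irr criterion}, Example \ref{eg:depolarising}, and the elementary fact that a von Neumann algebra strictly larger than $\mathbb C\,1$ contains a nontrivial projection.
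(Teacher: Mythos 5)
Your proof is correct and follows essentially the same route as the paper: $\Pi$ is the normal, trace-preserving conditional expectation onto the von Neumann algebra $\mbox{\rm Fix}\,\Ep$, and both irreducibility of $\Ep$ and Bures contractivity of $\Pi$ are reduced to the single condition $\mbox{\rm Fix}\,\Ep=\mathbb C\,1$ via Corollary \ref{irr criterion} and the completely depolarising channel of Example \ref{eg:depolarising}. The only (harmless) deviation is that for the implication ``$\Pi$ Bures contractive $\Rightarrow$ $\Ep$ irreducible'' you argue directly by exhibiting two distinct $\Pi$-fixed density elements when $\mbox{\rm Fix}\,\Ep\neq\mathbb C\,1$, whereas the paper implicitly leans on its fixed-point/multiplicative-domain results; your version is, if anything, more self-contained.
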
 

\begin{proof} The hypothesis on $\Ep$ implies that $\mbox{\rm Fix}\,\Ep$ is a unital C$^*$-algebra (Proposition \ref{kribs1}).
Furthermore, because $\Ep$ is normal, the 
C$^*$-algebra $\mbox{\rm Fix}\,\Ep$ is a von Neumann subalgebra of $\N$. Hence,
there exists a unital completely positive normal map $\Pi:\N\rightarrow\N$
such that $\Pi^2=\Pi$, $\tau\circ\Pi=\tau$, and $\mbox{\rm ran}\,\Pi=\mbox{\rm Fix}\,\Ep$ \cite[Propositions 2.2.6, 2.2.11]{Stormer-book}.

Suppose that $\Ep$ is irreducible. Thus, $\mbox{\rm Fix}\,\Ep=\mathbb C\,1$, and so $\Pi(x)=\lambda_x1$ for every $x\in\N$. 
From $\tau\circ\Pi=\tau$ we deduce that $\lambda_x=\frac{\tau(x)}{\tau(1)}$. Hence, $\Pi$ is depolarising and, therefore, Bures contractive.

Conversely, assume that $\Pi$ is a Bures contraction. Thus, $\mbox{\rm Fix}\,\Ep=\mathbb C\,1$ implies that $\mbox{\rm Fix}\,\Ep$  has no nontrivial projections.
Therefore, the equation $\Ep(p)=p$ holds for a projection $p\in\N$ only if $p$ is trivial. Hence, $\Ep$ is irreducible.
\end{proof}

\subsection{Perron value}

The identity operator on $\A$ will be denoted by $I$ and the spectrum  
of a bounded linear operator $\Phi:\A\rightarrow\A$ is denoted by  $\mbox{\rm Sp}\,\Phi$.
That is,
\[
\mbox{\rm Sp}\,\Phi=\{\lambda\in\mathbb C\,|\,\Phi-\lambda I \mbox{ is not an invertible operator on }\A\}.
\]
Of special interest are the point spectrum $\mbox{\rm Sp}_{\rm p}\,\Phi$, which consists of the eigenvalues of $\Phi$, 
and the approximate point spectrum $\mbox{\rm Sp}_{\rm ap}\,\Phi$, which consists of approximate eigenvalues of $\Phi$.
Thus, $\lambda\in \mbox{\rm Sp}_{\rm ap}\,\Phi$ if and only for every $\varepsilon>0$ there exists a nonzero $x\in\A$ such that
$\|\Phi(x)-\lambda x\|<\varepsilon\|x\|$.

If $\Phi$ is a positive linear map, then one might expect a Perron-Frobenius-type behaviour 
with regards to the spectrum of $\Phi$.
This has been known for several decades to be true. 

\begin{theorem}[Perron-Frobenius]\label{pf}\cite[Appendix 2.2]{Schaefer-tvs-book} 
If $\Phi:\A\rightarrow\A$ is a positive linear map, 
then the spectral radius of $\Phi$ is an element of the spectrum of $\Phi$.
\end{theorem}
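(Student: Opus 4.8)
The plan is to prove the Perron--Frobenius statement, that the spectral radius $r(\Phi)$ of a positive linear map $\Phi:\A\to\A$ belongs to $\mathrm{Sp}\,\Phi$, by reducing to the classical Krein--Rutman circle of ideas for cone-preserving operators on an ordered Banach space. The key structural fact is that $(\A_{\mathrm{sa}},\A_+)$ is an ordered real Banach space whose cone $\A_+$ is closed, generating (every selfadjoint element is a difference of two positives), and normal (the norm is monotone on $\A_+$: $0\le a\le b$ implies $\|a\|\le\|b\|$). A positive linear map is automatically bounded by the Russo--Dye part of Proposition \ref{basic positive facts}, so $r(\Phi)$ is a well-defined nonnegative real number.

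First I would observe that it suffices to treat $r(\Phi)>0$, since if $r(\Phi)=0$ then $0\in\mathrm{Sp}\,\Phi$ trivially (the spectrum of a bounded operator on a Banach space is nonempty). After rescaling we may assume $r(\Phi)=1$. The standard argument then proceeds by a resolvent-growth / boundary-of-spectrum analysis: since $1$ lies on the boundary of the spectrum of the real operator $\Phi$, it lies in the \emph{approximate} point spectrum, so there are unit vectors $h_n\in\A_{\mathrm{sa}}$ with $\|\Phi(h_n)-h_n\|\to 0$. The resolvent $R(\lambda)=(\lambda I-\Phi)^{-1}$ is, for real $\lambda>1$, a positive operator (it equals the norm-convergent Neumann series $\sum_{k\ge 0}\lambda^{-k-1}\Phi^k$, a sum of positive operators), and its norm must blow up as $\lambda\downarrow 1$ because $r(\Phi)=1$ means $1$ is a singular point of the resolvent. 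Using positivity of $R(\lambda)$ together with normality of the cone, one shows $\|R(\lambda)\|=\sup\{\,\|R(\lambda)a\|:a\in\A_+,\ \|a\|\le 1\,\}$; picking $a_\lambda\in\A_+$, $\|a_\lambda\|=1$, with $\|R(\lambda)a_\lambda\|\ge \tfrac12\|R(\lambda)\|$ and normalising $b_\lambda=R(\lambda)a_\lambda/\|R(\lambda)a_\lambda\|\in\A_+$, one computes $(\lambda I-\Phi)b_\lambda = a_\lambda/\|R(\lambda)a_\lambda\|\to 0$, so that $\|\Phi(b_\lambda)-b_\lambda\|\to 0$ along $\lambda\downarrow 1$. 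This exhibits $1$ as an approximate eigenvalue with \emph{positive} approximate eigenvectors, and in particular $1=r(\Phi)\in\mathrm{Sp}\,\Phi$.

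Since the statement cited in the excerpt is attributed to \cite[Appendix 2.2]{Schaefer-tvs-book}, the cleanest write-up simply invokes that reference: Schaefer's appendix establishes precisely that a positive operator on an ordered Banach space with closed normal generating cone has its spectral radius in its spectrum. So I would phrase the proof as: the cone $\A_+$ of positive elements of the real Banach space $\A_{\mathrm{sa}}$ is closed, generating, and normal; $\Phi$ restricts to a positive bounded operator on $\A_{\mathrm{sa}}$ with the same spectral radius as on $\A$ (complexification does not change the spectral radius); hence the result of \cite[Appendix 2.2]{Schaefer-tvs-book} applies directly.

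The main obstacle, if one wants a self-contained argument rather than a citation, is the passage from the \emph{real} ordered Banach space $\A_{\mathrm{sa}}$ to the complex algebra $\A$: one must check that $r(\Phi|_{\A_{\mathrm{sa}}})=r(\Phi|_{\A})$ and that a real singular point of the resolvent is genuinely in the complex spectrum. This is routine — the complexification of $\A_{\mathrm{sa}}$ is $\A$, and spectral radius is a complexification-invariant via the spectral radius formula $r(\Phi)=\lim_n\|\Phi^n\|^{1/n}$ together with the equivalence of the operator norm on $\A$ and on its real part up to a universal constant — but it is the one point that needs care. Everything else (normality of the cone, positivity of the Neumann series, resolvent blow-up at the spectral radius) is classical and short.
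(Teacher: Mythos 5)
The paper gives no proof of this theorem at all --- it is stated with the citation to Schaefer's appendix and nothing more --- so your fallback route (check that $\A_+$ is closed, generating, and normal in $\A_{\mathrm{sa}}$, note that positivity gives boundedness, handle the complexification, and invoke \cite[Appendix 2.2]{Schaefer-tvs-book}) is exactly in the spirit of what the paper does, and that part of your proposal is fine.

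The self-contained sketch, however, has a genuine circularity at its central step. You assert that after normalising $r(\Phi)=1$ the resolvent norm $\|R(\lambda)\|$ must blow up as $\lambda\downarrow 1$ along the real axis ``because $r(\Phi)=1$ means $1$ is a singular point of the resolvent.'' That is precisely the conclusion of the theorem, not a consequence of $r(\Phi)=1$: for a general bounded operator on a Banach space, $r(\Phi)=1$ only guarantees a spectral point \emph{somewhere} on the circle $|z|=1$, and if that point is $e^{i\theta_0}\neq 1$ the estimate $\|R(\lambda)\|\geq \mathrm{dist}(\lambda,\mathrm{Sp}\,\Phi)^{-1}$ gives no blow-up at the real point $1$. (The same objection applies to your earlier remark that $1$ lies on the boundary of the spectrum and hence in the approximate point spectrum.) The missing ingredient is exactly where positivity enters: one must show that for a map preserving a closed normal generating cone the real point $\lambda=r(\Phi)$ is itself a singularity of the Laurent series $R(z)=\sum_{k\geq 0}z^{-k-1}\Phi^{k}$. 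This is a Pringsheim-type argument: the coefficients $\Phi^{k}$ are positive maps, the cone of positive maps on $\A$ is normal (indeed $0\leq S\leq T$ forces $\|S\|=\|S(1)\|\leq\|T(1)\|=\|T\|$), and analyticity of $R$ at $z=r(\Phi)$ would let one re-expand the series and contradict $r(\Phi)=\lim_k\|\Phi^k\|^{1/k}$. This positivity-based singularity argument is the actual content of Schaefer's appendix; once it is in place, your construction of positive approximate eigenvectors $b_\lambda=R(\lambda)a_\lambda/\|R(\lambda)a_\lambda\|$ from the resolvent blow-up is correct and even yields the stronger statement that $r(\Phi)$ is an approximate eigenvalue with approximate eigenvectors in $\A_+$. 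So either insert the Pringsheim step explicitly or simply rest on the citation, as the paper does.
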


\begin{definition}
The spectral radius of a positive linear map $\Phi:\A\rightarrow\A$ is called the \emph{Perron value} of $\Phi$.
\end{definition}

If $\Ep$ is a Schwarz channel on $\A$, then $\Ep$ is 
unital (Proposition \ref{basic positive facts}) and so $1\in  \mbox{\rm Sp}_{\rm p}\,\Ep$. 
But as $\|\Ep\|\leq 1$, the spectrum of $\Ep$ lies in the closed unit disc of $\mathbb C$, 
which implies that the spectral radius of $\Ep$ is $1$.
Thus, the \emph{spectral circle} for such $\Ep$ is the boundary $\mathbb T$ of the closed unit disc.

A theorem of Groh \cite{groh1981} states that if $\Phi$ is an irreducible unital
Schwarz map on a unital C$^*$-algebra, then 
the peripheral point spectrum $\mbox{\rm Sp}_{\rm p}\,\Phi\cap\mathbb T$ is a subgroup of $\mathbb T$.
In the case of Bures contractive channels, one has that this subgroup is trivial:

\begin{proposition} If $\Ep:\A\rightarrow\A$ is a Bures contractive Schwarz channel, then
\[
\mbox{\rm Sp}_{\rm p}\,\Ep\cap  \mathbb T=\{1\}.
\]
\end{proposition}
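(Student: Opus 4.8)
The plan is to show that every peripheral eigenvector of $\Ep$ lies in the multiplicative domain $\mathcal M_\Ep$, and then to apply Corollary \ref{trivial mult.dom}. Suppose $\lambda\in\mathbb T$ and $x\in\A$ is nonzero with $\Ep(x)=\lambda x$. Since $\Ep$ is a Schwarz channel it is unital (Proposition \ref{basic positive facts}), and being positive it is $*$-preserving, so $\Ep(x^*)=\Ep(x)^*=\bar\lambda x^*$. The Schwarz inequality applied to $x$ and to $x^*$ then gives $\Ep(x^*x)\geq\Ep(x)^*\Ep(x)=|\lambda|^2x^*x=x^*x$ and $\Ep(xx^*)\geq\Ep(x)\Ep(x)^*=xx^*$, so both $\Ep(x^*x)-x^*x$ and $\Ep(xx^*)-xx^*$ are positive elements of $\A$.

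The crux is to upgrade these inequalities to equalities. Since $\Ep$ is trace-preserving, $\tau\big(\Ep(x^*x)-x^*x\big)=0$ and $\tau\big(\Ep(xx^*)-xx^*\big)=0$, and faithfulness of $\tau$ then forces $\Ep(x^*x)=\Ep(x)^*\Ep(x)$ and $\Ep(xx^*)=\Ep(x)\Ep(x)^*$; that is, $x\in\mathcal S_\Ep$. Because $\Ep$ is a Schwarz map, $\mathcal S_\Ep=\mathcal M_\Ep$, so $x\in\mathcal M_\Ep$. By Corollary \ref{trivial mult.dom}, $\mathcal M_\Ep=\mathbb C\,1$, whence $x=c\,1$ for some nonzero $c\in\mathbb C$. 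Unitality of $\Ep$ then gives $\lambda x=\Ep(x)=c\,1=x$, and since $x\neq0$ we conclude $\lambda=1$. This shows $\mbox{\rm Sp}_{\rm p}\,\Ep\cap\mathbb T\subseteq\{1\}$; the reverse inclusion is immediate because $\Ep(1)=1$.

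I do not expect a genuine obstacle: the only step requiring care is the passage from the Schwarz inequalities to equalities, which is exactly where trace-preservation and faithfulness of $\tau$ enter — the same device used in the proof of Proposition \ref{kribs1}. Alternatively, once $x\in\mathcal M_\Ep=\mathbb C\,1$ one may simply observe that $x\in\mbox{\rm Fix}\,\Ep$ by Proposition \ref{bcsm fixed} and conclude $\lambda=1$ directly.
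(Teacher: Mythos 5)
Your proof is correct and follows essentially the same route as the paper: the Schwarz inequality gives $\Ep(x^*x)\geq\Ep(x)^*\Ep(x)=x^*x$, trace-preservation and faithfulness upgrade this to equality, so the peripheral eigenvector lies in $\mathcal S_\Ep=\mathcal M_\Ep=\mathbb C\,1$ by Corollary \ref{trivial mult.dom}, forcing $\lambda=1$. Your treatment is in fact marginally more complete than the paper's, since you verify both conditions defining $\mathcal S_\Ep$ and note the trivial reverse inclusion coming from unitality.
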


\begin{proof}
Suppose that $\omega \in \mathbb{T}$ is an eigenvalue of
$\mathcal{E}$ with (nonzero) eigenvector $x$. Because $\Ep$ preserves selfadjointness,
it is also true that
$\mathcal{E}(x^*)=\bar{\omega}x^*$. Therefore, 
\[
\mathcal{E}(x^*x)\geq \mathcal{E}(x)^*\mathcal{E}(x) =\bar{\omega} x^* {\omega}x =x^*x .
 \]
Applying the faithful trace $\tau$ to the inequality above yields the inequality
\[
0\leq\tau\left[\mathcal E(x^*x)-\mathcal E(x)^*\mathcal E(x)\right]=\tau(x^*x)-\tau(x^*x)=0.
\]
Thus, the positive element
$\mathcal E(x^*x)-\mathcal E(x )^*\mathcal E(x)$ has zero trace, which yields 
$\mathcal E(x^*x)=\mathcal E(x )^*\mathcal E(x)$. Hence, 
$x\in \mathcal S_\Ep =\mathcal M_\Ep$. 
However, Corollary \ref{trivial mult.dom} asserts that the multiplicative domain of a Schwarz channel is $\mathbb C\,1$; hence,
$x\in\ker(\Ep-I)\cap\ker(\Ep-\omega I)$, and so $\omega=1$. This proves that $ \mbox{\rm Sp}_{\rm p}\,\Ep\cap  \mathbb T=\{1\}$.
\end{proof}

It is well-known that if a completely positive channel $\Ep$ on $\M_d(\mathbb C)$ has a completely positive inverse, then 
$\Ep$ must be a unitary channel. As unitary channels are Bures isometries, intuition suggests that invertible Bures contractive channels
fail to be channels.

\begin{proposition} The linear inverse of a bijective Bures contractive channel is not positive.
\end{proposition}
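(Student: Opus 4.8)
The plan is to argue by contradiction: assume $\Ep\colon\A\to\A$ is a bijective Bures contractive channel whose linear inverse $\Ep^{-1}$ is positive, and then derive a contradiction with the fact (Proposition \ref{nonexpansive}) that channels induce nonexpansive maps of $\left(\mathcal D_\tau(\A),d_B^\tau\right)$.

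First I would upgrade $\Ep^{-1}$ from a merely positive linear map to a genuine channel. Since $\Ep$ is trace preserving we have $\tau=\tau\circ\Ep$; composing this identity on the right with $\Ep^{-1}$ gives $\tau\circ\Ep^{-1}=\tau\circ\Ep\circ\Ep^{-1}=\tau$, so $\Ep^{-1}$ is trace preserving. Combined with the assumed positivity, this makes $\Ep^{-1}$ a channel. By Proposition \ref{basic positive facts}, both $\Ep$ and $\Ep^{-1}$ then map $\mathcal D_\tau(\A)$ into itself, and since they are mutually inverse, $f_\Ep$ restricts to a bijection of $\mathcal D_\tau(\A)$ with inverse $f_{\Ep^{-1}}$.

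Next I would combine the two metric properties. Because $\Ep^{-1}$ is a channel, Proposition \ref{nonexpansive} gives $d_B^\tau\!\left(\Ep^{-1}(\sigma'),\Ep^{-1}(\rho')\right)\le d_B^\tau(\sigma',\rho')$ for all $\sigma',\rho'\in\mathcal D_\tau(\A)$. Now fix two distinct $\sigma',\rho'\in\mathcal D_\tau(\A)$ and put $\sigma=\Ep^{-1}(\sigma')$ and $\rho=\Ep^{-1}(\rho')$, which are distinct since $\Ep^{-1}$ is injective. Bures contractivity of $\Ep$ then yields
\[
d_B^\tau(\sigma',\rho')=d_B^\tau\!\left(\Ep(\sigma),\Ep(\rho)\right)<d_B^\tau(\sigma,\rho)=d_B^\tau\!\left(\Ep^{-1}(\sigma'),\Ep^{-1}(\rho')\right),
\]
which contradicts the displayed nonexpansiveness inequality. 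Hence $\Ep^{-1}$ cannot be positive.

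The argument needs only that $\mathcal D_\tau(\A)$ contains two distinct elements, which holds as soon as $\A\neq\mathbb C\,1$. There is no real obstacle here: the one step with any subtlety is the opening observation that $\Ep^{-1}$, once it is known to be positive, is automatically trace preserving and hence a channel, so that Proposition \ref{nonexpansive} applies to it; the rest is the formal clash between ``strictly distance-decreasing'' for $\Ep$ and ``distance-nonincreasing'' for $\Ep^{-1}$. Equivalently, one could phrase the contradiction in terms of fidelity, using that a channel never decreases $F_\tau$ while a Bures contraction strictly increases it on distinct pairs.
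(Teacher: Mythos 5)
Your proof is correct and follows essentially the same route as the paper: assume the inverse is positive, note it is then automatically trace preserving and hence a channel, and derive a contradiction between the nonexpansiveness of the inverse channel and the strict contractivity of $\Ep$ (the paper phrases this via fidelity monotonicity, which as you note is equivalent). Your explicit verification that $\tau\circ\Ep^{-1}=\tau$ is a small point the paper leaves implicit, but otherwise the arguments coincide.
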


\begin{proof} If $\Ep:\A\rightarrow\A$ is bijective channel, and if $\Ep^{-1}$ were also a channel, then $\Ep^{-1}$ 
satisfies the monotonicity property with respect to the fidelity function, implying for all distinct
$\sigma,\rho\in \mathcal{D}_{\tau}(\A)$ that
\[
F_{\tau}(\sigma,\rho)=F_{\tau}(\Ep^{-1}\circ\Ep(\sigma),\Ep^{-1}\circ\Ep(\rho))\geq F(\Ep(\sigma),\Ep(\rho))>F(\sigma,\rho),
\]
in contradiction to $F_{\tau}(\Ep(\sigma),\Ep(\rho))=F_{\tau}(\sigma,\rho)$. 
\end{proof}

\subsection{Nonscalar extreme points}

\begin{proposition} Assume that $\A$ is finite (i.e., $xy=1$ only if $yx=1$) and 
prime (i.e., $xay=0$, for every $a\in\A$, only if $x=0$ or $y=0$). 
If $\Ep$ is a Bures contractive Schwarz channel, and if $\A_1$ denotes the closed unit ball of $\A$,
then the image $\Ep(\A_1)$ does not contain 
any nonscalar extreme points of $\A_1$.
\end{proposition}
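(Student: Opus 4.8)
The plan is to identify the extreme points of $\A_1$ under the standing hypotheses on $\A$, and then to argue that any preimage under $\Ep$ of such an extreme point is forced into the multiplicative domain of $\Ep$, which Corollary \ref{trivial mult.dom} has already shown to be $\mathbb{C}\,1$. For the first part I would appeal to Kadison's description of the extreme points of the closed unit ball of a unital C$^*$-algebra: a contraction $v\in\A$ is an extreme point of $\A_1$ if and only if $(1-vv^*)\A(1-v^*v)=\{0\}$. Primeness of $\A$ turns this equation into the alternative $1-vv^*=0$ or $1-v^*v=0$, and finiteness of $\A$ then promotes either equality to the other (for instance, $v^*v=1$ gives $v\,v^*=1$ upon applying the finiteness condition to $xy=1$ with $x=v^*$ and $y=v$). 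Hence the extreme points of $\A_1$ are precisely the unitaries of $\A$, so a nonscalar extreme point of $\A_1$ is exactly a nonscalar unitary.

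Suppose then, toward a contradiction, that $\Ep(\A_1)$ contains a nonscalar extreme point $w$ of $\A_1$; by the previous step $w$ is a nonscalar unitary, and there is some $x\in\A_1$ with $\Ep(x)=w$. Since $\Ep$ is a Schwarz channel it is unital (Proposition \ref{basic positive facts}), so from $x^*x\le 1$ we get $\Ep(x^*x)\le\Ep(1)=1$, while the Schwarz inequality gives $\Ep(x^*x)\ge\Ep(x)^*\Ep(x)=w^*w=1$. Consequently $\Ep(x^*x)=\Ep(x)^*\Ep(x)$, and the identical computation with $xx^*$ in place of $x^*x$ gives $\Ep(xx^*)=\Ep(x)\Ep(x)^*$. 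Thus $x\in\mathcal S_\Ep$, and because $\Ep$ is a Schwarz map one has $\mathcal S_\Ep=\mathcal M_\Ep$, so $x\in\mathcal M_\Ep$.

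Finally, since $\Ep$ is a Bures contraction, Corollary \ref{trivial mult.dom} gives $\mathcal M_\Ep=\mathbb{C}\,1$, whence $x=\lambda 1$ for some $\lambda\in\mathbb{C}$; applying $\Ep$ and using $\Ep(1)=1$ yields $w=\Ep(x)=\lambda 1$, contradicting the assumption that $w$ is nonscalar. I expect the only step requiring more than formal manipulation to be the first one: recalling Kadison's extreme-point characterisation and combining it with primeness and finiteness to see that the extreme points of $\A_1$ are exactly the unitaries. Once that is available, the remainder is a short application of the Schwarz inequality together with the two facts already in hand -- that Schwarz channels are unital (Proposition \ref{basic positive facts}) and that Bures contractions have trivial multiplicative domain (Corollary \ref{trivial mult.dom}).
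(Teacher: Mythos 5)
Your proposal is correct and follows essentially the same route as the paper: Kadison's characterisation of extreme points of the unit ball, combined with primeness and finiteness to conclude the extreme point is unitary, then the Schwarz inequality together with unitality to force the preimage into the multiplicative domain, and finally Corollary \ref{trivial mult.dom} to reach the contradiction. The only cosmetic difference is that you phrase the contradiction through the preimage being scalar and hence its image scalar, whereas the paper arranges matters so that the nonscalar preimage itself lands in $\mathbb{C}\,1$; the substance is identical.
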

\begin{proof}
Without loss of generality, suppose there exists a nonscalar $a\in \A$ with $\|a\|\leq 1$  such that $\Ep(a)$ is an extreme point of the unit ball of $\A$. 
By Kadison's Theorem \cite{kadison1951}, $\Ep(a)$ is necessarily a partial isometry satisfying $(1-\Ep(a)^{*}\Ep(a))\A(1-\Ep(a)\Ep(a)^*)=\{0\}$. 
Using the quasi-transitivity and finiteness of $\A$, we get $\Ep(a)^{*}\Ep(a)=1=\Ep(a)\Ep(a)^*$. 
Since $\Ep$ is unital and
$aa^*\leq 1$, we have $\Ep(aa^*)\leq 1$ .
Now using the Schwarz inequality, 
\[
1=\Ep(a)\Ep(a)^*\leq \Ep(aa^*)\leq 1,
\]
resulting in $\Ep(aa^*)=\Ep(a)\Ep(a)^*$. Similarly, $\Ep(a^*a)=\Ep(a)^*\Ep(a)$. 
Hence, $a$ belongs to the multiplicative domain of $\Ep$. Since a Bures contractive map has $1$-dimensional multiplicative domain, 
this contradicts the assumption. Hence $\Ep(\A_1)$ can not contain any nonscalar extreme points of the closed unit ball of $\A$.  
\end{proof}

\section{Concluding Remarks}

The results herein confirm the intuition that one might have concerning (Bures) contractions of the density space -- the irreducibility of such
maps is one such example. With an eye toward applications, one may make several other statements, such as the following assertion concerning
error correction.

If $\Ep:\A\rightarrow\A$ is a channel, and if $\mathcal{C}$ is a nonempty susbet (of \emph{codes}) of the density space $\mathcal D_\tau(\A)$, then
$\Ep$ is \emph{correctable on $\mathcal C$} if there exists a channel $\mathcal R$ on $\A$ for which $\mathcal R\circ\Ep(\rho)=\rho$, for every $\rho\in\mathcal C$.
It was observed by Petz in \cite{petz} that the preservation of certain distinguishability measures between density operators is a sufficient condition for correctability on those operators;
it is, in fact, a necessary condition also, as shown by the result below.

 \begin{proposition} Bures contractive channels are not correctable on any set of codes of cardinality larger than $1$.
\end{proposition}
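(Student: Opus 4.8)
The plan is to argue by contradiction using only the nonexpansiveness of channels (Proposition \ref{nonexpansive}) together with the defining strict-contraction property of Bures contractions. Suppose $\Ep:\A\rightarrow\A$ is a Bures contractive channel and that $\Ep$ is correctable on some set $\mathcal C\subseteq\mathcal D_\tau(\A)$ with $|\mathcal C|\geq 2$. Fix two distinct density elements $\sigma,\rho\in\mathcal C$ and let $\mathcal R$ be a channel on $\A$ with $\mathcal R\circ\Ep(\eta)=\eta$ for every $\eta\in\mathcal C$; in particular $\mathcal R(\Ep(\sigma))=\sigma$ and $\mathcal R(\Ep(\rho))=\rho$.

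The key step is to chase the fidelity through the composition. Since $\mathcal R$ is a channel, Proposition \ref{nonexpansive} gives $F_\tau\bigl(\mathcal R(\Ep(\sigma)),\mathcal R(\Ep(\rho))\bigr)\geq F_\tau\bigl(\Ep(\sigma),\Ep(\rho)\bigr)$, and the recovery identity rewrites the left-hand side as $F_\tau(\sigma,\rho)$. On the other hand, since $\Ep$ is a Bures contraction and $\sigma\neq\rho$, one has $F_\tau\bigl(\Ep(\sigma),\Ep(\rho)\bigr)>F_\tau(\sigma,\rho)$. Stringing these together yields
\[
F_\tau(\sigma,\rho)=F_\tau\bigl(\mathcal R(\Ep(\sigma)),\mathcal R(\Ep(\rho))\bigr)\geq F_\tau\bigl(\Ep(\sigma),\Ep(\rho)\bigr)>F_\tau(\sigma,\rho),
\]
which is absurd. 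Hence no recovery channel $\mathcal R$ can exist on a set of codes containing two distinct elements, so $\Ep$ is not correctable on any $\mathcal C$ with $|\mathcal C|>1$.

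There is essentially no obstacle here beyond making sure the two inequalities point the right way: the monotonicity of fidelity under channels (large fidelity $\leftrightarrow$ small Bures distance) must be combined with the strict increase of fidelity under a Bures contraction, and the recovery hypothesis supplies exactly the identification $\mathcal R\circ\Ep|_{\mathcal C}=\mathrm{id}_{\mathcal C}$ needed to close the loop. This is the same mechanism already used to show that bijective Bures contractions have non-positive inverses, specialised to the subset $\mathcal C$ rather than all of $\mathcal D_\tau(\A)$.
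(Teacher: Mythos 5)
Your argument is correct and is essentially the paper's own proof: both run the chain \emph{strict contraction of $\Ep$ on distinct codes}, \emph{nonexpansiveness of the recovery channel $\mathcal R$} (Proposition \ref{nonexpansive}), and the identity $\mathcal R\circ\Ep=\mathrm{id}$ on $\mathcal C$ to reach a contradiction. The only cosmetic difference is that you phrase the inequalities in terms of fidelity while the paper phrases them in terms of $d_B^\tau$, which is equivalent since $d_B^\tau=\sqrt{1-F_\tau}$ is strictly decreasing in the fidelity.
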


\begin{proof} Assume that $\mathcal{C}$ is a nonempty susbet of $\mathcal D_\tau(\A)$ consisting of at least two elements, and that there exists a channel 
$\mathcal R$ on $\A$ for which $\mathcal R\circ\Ep(\rho)=\rho$, for every $\rho\in\mathcal C$.

If $\sigma,\rho\in\mathcal C$ are distinct, then
\[
d_B^\tau(\sigma,\rho)>d_B^\tau\left( \Ep(\sigma),\Ep(\rho)\right) \geq d_B^\tau\left( \mathcal R\circ \Ep(\sigma),\mathcal R\circ\Ep(\rho)\right) =
d_B^\tau(\sigma,\rho),
\]
which is impossible. Thus, $\Ep$ is not correctable on any set $\mathcal{C}$ that has at least two elements.
\end{proof}

\section*{Acknowledgement}

The authors wish to thank the referee for comments regarding this paper and, in particular, for suggesting Corollary \ref{referee suggestion}.



\begin{thebibliography}{99}

\bibitem{akemann--anderson--pedersen1982}
{\scshape Akemann, C.A.; Anderson, J; Pedersen, G.K.}
Triangle inequalities in operator algebras.
{\em Linear and Multilinear Algebra} 
{\bf 11} (1982) 
167--178.
\mrev{0650729} 
\zbl{0485.46029}

\bibitem{Alfsen--Schultz-book1}
{\scshape Alfsen, E.M.; Shultz, F.W.}
State spaces of operator algebras.
{\em Birkh\"auser Boston, Inc., Boston, MA}, 2001.
\mrev{3706942} 
\zbl{0983.46047}

\bibitem{Bengtsson--Zyczkowski-book}
{\scshape Bengtsson, I.; {\.Z}yczkowski, K.}
Geometry of quantum states.
{\em Cambridge University Press, Cambridge}, 2006.
\mrev{2230995} 
\zbl{1155.81002}

\bibitem{bhat}
{\scshape Bhat, B.V.R.}
Linear maps respecting unitary conjugation.
{\em Banach J. Math. Anal.}
{\bf 5} (2011)
1--5.
\mrev{2780863} 
\zbl{1228.46056}

\bibitem{bures1969}
{\scshape Bures, D.}
An extension of {K}akutani's theorem on infinite product measures to
the tensor product of semifinite {$W^{\ast} $}-algebras.
{\em Trans. Amer. Math. Soc.}
{\bf 135} (1969)
199--212.
\mrev{0236719} 
 

\bibitem{choda1970}
{\scshape Choda, M.}
An extremal property of the polar decomposition in von {N}eumann algebras.
{\em Proc. Japan Acad.}
{\bf 46} (1970)
341--344.
\mrev{0650729} 
\zbl{0207.44402}
 
\bibitem{choi1974}
{\scshape Choi, M.-D.}
A {S}chwarz inequality for positive linear maps on {$C^{\ast}$}-algebras.
{\em Illinois J. Math.}
{\bf 18} (1974)
565--574.
\mrev{0355615} 
\zbl{0293.46043}

 \bibitem{choi1980b}
{\scshape Choi, M.-D.}
Some assorted inequalities for positive linear maps on {$C^{\ast}$}-algebras.
 {\em J. Operator Theory}
 {\bf 4} (1980)
 271--285.
\mrev{595415} 
\zbl{0511.46051}

\bibitem{fack1982}
{\scshape Fack, T.}
Sur la notion de valeur caract\'eristique.
{\em J. Operator Theory}
{\bf 7} (1982)
307--333.
\mrev{658616} 
\zbl{0493.46052}

\bibitem{fack-kosaki}
{\scshape Fack, T.; Kosaki, H.}
Generalized {$s$}-numbers of {$\tau$}-measurable operators.
{\em Pacific J. Math.}
{\bf 123} (1986)
269--300.
\mrev{840845} 
\zbl{0617.46063}

\bibitem{farenick1996}
{\scshape Farenick, D.}
Irreducible positive linear maps on operator algebras.
{\em Proc. Amer. Math. Soc.}
{\bf 124} (1996)
3381--3390.
\mrev{1443384} 
\zbl{0864.46034}

\bibitem{farenick--jaques--rahaman2016}
{\scshape Farenick, D.; Jaques, S.; Rahaman, M.}
The fidelity of density operators in an operator-algebraic framework.
{\em J. Math. Phys.}
{\bf 57} (2016)
102202.
\mrev{3564319} 
\zbl{1349.81058}


\bibitem{farenick--manjegani2005}
{\scshape Farenick, D.; Manjegani, S.M.}
Young's inequality in operator algebras.
{\em J. Ramanujan Math. Soc.}
{\bf 20} (2005)
107--124.
\mrev{0650729} 
\zbl{1099.47017}


\bibitem{fuchs--vandegraaf1999}
{\scshape Fuchs, C.A.; van de Graaf, J.}
Cryptographic distinguishability measures for quantum-mechanical states.
{\em IEEE Trans. Inform. Theory}
{\bf 45} (1999)
1216--1227.
\mrev{1686254} 
\zbl{0959.94020}


\bibitem{gardner1979a}
{\scshape Gardner, L.T.}
Linear maps of {$C\sp *$}-algebras preserving the absolute value.
{\em Proc. Amer. Math. Soc.}
{\bf 76} (1979)
271--278.
\mrev{537087} 
\zbl{0425.46042}


\bibitem{groh1981}
{\scshape Groh, U.}
The peripheral point spectrum of {S}chwarz operators on {$C^{\ast}$}-algebras.
{\em Math. Z.}
{\bf 176} (1981)
311--318.
\mrev{610212}
\zbl{0457.47029}

\bibitem{Hayashi-book}
{\scshape Hayashi, M.}
M.~Hayashi.
Quantum information.
{\em Springer-Verlag, Berlin}, 2006.
\mrev{2228302}
\zbl{1195.81031}

\bibitem{hoa--osaka--toan2013}
{\scshape Hoa, D.T.; Osaka, H.; Toan, H.M.}
On generalized {P}owers-{S}t\o rmer's inequality.
{\em Linear Algebra Appl.}
{\bf 438} (2013)
242--249.
\mrev{2993379} 
\zbl{0485.46029}


\bibitem{jian--he--yuan--wang2013}
{\scshape Jian, L.; He, K.; Yuan, Q.; Wang, F.}
On partially trace distance preserving maps and reversible quantum channels.
{\em J. Appl. Math.}
{\bf 5} (2013)
Art. ID 474291.
\mrev{3115288} 
 


\bibitem{josza1994}
{\scshape Jozsa, R.}
Fidelity for mixed quantum states.
{\em J. Modern Opt.}
{\bf 41} (1994)
2315--2323.
\mrev{1312945} 
\zbl{0941.81508}
 


\bibitem{kadison1951}
{\scshape Kadison, R.V.}
Isometries of operator algebras.
{\em Ann. Of Math. (2)} 
{\bf 54} (1951) 
325--338.
\mrev{0043392} 
\zbl{0045.06201}


\bibitem{kribs2003}
{\scshape Kribs, D.W..}
Quantum channels, wavelets, dilations and representations of {$\mathcal O_n$}.
{\em Proc. Edinb. Math. Soc. (2)}
{\bf 46} (2003)
421--433.
\mrev{1998572} 
\zbl{1051.46046}


\bibitem{Paulsen-book}
{\scshape Paulsen, V.}
Completely bounded maps and operator algebras.
 {\em Cambridge University Press, Cambridge}, 2002.
\mrev{1976867} 
\zbl{1029.47003}


\bibitem{petz}
{\scshape Petz, D.}
Sufficiency of channels over von {N}eumann algebras.
{\em Quart. J. Math. Oxford Ser. (2)}
{\bf 39} (1988)
97--108.
\mrev{0929798} 
\zbl{0644.46041}


\bibitem{powers--stormer1970}
{\scshape Powers, R.T.; St{\o}rmer, E.}
Free states of the canonical anticommutation relations.
{\em Comm. Math. Phys.}
{\bf 16} (1970)
1--33.
\mrev{0269230} 
\zbl{0186.28301}


\bibitem{raginsky2002}
{\scshape Raginsky, M.}
Strictly contractive quantum channels and physically realizable quantum computers.
{\em Phys. Rev. A}
{\bf 65} (2002)
032306.

 


\bibitem{Schaefer-tvs-book}
{\scshape Schaefer, H.H.}
Topological vector spaces.
{\em Springer-Verlag, New York-Berlin}, 1971.
\mrev{0342978} 
\zbl{0217.16002}


\bibitem{Stormer-book}
{\scshape St{\o}rmer, E.}
Positive linear maps of operator algebras.
{\em Springer, Heidelberg}, 2013.
\mrev{3012443} 
\zbl{1269.46003}


\bibitem{Takesaki-bookI}
{\scshape Takasaki, M.}
Theory of operator algebras. {I}.
{\em Springer-Verlag, Berlin}, 2002.
 \mrev{1873025} 
\zbl{0990.46034}


\bibitem{uhlmann1976}
{\scshape Uhlmann, A.}
The ``transition probability'' in the state space of a {$\sp*$}-algebra.
{\em Rep. Mathematical Phys.}
{\bf 9}(1976)
273--279.
\mrev{0423089} 
\zbl{0355.46040}


\end{thebibliography}
 \end{document}